\documentclass[onecolumn]{IEEEtran}
\usepackage{amsmath,amsfonts}

\usepackage{array}
\usepackage[caption=false,font=normalsize,labelfont=sf,textfont=sf]{subfig}
\usepackage{textcomp}
\usepackage{stfloats}
\usepackage{url}
\usepackage{verbatim}
\usepackage{graphicx}
\usepackage[noadjust]{cite}
\usepackage{graphicx}
\usepackage{amsthm}
\usepackage[table]{xcolor}
\usepackage{algorithm}
\usepackage{algpseudocode}
\usepackage{tikz}
\usepackage{amssymb}
\usepackage{mathtools}
\usepackage{overpic}
\usepackage{algorithm}
\usepackage{algpseudocode}

\DeclarePairedDelimiter\abs{\lvert}{\rvert}

\DeclarePairedDelimiter\floor{\lfloor}{\rfloor}
\DeclarePairedDelimiter\parenv{\lparen}{\rparen}

\DeclarePairedDelimiter\set{\{}{\}}

\theoremstyle{plain}
\newtheorem{theorem}{Theorem}
\newtheorem{lemma}[theorem]{Lemma}
\newtheorem{conjecture}[theorem]{Conjecture}

\newtheorem{corollary}[theorem]{Corollary}
\newtheorem{definition}[theorem]{Definition}

\newtheorem{remark}[theorem]{Remark}

\renewcommand{\leq}{\leqslant}

\renewcommand{\geq}{\geqslant}

\newcommand{\cA}{\mathcal{A}}

\newcommand{\cC}{\mathcal{C}}

\newcommand{\cH}{\mathcal{H}}

\newcommand{\cL}{\mathcal{L}}
\newcommand{\cM}{\mathcal{M}}
\newcommand{\cN}{\mathcal{N}}
\newcommand{\cO}{\mathcal{O}}
\newcommand{\cP}{\mathcal{P}}

\newcommand{\cS}{\mathcal{S}}

\newcommand{\cU}{\mathcal{U}}

\newcommand{\F}{\mathbb{F}}
\newcommand{\C}{\mathbb{C}}
\newcommand{\N}{\mathbb{N}}
\newcommand{\R}{\mathbb{R}}
\newcommand{\Z}{\mathbb{Z}}
\newcommand{\E}{\mathbb{E}}
\newcommand{\V}{\mathbb{V}}

\newcommand{\rcomment}[1]{\textcolor{red}{#1}}
\newcommand{\bcomment}[1]{\textcolor{blue}{#1}}

\newcommand{\eqdef}{\triangleq}

\newcommand{\rev}[1]{\overleftarrow{#1}}

\DeclareMathOperator{\supp}{supp}
\DeclareMathOperator{\Tr}{Tr}

\DeclareMathOperator{\lcm}{lcm}
\DeclareMathOperator{\ord}{ord}
\DeclareMathOperator{\BCH}{BCH}
\DeclareMathOperator{\Melas}{Melas}
\DeclareMathOperator{\wt}{wt}

\title{On the $b$-Symbol Weight Distribution of Dual Melas Codes and a Poisson Randomness Postulate}
\author{}
\date{June 2026}

\begin{document}

\author{
Gabriel~Sac~Himelfarb,~\IEEEmembership{Graduate Student Member,~IEEE} and Moshe~Schwartz,~\IEEEmembership{Fellow,~IEEE}%

\thanks{Gabriel Sac Himelfarb is with the Department of Electrical and Computer Engineering, McMaster University, Hamilton, ON, L8S 4K1, Canada (e-mail: sachimeg@mcmaster.ca).
}%
\thanks{Moshe Schwartz is with the Department of Electrical and Computer Engineering, McMaster University, Hamilton, ON, L8S 4K1, Canada (e-mail: schwartz.moshe@mcmaster.ca).}%
\thanks{This paper will be submitted in part to ISIT 2027.}%
\thanks{We acknowledge the support of the Natural Sciences and Engineering Research Council of Canada (NSERC), RGPIN-2026-06805 and CGRS D-611819-2026. Cette recherche a été financée par le Conseil de recherches en sciences naturelles et en génie du Canada (CRSNG), RGPIN-2026-06805 et CGRS D-611819-2026. This work was also supported in part by a grant of access to OpenAI models through the ChatGPT for Academic Researchers program. 
}
}

\maketitle

\begin{abstract}
    We prove a new lower bound on the $b$-symbol weight  of codewords in dual Melas codes, which is strictly better than known results. We employ a new strategy that translates a pattern counting problem in linear-feedback shift register sequences to a problem on orbits of binary polynomials under a certain action. We use the same techniques to provide an improved lower bound on the burst-covering radius of Melas codes and a new upper bound on the critical exponent of dual Melas codes. 
    
    We then define a new randomness postulate on (families of) binary sequences, based on the definition of infinite Poisson generic sequences. We conjecture that codewords in the dual Melas code satisfy this postulate, and we show partial progress in establishing this conjecture. We also show how a relaxed version of the problem can be completely resolved, and relate the conjecture to the first part of the paper, giving a justification of why the improved bounds on the $b$-symbol weight were somewhat to be expected. 
\end{abstract}

\begin{IEEEkeywords}
$b$-symbol weight, burst-covering radius, PN sequences, LFSRs, Melas codes, randomness postulate
\end{IEEEkeywords}

\section{Introduction}
\IEEEPARstart{M}{otivated} by applications to high-density data storage, Cassuto and Blaum introduced symbol-pair codes in \cite{CasBla10,CasBla11}. They considered a setting where readings from the channel are overlapping pairs of symbols rather than individual symbols. This model is adequate for scenarios where the read resolution is lower than the write resolution. Cassuto and Blaum's work prompted a fruitful new line of research, including the study of cyclic codes in the symbol-pair read channel \cite{CasLit11,YaaBruSie12}, and the construction of MDS symbol-pair codes \cite{LiGe17,DinGeZha18,LuoEzeLin22,MaLuo22}.

The symbol-pair channel generalizes naturally to $b$-symbol read channels, where $b$ consecutive symbols are observed together in each read, instead of just two. Despite there being numerous works in this area, such as \cite{YaaBruSie16,DinZhaGe18,ZhuShiOzb22}, the $b$-symbol weight distribution is only known for a few codes, e.g., \cite{ZhuShi24,ZhuXiaXiaHell25}. 

In \cite{ShiOzb21}, new bounds on the $b$-symbol weight for cyclic codes were given by a geometric approach. These may be viewed through the lens of pattern frequencies appearing in linear-feedback shift register sequences (LFSRs). Codewords in cyclic codes are LFSR sequences with connection polynomial equal to the generator polynomial of the dual code. Thus, bounds on the number of occurrences of an all-zero window in the LFSR yield bounds on the $b$-symbol weight. For instance, Niederreiter established bounds on LFSR pattern occurrences in \cite{Niede86}. It is true, however, that these bounds are not tight in some relevant cases, such as cyclic codes where the generator polynomial factors as the product of irreducibles of the same degree (see \cite[Section IV.A.]{SacSch26}). 

In this work we will focus in particular on the study of dual Melas codes, or equivalently, LFSR sequences that are obtained as the sum of a PN sequence (pseudo-noise sequence or $m$-sequence) and some shift of its reversal. We will show that if the length of the code is $2^k-1$, then its $k$-symbol weight is lower bounded by $2^k-1-2^{\frac{2k(1+o(1))}{\log_2 k}}$. This is strictly better than the known lower bounds of the order $2^k-\Theta(2^{k/2})$ which can be obtained through the application of the Weil-Carlitz-Uchiyama bound on character sums \cite{SacSch26} or the Hasse-Weil bound \cite{ShiOzb21}. The main strategy consists in translating the problem from the LFSR domain to studying orbits of binary polynomials under a certain action. We can then take advantage of the algebraic structure of the ring of polynomials, in particular, unique factorization, in order to obtain this improved result. We will see how this technique also allows us to give an improved lower bound for Melas codes on a recently introduced parameter, the burst-covering radius~\cite{SacSch26}, and a new upper bound on the critical exponent of dual Melas codes. The connection between LFSRs and polynomial orbits was previously used by the authors to derive upper bounds on the burst-covering radius of cyclic codes \cite{SacSch26}, albeit in the opposite direction, that is, the original characterization of the burst-covering radius in terms of degrees of polynomials in certain orbits was translated into the LFSR realm, where previously known results could be employed.

The second half of this paper presents a new randomness postulate on (families of) binary sequences inspired by the definition of Poisson generic infinite sequences \cite{AlvBecMer23}. Golomb famously introduced his three randomness postulates in \cite{Gol67}, and showed that PN sequences satisfy all three. Notably, PN sequences do not satisfy the Poisson randomness postulate, which prompts the question: can we give any explicit examples of such sequences? We conjecture that indeed, the codewords in the dual Melas codes satisfy this postulate, and show numerical evidence for this. This observation is totally new as far as we are aware. As first steps towards establishing this conjecture, we show the convergence of the first three factorial moments of  relevant random variables to those of the Poisson $\cP(1)$ distribution, when averaging over all codewords of $\Melas(k)^\perp$. Showing convergence for all moments seems like a problem that escapes all currently known techniques and a considerable challenge. We also show how the conjecture is related to the first half of the paper, as it provides an explanation of why a bound like $2^k-2^{k/\log_2 k}$ was indeed to be expected.

The paper is organized in the following way. In Section \ref{sec:prelim} we introduce necessary concepts and notations, including results on the connection between the study of LFSR sequences and the study of certain polynomial orbits. In Section \ref{sec:patternfreqandweight} we prove stronger lower bounds on the $b$-symbol weight distribution of dual Melas codes, as well as estimates on the number of codewords with full $b$-symbol weight. In Section \ref{sec:burstcoveringradius} we give an improved lower bound on the burst-covering radius of Melas codes, as well as a new upper bound on the critical exponent of dual Melas codes. We finish that section with an improved algorithm for computing the burst-covering radius of Melas codes. In Section \ref{sec:poisson} we introduce the definition and intuition behind a new Poisson randomness postulate for binary sequences, we provide numerical evidence that supports the conjecture that codewords in the dual Melas code satisfy this postulate, and prove convergence of the first three factorial moments of certain random variables to those of the Poisson $\cP(1)$ distribution. We also show the connection to Section \ref{sec:patternfreqandweight}, and finally, we prove that under a certain relaxation, it is possible to fully prove convergence in distribution to $\cP(1)$ of the relevant random variables.

\section{Preliminaries and notation}\label{sec:prelim}

This section is dedicated to presenting the definitions, basic notions, and results that are required for the main claims in the sections to follow. We begin with linear-feedback shift register sequences (LFSRs), then introduce an equivalence relation on binary polynomials which will prove to be useful, and establish a relation between both. We also recall concepts from coding theory, including the definitions of Melas and BCH codes, the notions of $b$-symbol weight,  burst-covering radius and the critical exponent of linear codes.

Throughout the paper, let $\F_q$ denote the finite field of size $q$, and $\F_q^*\eqdef \F_q\setminus\set{0}$. Let $\F_q[x]$ denote the set of polynomials over $\F_q$ in the unknown $x$. If $A$ is a set, we use $\# A$ to denote the size of $A$.

\subsection{Linear-feedback shift registers (LFSRs)}
The \emph{trace} map, $\Tr_{\F_{2^k}/\F_2}:\F_{2^k}\rightarrow \F_2$, is defined as
\[\Tr_{\F_{2^k}/\F_2}(\alpha)\eqdef\alpha+\alpha^{2}+\dots+\alpha^{2^{k-1}}.\]
If the domain and codomain are understood from context, we simply write $\Tr(\cdot)$. Given $\alpha\in \F_{2^k}$, we denote by $M_{\alpha}(x)\in \F_2[x]$ the minimal polynomial of $\alpha$ over $\F_2$.
\begin{definition}
Given a polynomial of degree $r$, $g(x)=x^r+\sum_{i=0}^{r-1}m_i x^i\in \F_2[x]$, we define an \emph{LFSR sequence} of order $r$ and connection polynomial $g$ as a sequence $(a_j)_{j\geq 0}$ which satisfies the linear recurrence
\[
a_j = \sum_{i=0}^{r-1}m_i a_{j-r+i}
\]
for all $j\geq r$. The elements $a_0,\dots,a_{r-1}$ are called the initial conditions (or initial elements) of the sequence. 
\end{definition}

The following theorem summarizes results from \cite{GorKlap12} and \cite{Gol67}.

\begin{theorem}\label{thm:LFSRtracecharacterization}
Let $g\in\F_2[x]$. Then the following two claims hold:
       \begin{itemize}
           \item [1)] If $g$ is irreducible over $\F_2$, and $\alpha\in \F_{2^r}$ is a root of $g$, there exists some $\beta\in \F_{2^r}$, determined by the initial conditions $a_0,\dots,a_{r-1}\in\F_2$, such that for all $j\geq 0$,
        \[
        a_j = \Tr_{\F_{2^r}/\F_2}(\beta \alpha^j).
        \]

        \item[2)]  If $g$ factors into distinct irreducible polynomials of the same degree $k$, $g=\prod_{i=1}^e g_i$,  then for all $j\geq 0$,
        \[
        a_j = \Tr_{\F_{2^{k}}/\F_2}\parenv*{\sum_{i=1}^e \gamma_i \alpha_i^j},
        \]
        where $\alpha_i\in \F_{2^{k}}$ is a root of $g_i$ and $\gamma_i\in \F_{2^k}$, for all $1\leq i\leq e$.        
       \end{itemize}
      
\end{theorem}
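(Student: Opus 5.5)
We argue by linear algebra over $\F_2$. The set $V_g$ of all binary sequences satisfying the recurrence with connection polynomial $g$ of degree $r$ is an $\F_2$-vector space. The map $(a_j)_{j\geq 0}\mapsto (a_0,\dots,a_{r-1})$ is a linear isomorphism $V_g\to\F_2^r$, because the recurrence determines every later term from the initial ones. Hence $\# V_g = 2^r$. For both claims, the plan is to exhibit an explicit family of $2^r$ sequences of trace form, indexed by a parameter set of size $2^r$. We then show that each member lies in $V_g$ and that the parametrization is injective. Counting then forces every sequence in $V_g$ to be of the stated form, and the parameter is determined by the initial conditions because the correspondence is a bijection.

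For 1), fix a root $\alpha\in\F_{2^r}$ of the irreducible $g$ and, for $\beta\in\F_{2^r}$, put $s^\beta_j=\Tr(\beta\alpha^j)$. Membership in $V_g$ follows from $\F_2$-linearity of the trace. Write $g(x)=x^r+\sum_{i<r} m_i x^i$. Then
\[\sum_{i=0}^{r-1}m_i s^\beta_{j-r+i}=\Tr\parenv*{\beta\alpha^{j-r}\sum_{i=0}^{r-1} m_i\alpha^i}=\Tr(\beta\alpha^{j-r}\alpha^r)=s^\beta_j,\]
using $g(\alpha)=0$ and $m_i\in\F_2$. The map $\beta\mapsto s^\beta$ is $\F_2$-linear. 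For injectivity, suppose $s^\beta\equiv 0$. Since $g$ is irreducible of degree $r$, the powers $1,\alpha,\dots,\alpha^{r-1}$ form an $\F_2$-basis of $\F_{2^r}$. Therefore $\Tr(\beta\gamma)=0$ for all $\gamma\in\F_{2^r}$. Nondegeneracy of the trace form $(x,y)\mapsto\Tr(xy)$, which holds because the trace is a nonzero map, then gives $\beta=0$. Both spaces have size $2^r$, so the map is a bijection onto $V_g$.

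For 2), the same reasoning shows that each sequence $\Tr_{\F_{2^k}/\F_2}(\gamma_i\alpha_i^j)$ lies in $V_{g_i}\subseteq V_g$, since $g_i\mid g$. The sum over $i$ therefore also lies in $V_g$, and $\sum_i \gamma_i\alpha_i^j$ stays inside the trace by linearity. By part 1) applied to each $g_i$ (degree $k$), the trace sequences attached to $g_i$ exhaust $V_{g_i}$, which has dimension $k$. It remains to show $V_g=V_{g_1}\oplus\dots\oplus V_{g_e}$, which the count $2^{ek}=2^r$ reduces to directness of the sum.

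The main obstacle is this directness. I would handle it with the operator view: let $S$ be the left shift. Then $V_h=\ker h(S)$ for any $h$. Since the $g_i$ are distinct monic irreducibles, they are pairwise coprime. Bézout identities $u g_i+v\prod_{l\neq i}g_l=1$ then yield the kernel decomposition $\ker g(S)=\bigoplus_i\ker g_i(S)$ by the standard primary-decomposition argument. Combining this with part 1) gives the representation with $\alpha_i$ a root of $g_i$ and $\gamma_i\in\F_{2^k}$, where the tuple $(\gamma_i)$ is uniquely determined by the initial conditions.
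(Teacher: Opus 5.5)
Your proof is correct. Note that the paper does not actually prove this statement: it is quoted as a summary of standard results from \cite{GorKlap12} and \cite{Gol67}, so there is no in-paper argument to compare against.

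Your argument is a clean, self-contained proof. It has three ingredients:
\begin{itemize}
\item[1)] identifying $V_g$ with $\F_2^r$ through the initial conditions;
\item[2)] showing that $\beta\mapsto(\Tr(\beta\alpha^j))_{j\geq0}$ is an injective $\F_2$-linear map into $V_g$, using that $1,\alpha,\dots,\alpha^{r-1}$ is a basis and that the trace form is nondegenerate, and then concluding by counting;
\item[3)] for Claim 2), combining $V_h=\ker h(S)$ with the coprime-factor decomposition $\ker g(S)=\bigoplus_i\ker g_i(S)$.
\end{itemize}

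A common alternative in the LFSR literature works over a splitting field instead. Since $g$ is squarefree, invertibility of a Vandermonde system gives the general solution $a_j=\sum_\rho c_\rho\rho^j$, summed over the roots $\rho$ of $g$. The condition $a_j\in\F_2$, i.e.\ $a_j^2=a_j$, together with uniqueness of the coefficients, then forces $c_{\rho^2}=c_\rho^2$. This collapses each Frobenius orbit of roots into a single term $\Tr_{\F_{2^k}/\F_2}(c_\rho\rho^j)$ with $c_\rho\in\F_{2^k}$.

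Your route avoids the Vandermonde and Galois bookkeeping. It also yields, at no extra cost, the uniqueness of $\beta$ and of $(\gamma_1,\dots,\gamma_e)$, which the statement only asserts implicitly through ``determined by the initial conditions.''
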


\subsection{Polynomial orbits}

We now introduce polynomial orbits that will facilitate our analysis of certain LFSRs of interest. For a vector $v=(v_0,v_1,\dots,v_{n-1})\in\F_2^n$, we define the reversal of $v$ as $\rev{v}\eqdef(v_{n-1},\dots,v_1,v_0)$. In a similar fashion, assume $f\in\F_2[x]$ of degree $r$, and write $f(x)=x^r+\sum_{i=0}^{r-1} f_i x^i$. Then the \emph{reciprocal} of $f$ is defined as the polynomial with coefficients in the reverse order of $f$, i.e.,
\[
\rev{f}(x) \eqdef 1 + \sum_{i=0}^{r-1} f_i x^{r-i} = x^{\deg f}f(x^{-1}).
\]
If $f=\rev{f}$ we say it is \emph{self-reciprocal}.

\begin{definition}
    Let $\alpha\in\F_{2^k}$ be primitive, $k\geq 3$. We define the equivalence relation $\sim$ on the set $\set{f\in\F_2[x]:\deg(f)<2k}$, as:
    \[
    f\sim g \Longleftrightarrow \exists j\in \Z  \text{ such that } f\equiv x^j g\pmod{M_{\alpha}M_{\alpha^{-1}}}.
    \]
    We denote the equivalence class of a polynomial $f$ as $\cO_f$ (as it is the orbit of $f$ under the action of multiplication by $x$ modulo $M_{\alpha}M_{\alpha^{-1}}$). We say an orbit $\cO_f$ is \emph{degenerate} if $\alpha$ or $\alpha^{-1}$ is a root of $f$, or equivalently, a root of all of its elements.
\end{definition}

We emphasize that whenever we use the $\sim$ equivalence relation, we assume that $\alpha\in\F_{2^k}$, $k\geq 3$, and $\alpha$ primitive, are understood implicitly from the context. We present a sequence of three lemmas that study properties of the equivalence relation $\sim$. First, we look at the size of the equivalence classes.

\begin{lemma}
\label{lem:orbitsize}
The equivalence relation $\sim$ has:
\begin{itemize}
\item
    One degenerate orbit of size $1$, $\cO_0=\set{0}$.
\item
    Two degenerate orbits of size $2^k-1$ each, containing multiples of $M_\alpha$ and $M_{\alpha^-1}$ respectively.
\item
    $2^k-1$ non-degenerate orbits of size $2^k-1$ each.
\end{itemize}
\end{lemma}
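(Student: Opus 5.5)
Let $m \eqdef M_\alpha M_{\alpha^{-1}}$. Since $\alpha$ is primitive and $k\geq 3$, we have $\alpha\neq\alpha^{-1}$ and $\alpha^{-1}$ is not a conjugate $\alpha^{2^i}$ (otherwise $2^i\equiv -1 \pmod{2^k-1}$, impossible for $k\geq 3$). Hence $M_\alpha\neq M_{\alpha^{-1}}$, and both are irreducible of degree $k$. So $m$ has degree $2k$, and polynomials of degree $<2k$ are exactly the residues of $R\eqdef\F_2[x]/(m)$. By the Chinese remainder theorem,
\[
R\cong \F_2[x]/(M_\alpha)\times \F_2[x]/(M_{\alpha^{-1}})\cong \F_{2^k}\times\F_{2^k}, \qquad f\mapsto (f(\alpha),f(\alpha^{-1})).
\]
Under this isomorphism, multiplication by $x$ becomes multiplication by $(\alpha,\alpha^{-1})$. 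The orbits of $\sim$ are therefore the orbits of the cyclic group $G=\langle(\alpha,\alpha^{-1})\rangle$ acting on $\F_{2^k}\times\F_{2^k}$ by componentwise multiplication. Since $\alpha$ is primitive, $G$ has order exactly $2^k-1$.

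Next we split $\F_{2^k}\times\F_{2^k}$ according to which coordinates vanish. The point $(0,0)$, i.e.\ $f=0$, is fixed and gives the orbit $\cO_0$. The points $(0,v)$ with $v\neq 0$ are exactly the nonzero multiples of $M_\alpha$. The group acts on them by $v\mapsto \alpha^{-j}v$, which is simply transitive on $\F_{2^k}^*$, so they form one orbit of size $2^k-1$. Symmetrically, the points $(u,0)$ with $u\neq0$ (multiples of $M_{\alpha^{-1}}$) form one orbit of size $2^k-1$. These three orbits are exactly the degenerate ones, since degenerate means $f(\alpha)=0$ or $f(\alpha^{-1})=0$.

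The remaining points are $(u,v)$ with $u,v\neq 0$; there are $(2^k-1)^2$ of them. The stabilizer of such a point is trivial, because $\alpha^j u=u$ forces $\alpha^j=1$ and hence $(2^k-1)\mid j$. So every orbit here has size $|G|=2^k-1$, and there are $(2^k-1)^2/(2^k-1)=2^k-1$ non-degenerate orbits.

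The only step that needs care is showing that $M_\alpha$ and $M_{\alpha^{-1}}$ are distinct, which is what makes the CRT decomposition valid. This reduces to ruling out $2^i\equiv -1\pmod{2^k-1}$ for $0\leq i<k$. For $k\geq 3$, the residues $2^i$ with $0\leq i<k$ all lie in $[1,2^{k-1}]$, while $-1\equiv 2^k-2>2^{k-1}$, so no such $i$ exists.

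As a final check, the orbit sizes add up: $1+2(2^k-1)+(2^k-1)^2=2^{2k}$, which is the total number of polynomials of degree less than $2k$.
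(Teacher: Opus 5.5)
Your proof is correct. It uses the same facts as the paper's proof: orbit lengths are governed by the orders of $\alpha$ and $\alpha^{-1}$, and degeneracy is decided by which irreducible factor of $M_\alpha M_{\alpha^{-1}}$ divides $f$. The packaging is different.

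The paper stays inside $\F_2[x]$. It computes the orbit size of a nonzero $f$ as the least $j>0$ with $M_\alpha M_{\alpha^{-1}}\mid f(x)(1+x^j)$. It reduces this to $M_\alpha M_{\alpha^{-1}}\mid 1+x^j$ in the non-degenerate case, and to divisibility of $1+x^j$ by a single factor in the degenerate case. It then counts non-degenerate orbits by subtraction, $\frac{2^{2k}-1}{2^k-1}-2=2^k-1$.

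You instead pass through the CRT isomorphism $f\mapsto(f(\alpha),f(\alpha^{-1}))$. This turns $\sim$ into the action of a cyclic group of order $2^k-1$ on $\F_{2^k}\times\F_{2^k}$. You then read off all three orbit types from which coordinates vanish, and count the non-degenerate orbits directly via trivial stabilizers.

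The paper's version is shorter and needs no change of setting. Yours has two advantages:
\begin{itemize}
\item It makes explicit the fact $M_\alpha\neq M_{\alpha^{-1}}$, i.e., that $\alpha^{-1}$ is not a conjugate of $\alpha$, which is where $k\geq 3$ enters. Both arguments depend on this, but the paper's proof uses it without comment.
\item In your coordinates the product $f(\alpha)f(\alpha^{-1})$ is visibly a complete invariant of the non-degenerate orbits. That is exactly Lemma~\ref{lemma:equiv}, and it is the same syndrome picture the paper later uses for Algorithm~\ref{alg:covering}.
\end{itemize}
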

\begin{IEEEproof}
Trivially $\cO_0=\set{0}$, and it is the only orbit of size $1$. In all other cases, the size of each orbit is the minimal $j>0$ such that $f\equiv x^j f \pmod{M_{\alpha}M_{\alpha^{-1}}}$, that is, such that $f(x)\cdot(1+x^j)\equiv 0 \pmod {M_{\alpha}M_{\alpha^{-1}}}$. If the orbit is non-degenerate, we deduce that $M_{\alpha}M_{\alpha^{-1}}|x^j+1$, and if it is degenerate, then either $M_{\alpha}|x^j+1$ or $M_{\alpha^{-1}}|x^j+1$. In any case, since $\alpha$ and $\alpha^{-1}$ have order $2^k-1$, we deduce that the minimal $j$ is $2^k-1$, and thus $\#\cO_f=2^k-1$ for every $f$.
    
We further deduce that there are exactly two non-zero degenerate orbits: one contains all $2^k-1$ non-zero polynomials of degree $<2k$ which are multiples of $M_\alpha$, while the other contains the multiples of $M_{\alpha^{-1}}$.

Finally, the number of non-zero orbits is  $\frac{2^{2k}-1}{2^k-1}=2^k+1$, and the number of non-degenerate orbits is $2^k-1$.
\end{IEEEproof}

Next, we present a simple test to see whether two polynomials are equivalent in a non-degenerate orbit.

\begin{lemma}
\label{lemma:equiv}
Let $\alpha\in\F_{2^k}$ be primitive, $k\geq 3$. If $\cO_f$ is non-degenerate, then $f\sim g$ if and only if 
    \[
f(\alpha)f(\alpha^{-1})=g(\alpha)g(\alpha^{-1}).
    \]
\end{lemma}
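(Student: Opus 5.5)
We prove the two directions separately, using the Chinese Remainder Theorem identification
\[
\F_2[x]/(M_\alpha M_{\alpha^{-1}}) \cong \F_2[x]/(M_\alpha)\times \F_2[x]/(M_{\alpha^{-1}})\cong \F_{2^k}\times\F_{2^k},
\]
given by $h\mapsto (h(\alpha),h(\alpha^{-1}))$. Since $k\geq 3$, $\alpha$ is not a root of $M_{\alpha^{-1}}$: otherwise $\alpha^{-1}=\alpha^{2^i}$ for some $i$, so $\alpha^{2^i+1}=1$ and $2^k-1\mid 2^i+1$ with $i<k$, which is impossible for $k\geq 3$. Hence the two minimal polynomials are distinct irreducibles, the factors are coprime, and the map is a ring isomorphism. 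Because polynomials of degree $<2k$ are exactly the residues modulo $M_\alpha M_{\alpha^{-1}}$ (a polynomial of degree $2k$), every $f$ in the domain of $\sim$ is determined by the pair $(f(\alpha),f(\alpha^{-1}))$.

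For the forward direction, suppose $f\equiv x^j g$. Then $f(\alpha)=\alpha^j g(\alpha)$ and $f(\alpha^{-1})=\alpha^{-j}g(\alpha^{-1})$. Multiplying these gives $f(\alpha)f(\alpha^{-1})=g(\alpha)g(\alpha^{-1})$.

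For the converse, suppose $f(\alpha)f(\alpha^{-1})=g(\alpha)g(\alpha^{-1})=:c$. Since $\cO_f$ is non-degenerate, both $f(\alpha)$ and $f(\alpha^{-1})$ are nonzero, so $c\neq 0$. It follows that $g(\alpha)\neq0$ and $g(\alpha^{-1})\neq 0$ as well. Because $\alpha$ is primitive, there is $j$ with $\alpha^j=f(\alpha)/g(\alpha)$. Then
\[
\alpha^{-j}g(\alpha^{-1}) = \frac{g(\alpha)g(\alpha^{-1})}{f(\alpha)} = \frac{c}{f(\alpha)} = f(\alpha^{-1}).
\]
Thus $x^j g$ and $f$ have the same image under the CRT map, so $f\equiv x^j g \pmod{M_\alpha M_{\alpha^{-1}}}$, that is, $f\sim g$.

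The only delicate point is the coprimality of $M_\alpha$ and $M_{\alpha^{-1}}$, which is where the hypothesis $k\geq3$ is used. Once the CRT isomorphism is in place, the rest is a direct computation.
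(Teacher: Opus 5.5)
Your proof is correct and follows essentially the same route as the paper. The forward direction multiplies the two evaluations. The converse picks $j$ with $\alpha^j=f(\alpha)/g(\alpha)$ and concludes that $f-x^jg$ vanishes at both $\alpha$ and $\alpha^{-1}$. Your explicit check that $M_\alpha\neq M_{\alpha^{-1}}$ for $k\geq 3$ justifies the CRT step, equivalently the divisibility $M_\alpha M_{\alpha^{-1}}\mid f-x^jg$, which the paper's proof uses without stating it.
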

\begin{IEEEproof}
If $f\sim g$, by definition there exists $j$ such that $f(\alpha)=\alpha^j g(\alpha)$ and $f(\alpha^{-1})=\alpha^{-j} g(\alpha^{-1})$. Multiplying the two equations we get $f(\alpha)f(\alpha^{-1})=g(\alpha)g(\alpha^{-1})$.

For the converse, since $\cO_f$ is non-degenerate, $0\neq f(\alpha)f(\alpha^{-1})=g(\alpha)g(\alpha^{-1})$. Note that this also implies $\cO_g$ is non-degenerate. Since $\alpha$ is primitive, we have for some $j$
\[
\frac{f(\alpha)}{g(\alpha)}=\frac{g(\alpha^{-1})}{f(\alpha^{-1})} = \alpha^j,
\]
from which we extract
\begin{align*}
f(\alpha)-\alpha^j g(\alpha) & = 0, &
f(\alpha^{-1})-\alpha^{-j} g(\alpha^{-1}) &= 0.
\end{align*}
Thus, $M_\alpha M_{\alpha^{-1}} | f- x^j g$.
\end{IEEEproof}



\begin{lemma}\label{lemma:orbitequivdef}
Let $\alpha\in\F_{2^k}$ be primitive, $k\geq 3$, and $f,g\in\F_2[x]$. Write $f=x^a P$, $g=x^b Q$, such that $P,Q\in\F_2[x]$ and $x\nmid P,Q$. Then the following claims hold:
    \begin{itemize}
       \item [1)] If $\deg(P),\deg(Q)<k$, then
    \[
    f(\alpha)f(\alpha^{-1})=g(\alpha)g(\alpha^{-1}) \Longleftrightarrow P(x)\rev{P}(x)=Q(x)\rev{Q}(x),
    \]
    and $\deg(P)=\deg(Q)$.
    \item[2)] If $\deg (Q)\leq \deg(P) \leq k+s$ for some $s\geq 0$, then $f(\alpha)f(\alpha^{-1})=g(\alpha)g(\alpha^{-1})$ if and only if $P$ and $Q$ satisfy one of the following equations:
    \[
    P(x)\rev{P}(x)=x^{\deg (P)-\deg (Q)} Q(x)\rev{Q}(x)+ M_\alpha(x)M_{\alpha^{-1}}(x)\cdot R(x),
    \]
    where $R(x)\in\F_2[x]$ has degree $\leq 2s$ and $x^{2(\deg P-k)}R(x^{-1})=R(x)$. Notice that there are at most $2^{s+1}$ possible such choices for $R$.
    
    \item [3)] If $\deg(f),\deg(g)<k$ then
    \[
    f(\alpha)g(\alpha^{-1})=f(\alpha^{-1})g(\alpha) \Longleftrightarrow f(x)g(x^{-1})=f(x^{-1})g(x).
    \]
    \end{itemize}
\end{lemma}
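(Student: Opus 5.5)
The plan is to reduce every claim to one observation: a self-reciprocal polynomial that vanishes at $\alpha$ also vanishes at $\alpha^{-1}$, and is therefore divisible by $M_\alpha M_{\alpha^{-1}}$. Note first that $M_\alpha\neq M_{\alpha^{-1}}$ when $\alpha$ is primitive and $k\geq 3$. Indeed, $\alpha^{-1}=\alpha^{2^i}$ would force $2^i\equiv -1\pmod{2^k-1}$, which is impossible for $0\leq i<k$ once $k\geq 3$. So $M_\alpha M_{\alpha^{-1}}$ is a product of two distinct irreducibles. It has degree $2k$ and is self-reciprocal, since $\rev{M_\alpha}=M_{\alpha^{-1}}$. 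Hence a polynomial vanishing at both $\alpha$ and $\alpha^{-1}$ is divisible by it.

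For parts 1) and 2), I first remove the powers of $x$. Writing $d_P=\deg P$, we have $f(\alpha)f(\alpha^{-1})=P(\alpha)P(\alpha^{-1})=\alpha^{-d_P}P(\alpha)\rev{P}(\alpha)$, using $P(\alpha^{-1})=\alpha^{-d_P}\rev{P}(\alpha)$. Set $T_P=P\rev{P}$ and define $T_Q$ analogously. Then $T_P$ is self-reciprocal of degree $2d_P$, and $T_P(0)=1$ because $x\nmid P$. The condition $f(\alpha)f(\alpha^{-1})=g(\alpha)g(\alpha^{-1})$ becomes $H(\alpha)=0$, where
\[
H(x)=T_P(x)+x^{d_P-d_Q}T_Q(x).
\]
A direct check shows $x^{2d_P}H(x^{-1})=H(x)$, so $H$ is self-reciprocal with respect to degree $2d_P$. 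By the observation above, $H(\alpha)=0$ holds if and only if $M_\alpha M_{\alpha^{-1}}\mid H$. Write $H=M_\alpha M_{\alpha^{-1}}R$. Then $\deg R\leq 2d_P-2k\leq 2s$. Since both $H$ and $M_\alpha M_{\alpha^{-1}}$ are self-reciprocal, $R$ satisfies $x^{2(d_P-k)}R(x^{-1})=R(x)$. Such an $R$ is determined by its coefficients of index $0,\dots,d_P-k\leq s$, which gives at most $2^{s+1}$ choices. This proves 2).

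For 1), we have $\deg H\leq 2d_P<2k$, so divisibility forces $H=0$, i.e.\ $T_P=x^{d_P-d_Q}T_Q$. Comparing constant terms, both equal to $1$, gives $d_P=d_Q$ and hence $T_P=T_Q$. Conversely, if $T_P=T_Q$, then comparing degrees gives $d_P=d_Q$, so $H=0$. The equivalence in 1) follows.

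For 3), take
\[
G(x)=x^{k-1}\bigl(f(x)g(x^{-1})+f(x^{-1})g(x)\bigr).
\]
This is a genuine polynomial of degree at most $2k-2$, because $\deg f,\deg g<k$. It is invariant under $x\mapsto x^{-1}$ combined with multiplication by $x^{2k-2}$, since the two summands swap. The hypothesis says exactly that $G(\alpha)=0$. By the self-reciprocity observation, $G(\alpha^{-1})=0$ as well, so $M_\alpha M_{\alpha^{-1}}\mid G$. Degree then forces $G=0$. The reverse direction is immediate by evaluating at $\alpha$. The only delicate points are the degree and reciprocity bookkeeping for $H$ and $R$, and confirming that $M_\alpha\neq M_{\alpha^{-1}}$.
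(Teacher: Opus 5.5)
Your proof is correct and follows the paper's route: your $H=T_P+x^{d_P-d_Q}T_Q$ is exactly the paper's $h(x)=x^{\deg P}\bigl(P(x)P(x^{-1})+Q(x)Q(x^{-1})\bigr)$, which vanishes at $\alpha$ and $\alpha^{-1}$, is therefore a multiple of $M_\alpha M_{\alpha^{-1}}$, and is then forced to zero or bounded by degree, with part 3) handled the same way (your fixed factor $x^{k-1}$ just replaces the paper's WLOG $\deg f\geq\deg g$). The only thing left implicit is the harmless WLOG $d_P\geq d_Q$ in part 1), which is needed for $H$ to be a polynomial.
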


\begin{IEEEproof}
%
    \textbf{Claim 1):} Suppose without loss of generality that $\deg (P)\geq \deg (Q)$. Consider the polynomial
\begin{equation}
\label{eq:defh}
h(x)\eqdef x^{\deg (P)}(P(x)P(x^{-1})+Q(x)Q(x^{-1})).
\end{equation}
If $f(\alpha)f(\alpha^{-1})=g(\alpha)g(\alpha^{-1})$ then $h(\alpha)=h(\alpha^{-1})=0$. Thus, $M_\alpha M_{\alpha^{-1}}|h$. Notice that $\deg (h) \leq 2\deg (P)<2k$, so $h=0$. This implies that $P(x)P(x^{-1})=Q(x)Q(x^{-1})$, hence $\deg (P)=\deg (Q)$. Thus, $h(x)=P(x)\rev{P}(x)+Q(x)\rev{Q}(x)=0$. 
    
On the other hand, if $P(x)\rev{P}(x)=Q(x)\rev{Q}(x)$, then necessarily $\deg (P)=\deg (Q)$, and so $P(\alpha)\rev{P}(\alpha)= Q(\alpha)\rev{Q}(\alpha)$ implies $f(\alpha)f(\alpha^{-1})=g(\alpha)g(\alpha^{-1})$.

\textbf{Claim 2):} Let $h$ be as in~\eqref{eq:defh}. If $f(\alpha)f(\alpha^{-1})=g(\alpha)g(\alpha^{-1})$ we deduce in the same way that $M_\alpha M_{\alpha^{-1}}|h$. Since $\deg (h)\leq 2\deg (P) \leq 2(k+s)$, we have $h = M_\alpha M_{\alpha^{-1}}\cdot R(x)$ for some $R\in\F_2[x]$ with $\deg (R) \leq 2s$. Then 
\[P\rev{P} = x^{\deg (P)-\deg (Q)}Q\rev{Q}+M_\alpha M_{\alpha^{-1}}R.\] 

Evaluating the above equation in $x^{-1}$ and using the fact that $M_\alpha M_{\alpha^{-1}}$, $P\rev{P}$ and $Q\rev{Q}$ are all self-reciprocal, we obtain the claimed identity on $R$.

For the converse, if $P$ and $Q$ satisfy one of the equations for some $R$, evaluating at $\alpha$ yields $P(\alpha)P(\alpha^{-1})=Q(\alpha)Q(\alpha^{-1})$, which implies $f(\alpha)f(\alpha^{-1})=g(\alpha)g(\alpha^{-1})$.

\textbf{Claim 3):} Similarly to 1), suppose without loss of generality that $\deg (f)\geq \deg (g)$, and consider 
\[h(x)\eqdef x^{\deg (f)}(f(x)g(x^{-1})+f(x^{-1})g(x)).\] 
The rest of the argument is analogous.
\end{IEEEproof}

We will also require the following technical counting lemma, whose proof can be found in the appendix.

\begin{lemma}\label{lemma:thirdmomentcounting}
The following holds:
    \begin{itemize}
        \item [1)]\[
    \#\set*{(f,g)\in (\F_2[x]-\set{0})^2 : \deg(f),\deg(g)<k, \frac{g(\alpha)}{g(\alpha^{-1})}=\frac{f(\alpha)}{f(\alpha^{-1})}} = \begin{cases}
        (3k-3)2^{k-2}+1 & k \text{ odd,}\\
        (3k-4)2^{k-2}+1 & k \text{ even.}
    \end{cases}
    \]

    \item[2)]\[
    \#\set*{(f,g)\in (\F_2[x]-\set{0})^2 : \deg(f),\deg(g)<k, g(\alpha)g(\alpha^{-1})=f(\alpha)f(\alpha^{-1})} = (k+1)2^{k-1}-1.
    \]
    \end{itemize}
    
\end{lemma}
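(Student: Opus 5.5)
The plan is to strip off powers of $x$, use unique factorization in $\F_2[x]$ to turn each condition into a statement about reciprocal pairs of irreducible factors, and then count with a generating function. Throughout, write $f=x^aP$ and $g=x^bQ$ with $x\nmid P,Q$. The constraint $\deg f,\deg g<k$ means that a polynomial $P$ of degree $d$ admits exactly $k-d$ admissible shifts $a\in\set{0,\dots,k-1-d}$.

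\textbf{Part 2).} Since $\alpha^a\alpha^{-a}=1$, we have $f(\alpha)f(\alpha^{-1})=P(\alpha)P(\alpha^{-1})$, and similarly for $g$. By Lemma~\ref{lemma:orbitequivdef}, claim 1), the condition is therefore equivalent to $P\rev{P}=Q\rev{Q}$, which forces $\deg P=\deg Q$. The count is thus
\[
\sum_{d=0}^{k-1}(k-d)^2\,T_d ,
\]
where $T_d$ is the number of pairs $(P,Q)$ of degree $d$, not divisible by $x$, with $P\rev{P}=Q\rev{Q}$.

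To compute $T_d$, split the irreducibles $\pi\neq x$ into self-reciprocal ones and unordered reciprocal pairs $\set{\pi,\rev{\pi}}$ with $\pi\neq\rev\pi$. By unique factorization, $P\rev P=Q\rev Q$ holds if and only if two things happen. First, every self-reciprocal $\pi$ occurs with the same exponent in $P$ and $Q$. Second, for each reciprocal pair the exponent sums agree, i.e.\ $e_\pi(P)+e_{\rev\pi}(P)=e_\pi(Q)+e_{\rev\pi}(Q)=n$. For a fixed total $n$ there are $(n+1)^2$ ways to choose these exponents. Hence
\[
\sum_d T_d u^d=\prod_{\pi=\rev\pi}\frac1{1-u^{\deg\pi}}\prod_{\set{\pi,\rev\pi}}\frac{1+u^{2\deg\pi}}{(1-u^{2\deg\pi})^3}.
\]
I will rewrite this using $Z(u)=\prod_{\pi\neq x}(1-u^{\deg\pi})^{-1}=\frac{1-u}{1-2u}$, together with the analogous product taken only over self-reciprocal irreducibles. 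That product is known in closed form: self-reciprocal polynomials of degree $2m$ correspond to polynomials of degree $m$ via $x+x^{-1}$.

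The first product for $T_d$ is a product over self-reciprocal irreducibles. Each factor in the second product should be rewritten as $(1-t)^{-2}\cdot\frac{1-t^2}{(1-t)^2}$ with $t=u^{\deg\pi}$ evaluated appropriately. The aim is to express everything through $Z(u)$, $Z(u^2)$ and the self-reciprocal zeta function. The weight $(k-d)^2$ is handled by differentiating, or equivalently by multiplying the series by $(1+u)/(1-u)^3$ and extracting the coefficient of $u^{k-1}$. Finally, subtract the $d=0$ term, which comes from $P=Q=1$; I expect this to be where the $-1$ in $(k+1)2^{k-1}-1$ arises. The answer grows only like $k2^k$, which suggests the sum is dominated by the ``trivial'' solutions $Q\in\set{P,\rev P}$. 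This gives a sanity check on the coefficient extraction.

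\textbf{Part 1).} The condition is $f(\alpha)g(\alpha^{-1})=f(\alpha^{-1})g(\alpha)$, and by Lemma~\ref{lemma:orbitequivdef}, claim 3), this is equivalent to $f(x)g(x^{-1})=f(x^{-1})g(x)$. Substituting $f=x^aP$, $g=x^bQ$ and clearing denominators gives $x^{2(a-b)}P\rev Q\,x^{\deg Q-\deg P}=\rev P\,Q$. Comparing the $x$-adic valuations (none of $P,Q,\rev P,\rev Q$ is divisible by $x$) yields two conditions: $P\rev Q=\rev P Q$ and $2(a-b)=\deg P-\deg Q$. The second condition forces $\deg P\equiv\deg Q\pmod 2$ and fixes $a-b$. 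The number of admissible shift pairs is then $\#\set{(a,b):a-b=(\deg P-\deg Q)/2,\ a\le k-1-\deg P,\ b\le k-1-\deg Q}$, a linear function of $k$ and of $\deg P+\deg Q$. By unique factorization, $P/\rev P=Q/\rev Q$ says that for every reciprocal pair $\set{\pi,\rev\pi}$ the exponent differences $e_\pi-e_{\rev\pi}$ agree in $P$ and $Q$, while self-reciprocal factors are unconstrained. This again gives a product generating function, now in two variables $u,v$ marking $\deg P$ and $\deg Q$. A reciprocal pair contributes $\sum_{m\in\Z}\sum_{i,j\ge0}u^{(2i+|m|)\deg\pi}v^{(2j+|m|)\deg\pi}$, and each self-reciprocal irreducible contributes $(1-u^{\deg\pi})^{-1}(1-v^{\deg\pi})^{-1}$. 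I will extract the weighted coefficient, restricting to $\deg P\equiv\deg Q\pmod 2$ by averaging over $v\mapsto\pm v$. This parity restriction is what should produce the odd/even split in the formula.

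The main obstacle is purely the coefficient extraction. The generating functions must be collapsed into rational functions of $u$ built from $Z(u)$, $Z(u^2)$ and the self-reciprocal zeta function, and the parity averaging in Part 1) makes this messier. Before doing the general algebra, I would first verify both formulas numerically for small $k$, say $k=3,4,5$, by direct enumeration.
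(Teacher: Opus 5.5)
Your reductions are sound in outline ($f=x^aP$, $g=x^bQ$, then unique factorization over reciprocal pairs). But the proposal stops exactly where the content of the lemma begins, and the route you sketch for the coefficient extraction would not close as stated.

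\textbf{Part 2.} The missing idea is how to get rid of the product over self-reciprocal irreducibles, which is the paper's $\cH(u)$. That product is \emph{not} in closed form. From the two identities below one gets $\cH(u)^2(1-u)(1-2u^2)=(1+u)\cH(u^2)$, so $\cH$ has a square-root branch point at $u=1/\sqrt2$. What the $x+x^{-1}$ correspondence counts is \emph{all} self-reciprocal polynomials. These also contain products $\pi\rev\pi$, so it gives $\cH(u)\cU(u^2)=\sum_t 2^{\lfloor t/2\rfloor}u^t=\frac{1+u}{1-2u^2}$. Combining this with $\cH(u)\cU(u)^2=\frac{1-u}{1-2u}$ eliminates $\cH$. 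You also need the correct pair factor $\frac{1+u^{\deg\pi}}{(1-u^{\deg\pi})^3}$, not $u^{2\deg\pi}$. Then
\[
\sum_d T_du^d=\frac{\cH(u)\cU(u)^4}{\cU(u^2)}=\frac{(1-u)^2(1-2u^2)}{(1-2u)^2(1+u)},
\]
and $[u^{k-1}]\frac{1-2u^2}{(1-2u)^2(1-u)}=\bigl(1+(k-1)2^k\bigr)-2\bigl(1+(k-3)2^{k-2}\bigr)=(k+1)2^{k-1}-1$. Nothing should be subtracted. The $d=0$ term ($f=x^a$, $g=x^b$) is legitimately counted, and the $-1$ comes from the partial fractions.

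\textbf{Part 1.} There is a sign error. Since $f(x)/f(x^{-1})=x^{2a+\deg P}P/\rev P$, the conditions are $P\rev Q=\rev P Q$ and $2a+\deg P=2b+\deg Q$. Your relation wrongly excludes, e.g., $f=x$, $g=1+x^2$. With the correct relation, the number of admissible $(a,b)$ is $k-\max(\deg P,\deg Q)$ when $\deg P\equiv\deg Q\pmod 2$ and $\max(\deg P,\deg Q)<k$. This is not linear in $\deg P+\deg Q$, so a multiplicative two-variable series plus parity averaging does not produce it. With your sign, $k=3$ would give $9$ instead of $13$.

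The approach can be repaired. The same two identities make the pair series rational:
\[
\frac{(1+u)(1+v)}{(1-2u^2)(1-2v^2)}\cdot\frac{(1-uv)(1-2u^2v^2)}{(1-2uv)(1+uv)},
\]
and the $\max$ can be handled by summing over $N=2a+\deg P$. But none of this is carried out, and it is the whole claim.

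\textbf{Comparison with the paper.} The paper factors out $h=\gcd(f,g)$ rather than powers of $x$, which avoids both issues.
\begin{itemize}
\item In Part 1, coprimality forces $f/h$ and $g/h$ to be palindromic of a common even width $2r$. The $Q$-transform turns them into coprime $(A,B)$ with $\max\deg=r$, of which there are $3\cdot 2^{2r-1}$ for $r\ge1$. The weight is simply the $2^{k-2r}-1$ choices of $h$.
\item In Part 2, the count reduces to pairs $(F,\rev F)$, up to powers of $x$, with $\gcd(F,\rev F)=1$. These are counted through the unique decomposition of a polynomial into a self-reciprocal part and a reciprocal-free part. That is precisely the self-reciprocal count your plan is missing.
\end{itemize}
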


\subsection{Connection between LFSRs and polynomial orbits}

We now establish a connection between the polynomial orbits defined above, and LFSR sequences. To that end, we need to introduce Galois-mode LFSRs.

\begin{definition}
Given a polynomial of degree $r$, $g(x)=\sum_{i=0}^{r-1}m_i x^i +x^r\in \F_2[x]$, we define the \emph{Galois-mode LFSR} of order $r$ and connection polynomial $g$, as a sequence generator with states of the form $(f_0,f_1,\dots,f_{r-1})\in \F_2^r$, and whose state transition is given by:
\[
(f_0,f_1, \dots, f_{r-1}) \longrightarrow (0,f_0,\dots,f_{r-2}) + f_{r-1}\cdot(m_0,\dots,m_{r-1}).
\]
The output of the LFSR is the sequence of elements $f_{r-1}$ from each state.
\end{definition}

\begin{theorem}[\cite{SacSch26,GorKlap12}]\label{thm:galois}
Given a Galois-mode LFSR of order $r$ with a connection polynomial $g\in\F_2[x]$,
\begin{itemize}
\item[1)] If we identify a state $(f_0,\dots,f_{r-1})$ with the polynomial $f(x)=\sum_{i=0}^{r-1} f_i x^i$, then the sequence of states corresponds to the sequence of polynomials $(x^jf \pmod g)_{j\geq 0}$, where $f$ is the initial state (initial load).

        \item [2)] The output $(a_j)_{j\geq 0}$ satisfies the linear recurrence
        \[
        a_j = m_{r-1}a_{j-1}+m_{r-2}a_{j-2}+\dots+m_0a_{j-r},
        \]
        for every $j\geq r$. 
    \end{itemize}
\end{theorem}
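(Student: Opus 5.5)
The plan is to prove the two claims of Theorem~\ref{thm:galois} separately, both by unwinding the Galois-mode state transition.

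\textbf{Claim 1).} We argue by induction on $j$. It suffices to show that one transition maps the state polynomial $f$ to $xf \bmod g$. Write $f=\sum_{i=0}^{r-1} f_i x^i$. Then
\[
xf=\sum_{i=0}^{r-2} f_i x^{i+1} + f_{r-1}x^r .
\]
Since $x^r\equiv \sum_{i=0}^{r-1} m_i x^i \pmod g$ (we work over $\F_2$, so signs are irrelevant), reducing gives
\[
xf \bmod g=\sum_{i=0}^{r-2} f_i x^{i+1} + f_{r-1}\sum_{i=0}^{r-1} m_i x^i .
\]
This has degree $<r$, and its coefficient vector is exactly $(0,f_0,\dots,f_{r-2})+f_{r-1}(m_0,\dots,m_{r-1})$. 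Iterating, the $j$-th state is $x^jf \bmod g$.

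\textbf{Claim 2).} The output $a_j$ is the coefficient of $x^{r-1}$ in the state $s_j \eqdef x^jf \bmod g$. Let $L:\F_2[x]/(g)\to\F_2$ be the linear functional that extracts the coefficient of $x^{r-1}$ from the reduced representative. Then
\[
a_j=L(x^j f),
\]
with all arithmetic in $\F_2[x]/(g)$. Since $g(x)=0$ in this quotient ring,
\[
x^{j}f = x^{j-r}x^r f = \sum_{i=0}^{r-1} m_i\, x^{j-r+i} f \quad\text{in } \F_2[x]/(g), \qquad j\geq r.
\]
Applying the linear map $L$ to both sides gives
\[
a_j=\sum_{i=0}^{r-1} m_i a_{j-r+i}=m_{r-1}a_{j-1}+\dots+m_0a_{j-r},
\]
which is the claimed recurrence.

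The only point needing care is that $L$ is well defined and linear on the quotient ring, which holds because reduction mod $g$ is linear. Given that, this step is routine; there is no real obstacle. The recurrence also matches the Fibonacci-mode definition given earlier, so Theorem~\ref{thm:LFSRtracecharacterization} can be applied to the Galois-mode output.
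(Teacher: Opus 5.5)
Your proof is correct. The paper does not prove this statement itself (it is quoted from \cite{SacSch26,GorKlap12}). Your argument is the standard one, and it is the polynomial form of the companion-matrix picture in Remark~\ref{rem:LFSRpolycorrespondence}: there the transition is $f\mapsto Af$, where $A$ represents multiplication by $x$ on $\F_2[x]/(g)$, and since $g(A)=0$, applying $e$ to $A^{j}f=\sum_{i=0}^{r-1}m_iA^{j-r+i}f$ gives the recurrence, which is exactly your step with the functional $L$.
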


\begin{remark}\label{rem:LFSRpolycorrespondence}
In the setting of Theorem~\ref{thm:galois}, define the companion matrix 
    \[
    A \eqdef \begin{pmatrix}
        0 & 0 & \dots & 0 & m_0 \\
        1 & 0 & \dots & 0 & m_1 \\
        \vdots &  \ddots & \ddots & \vdots & \vdots \\
        0 & 0 & \ddots & 0 & m_{r-2}\\
        0 & 0 & \dots & 1 & m_{r-1}
    \end{pmatrix}.
    \]
    Let $f(x)=\sum_{i=0}^{r-1}f_i x^i$ be the initial load of the Galois-mode LFSR. The states are given by $A^0f, A^1f, A^2f, \dots$. Denote by $e$ the row vector $e=(0,0,\dots,0,1)$. Then the initial elements of the output LFSR sequence are:
    \[
    \begin{cases}
        e A^0 f & = a_0, \\
        e A^1 f & = a_1,\\
         & \vdots 
        \\
        e A^{r-1} f & = a_{r-1}.
    \end{cases}
    \]
    We claim that the vectors $e A^0, e A^1, \dots, e A^{r-1}$ are linearly independent. Indeed, it is not hard to verify inductively that $e A^i$ has zeros in coordinates $0,1,\dots,r-2-i$ and a $1$ in coordinate $r-1-i$ (as well as other possibly non-zero entries in positions $r-i,\dots,r-1$). 

    We deduce that for every choice of initial conditions $(a_0,\dots,a_{r-1})$, there exists a unique initial load $f$ that generates those initial elements for the output sequence in the Galois-mode LFSR. Thus, there is a one-to-one correspondence between sequences $(x^j f\pmod g)_{j\geq 0}$, and LFSR sequences with connection polynomial $g$. 
    
    Instantiating this for $g=M_\alpha M_{\alpha^{-1}}$, we see that there is a one-to-one correspondence between polynomial orbits $\cO_f$ and  LFSR sequences with connection polynomial $M_\alpha M_{\alpha^{-1}}$ (distinct up to shifting).
  \end{remark}  


\begin{theorem}[\cite{SacSch26}]\label{thm:galoisconsecutivezeros}
    Given $g\in \F_2[x]$ of degree $r$, consider the Galois-mode LFSR with connection polynomial $g$ and initial load $f\neq 0$. Denote by $(a_j)_{j\geq 0}$ the output sequence. Then
    \[
    \deg(x^jf \pmod{g}) = r-1-\max \set*{i: a_j=0, a_{j+1}=0, \dots, a_{j+i-1}=0}, 
    \]
    where the maximum is taken to be $0$ if $a_j=1$.
\end{theorem}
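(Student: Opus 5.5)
Since the Galois-mode LFSR is time-invariant, the output sequence read from time $j$ onward, $(a_j,a_{j+1},\dots)$, is exactly the output of the same LFSR started from the initial load $f_j\eqdef x^jf \pmod g$. This follows from Theorem~\ref{thm:galois}, part 1): the state at time $j+i$ is $x^i f_j \pmod g$. It therefore suffices to prove the claim for $j=0$, with $f$ replaced by an arbitrary nonzero load. Throughout I will assume $f_j\neq 0$, so that its degree is defined. This is automatic when $g(0)\neq 0$, since multiplication by $x$ is then invertible modulo $g$; otherwise it is part of the implicit hypothesis of the statement.

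So let $f=\sum_{i=0}^{r-1} f_i x^i\neq 0$ and let $d\eqdef\deg f$, so that $f_d=1$ and $f_i=0$ for $i>d$. By Remark~\ref{rem:LFSRpolycorrespondence}, $a_i=eA^if$ for $0\le i\le r-1$. The same remark records that the row vector $eA^i$ vanishes on coordinates $0,\dots,r-2-i$, has a $1$ in coordinate $r-1-i$, and may be nonzero only in coordinates $r-1-i,\dots,r-1$.

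The plan is to compare the support of $eA^i$ with the support of $f$, which is contained in $\set{0,\dots,d}$ and contains $d$.
\begin{itemize}
\item For $0\le i<r-1-d$ we have $r-1-i>d$. Hence the support of $eA^i$ lies in coordinates strictly greater than $d$, where $f$ vanishes, so $a_i=0$.
\item For $i=r-1-d$, the vector $eA^i$ is supported in coordinates $\ge d$ and has a $1$ in coordinate $d$. Among those coordinates, $f$ is nonzero only at $d$, so $a_{r-1-d}=f_d=1$.
\end{itemize}
Note that $0\le r-1-d\le r-1$, so all indices used are within the range covered by the remark. Consequently the maximal run of zeros starting at $a_0$ has length exactly $r-1-d$, which is $0$ precisely when $a_0=1$, i.e.\ when $d=r-1$. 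This gives $\deg f=r-1-\max\set{i: a_0=\dots=a_{i-1}=0}$.

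As a sanity check, and as an alternative proof, one can argue directly from the transition rule. While the top coordinate $f_{r-1}$ is $0$, the output is $0$ and the transition is a pure shift with no feedback, so the degree increases by exactly one per step. The first $1$ is output exactly when the degree reaches $r-1$, which happens after $r-1-d$ steps.

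The only delicate point is the structural fact about $eA^i$, namely the zero pattern and the leading $1$. The remark asserts this; if needed, I would verify it by induction on $i$. Right-multiplying by $A$ sends a row vector $(u_0,\dots,u_{r-1})$ to $(u_1,u_2,\dots,u_{r-1},\sum_k u_k m_k)$, so the first nonzero coordinate moves one position to the left at each step.
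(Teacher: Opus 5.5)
Your proof is correct. The paper itself gives no proof of this statement; it is quoted from \cite{SacSch26}, so there is no in-paper argument to compare against.

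Your reduction to $j=0$ via Theorem~\ref{thm:galois} is valid. The support comparison between $eA^i$ and $f$ is also valid. It relies on the same triangular structure of the rows $eA^i$ that the paper records in Remark~\ref{rem:LFSRpolycorrespondence} and reuses in the proof of Theorem~\ref{thm:characterizationotherpatterns}.

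Of your two arguments, the ``sanity check'' is the more economical one. It uses only the transition rule: a zero top coordinate means zero output and a feedback-free shift, which raises the degree by exactly one. The $eA^i$ route is the same fact in matrix form. Its advantage is that it fits directly into the linear-algebra framework the paper later uses for general patterns $w$.

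Your caveat that $x^jf \bmod g \neq 0$ is needed is a genuine implicit hypothesis when $x \mid g$. For example, $g=x^r$, $f=1$, $j=r$ gives an infinite run of zeros and no defined degree. The condition is automatic when $g(0)=1$, in particular for $g=M_\alpha M_{\alpha^{-1}}$, which is the case used throughout the paper.
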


Let us instantiate the previous result in the particular case which will be useful throughout this work.

\begin{theorem}\label{thm:characterizationzerosdeg}
Let $g\in\F_2[x]$, $\deg(g)=2k$, and consider the Galois-mode LFSR with connection polynomial $g$ and initial load $f\neq 0$. Denote by $(a_j)_{j\geq 0}$ the output sequence. This sequence has a run of (at least) $k$ zeros at position $j$, $a_j=\dots = a_{j+k-1}=0$, if and only if $\deg (x^jf \pmod{g})<k$. More generally, it has a run of at least $t$ zeros at position $j$ if and only if $\deg (x^jf \pmod{g})<2k-t$. 
    \end{theorem}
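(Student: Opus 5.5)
This is a direct specialization of Theorem~\ref{thm:galoisconsecutivezeros} with $r=\deg(g)=2k$, so I would derive it from that formula rather than argue from scratch.

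Since $f\neq 0$ and, in the orbits of interest, $g\nmid x^jf$, the state $x^jf \pmod g$ is nonzero and its degree is well defined. Write
\[
\ell_j \eqdef \max\set*{i: a_j=\dots=a_{j+i-1}=0}
\]
for the length of the maximal run of zeros of the output starting at position $j$, with $\ell_j=0$ if $a_j=1$. Theorem~\ref{thm:galoisconsecutivezeros} with $r=2k$ gives
\[
\deg(x^jf \pmod g) = 2k-1-\ell_j .
\]

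The sequence has a run of at least $t$ zeros at position $j$ exactly when $\ell_j\geq t$. By the identity above, this is equivalent to $2k-1-\deg(x^jf\pmod g)\geq t$, that is, to $\deg(x^jf\pmod g)\leq 2k-1-t$. Over the integers this is the same as $\deg(x^jf \pmod g)<2k-t$, which is the general claim. Setting $t=k$ gives the first statement, $\deg(x^jf \pmod g)<k$.

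One point needs care: whether $\ell_j$ can be infinite. Suppose all outputs from position $j$ onward were zero. Then by Remark~\ref{rem:LFSRpolycorrespondence}, the $2k$ consecutive zero outputs starting at $j$ would force the state at time $j$ to be $0$, because the map from a state to its next $2k$ outputs is injective. That state is $x^jf \bmod g$, so this case cannot occur when that state is nonzero. Hence $\ell_j\leq 2k-1$, the maximum is finite, and the equivalence holds for every $t\leq 2k$.

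The only real obstacle is this bookkeeping: the nonzero-state condition and the translation from strict to non-strict inequality. Everything else is substitution into Theorem~\ref{thm:galoisconsecutivezeros}.
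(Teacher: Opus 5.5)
Your proposal is correct and is essentially the paper's proof, which consists only of invoking Theorem~\ref{thm:galoisconsecutivezeros} with $r=2k$. Your extra bookkeeping just makes explicit what the paper leaves implicit: the finiteness of the zero run via the injectivity in Remark~\ref{rem:LFSRpolycorrespondence}, the passage from $\leq 2k-1-t$ to $<2k-t$, and the nonzero-state caveat (automatic when $\gcd(x,g)=1$, as for $g=M_\alpha M_{\alpha^{-1}}$).
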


\begin{IEEEproof}
This is a direct application of Theorem~\ref{thm:galoisconsecutivezeros}.
\end{IEEEproof}

Theorem \ref{thm:characterizationzerosdeg} establishes a connection between the degrees of polynomials in orbits $x^jf\pmod{g}$ and the occurrence of the window  $0\dots 0=0^k$ ($k$ consecutive zeros) in an LFSR sequence. In \cite{SacSch26}, this fact was used in one direction: known results for LFSR sequences were used to prove results on the degrees of the polynomials, which in turn yielded upper bounds on the burst-covering radius of cyclic codes (see the next subsection for the definition of the burst-covering radius). In the present work we will take advantage of this connection in the opposite direction: we will obtain new results on LFSR sequences by studying the corresponding polynomial orbits. 

Theorem \ref{thm:characterizationzerosdeg} admits a generalization that characterizes the occurrence of other window patterns besides $0^k$.

\begin{theorem}\label{thm:characterizationotherpatterns}
Let $g \in\F_2[x]$, $\deg(g)=2k$, and consider the Galois-mode LFSR with connection polynomial $g$ and initial load $f\neq 0$. Denote by $(a_j)_{j\geq 0}$ the output sequence. For each $w\in \F_2^k$, there exists a polynomial $h_w\in\F_2[x]$ of degree $<2k$, such that
    \[
    (a_j,\dots,a_{j+k-1})=w \Longleftrightarrow x^j f \pmod{g} \in\langle1,x,\dots,x^{k-1}\rangle+ h_w,
    \]
    that is, if and only if $x^j f\pmod g$ belongs to a certain affine subspace of polynomials of dimension $k$. 
\end{theorem}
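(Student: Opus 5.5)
The plan is to reduce the statement to linear algebra over $\F_2$, using the observation behind Remark~\ref{rem:LFSRpolycorrespondence}: the output at time $j+i$ is a fixed linear functional applied to the state at time $j$. Write $A$ for the companion matrix of $g$ and $e=(0,\dots,0,1)$. Identify the state $x^jf \bmod g$ with the column vector $s_j=A^jf\in\F_2^{2k}$.

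First, by Theorem~\ref{thm:galois}, the state at time $j+i$ is $A^i s_j$, and its output is $a_{j+i}=eA^is_j$. Hence
\[
(a_j,\dots,a_{j+k-1})^T = L\, s_j,
\]
where $L$ is the $k\times 2k$ matrix whose rows are $eA^0,eA^1,\dots,eA^{k-1}$.

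Second, I determine the kernel and rank of $L$. By the inductive observation in Remark~\ref{rem:LFSRpolycorrespondence}, the row $eA^i$ is zero in coordinates $0,\dots,2k-2-i$ and has a $1$ in coordinate $2k-1-i$. So the rows $i=0,\dots,k-1$ have leading $1$'s in the distinct coordinates $2k-1,\dots,k$, and all of them vanish on coordinates $0,\dots,k-1$. This gives two facts. The matrix $L$ has full rank $k$, since its restriction to coordinates $k,\dots,2k-1$ is triangular with unit diagonal. Its kernel contains the $k$-dimensional span of the coordinate vectors $e_0,\dots,e_{k-1}$, i.e.\ the polynomials $\langle 1,x,\dots,x^{k-1}\rangle$. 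Since $\dim\ker L=2k-k=k$, the kernel is exactly $\langle 1,x,\dots,x^{k-1}\rangle$. This agrees with the case $w=0$, which is Theorem~\ref{thm:characterizationzerosdeg}.

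Third, I produce $h_w$. Because $L$ is surjective onto $\F_2^k$, for each $w$ there is some $h_w$ of degree $<2k$ with $Lh_w=w^T$. I can make this explicit: $h_w$ is supported on $x^k,\dots,x^{2k-1}$ and is found by back-substitution in the triangular system. Then $Ls_j=w^T$ holds if and only if $s_j-h_w\in\ker L$, that is, if and only if $x^jf\bmod g\in\langle1,\dots,x^{k-1}\rangle+h_w$, which is the claimed affine subspace of dimension $k$. The hypothesis $f\neq0$ plays no role here.

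The only step needing care is the triangular structure of the rows $eA^i$. It follows by induction: right-multiplying a row vector by $A$ shifts entries one position toward lower indices, and then adds a multiple of the last column, which only affects coordinate $2k-1$. Given that structure, the rest is routine.
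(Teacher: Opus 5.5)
Your proposal is correct and follows essentially the same route as the paper: the same system of $k$ equations $eA^i f' = w_i$ in $2k$ unknowns, linear independence of the rows $eA^0,\dots,eA^{k-1}$ from the triangular structure in Remark~\ref{rem:LFSRpolycorrespondence}, and a dimension count giving a $k$-dimensional affine solution space. The only difference is that you read off the homogeneous solution space $\langle 1,x,\dots,x^{k-1}\rangle$ directly from the vanishing of these rows on coordinates $0,\dots,k-1$, which also gives a representative $h_w$ that is a multiple of $x^k$; the paper instead obtains that step by citing Theorem~\ref{thm:characterizationzerosdeg}.
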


\begin{IEEEproof}
    Define the companion matrix $A$ as in Remark \ref{rem:LFSRpolycorrespondence}. The states of the Galois-mode LFSR are $A^0f, A^1f, A^2f, \dots$. As before, denote $e=(0,0,\dots,0,1)$. Let $f'=x^j f \pmod g=\sum_{i=0}^{2k-1}f_i' x^i$. Thus, $(a_j,\dots,a_{j+k-1})=w$ if and only if
    \[
    \begin{cases}
        e A^0 f' & = w_0, \\
        e A^1 f' & = w_1,\\
         & \vdots 
        \\
        e A^{k-1} f' & = w_{k-1}.
    \end{cases}
    \]
    The vectors $e A^0, e A^1, \dots, e A^{k-1}$ are linearly independent by the same argument as in Remark \ref{rem:LFSRpolycorrespondence}. Thus, the above system consists of $k$ linearly independent equations on $2k$ unknowns $f'_0,\dots,f'_{2k-1}$. The solution is an affine space of dimension $k$. By Theorem \ref{thm:characterizationzerosdeg}, the solution to the homogeneous system is precisely $\langle 1,x,\dots,x^{k-1}\rangle$.
\end{IEEEproof}

We make two comments regarding Theorem~\ref{thm:characterizationotherpatterns}. First, we note that $h_{0^k}=0$ by Theorem \ref{thm:characterizationzerosdeg}. Second, by choosing an adequate representative, we may take $h_w$ to be a multiple of $x^k$. 

\subsection{Melas codes and related concepts}
An $[n,n-r]$ binary linear code, $\cC$, is an $(n-r)$-dimensional subspace, $\cC\subseteq\F_2^n$. We can specify the code $\cC$ through a parity-check matrix $H\in\F_2^{r\times n}$, such that $c\in \cC$ if and only if $Hc=0$, i.e., $\cC=\ker(H)$. Vectors of the form $Hv$, $v\in\F_2^n$, are called syndromes. The dual code of $\cC$, denoted $\cC^\perp$, is the linear code spanned by the rows of $H$.

Given a vector $v=(v_0,\dots,v_{n-1})\in\F_2^n$, a cyclic shift of $v$ is the vector $(v_{n-1},v_0,v_1,\dots,v_{n-2})$. A linear code $\cC$ is said to be cyclic if $c\in \cC$ implies the cyclic shift of $c$ is also in $\cC$. With any vector $v$ we associate the polynomial $v(x)=\sum_{i=0}^{n-1} v_i x^i$. It is well known~\cite{MacSlo78} that for a cyclic code there exists a unique generator polynomial, $g(x)\in\F_2[x]$, $\deg(g)=r$, such that $c\in\cC$ if and only if $c(x)=u(x)g(x)$, for some $u(x)\in\F_2[x]$, $\deg(u)\leq n-r-1$. 

A binary primitive Melas code of length $2^k-1$, which we denote by $\Melas(k)$, is defined in the following way: given $\alpha\in \F_{2^k}$ primitive, the parity-check matrix is 
\begin{equation}
\label{eq:pcmelas}
H_k = \begin{pmatrix}
    1 & \alpha & \alpha^2 & \dots & \alpha^{2^k-2}\\
    1 & \alpha^{-1} & \alpha^{-2} & \dots  & \alpha^{-(2^k-2)}
\end{pmatrix},
\end{equation}
where each entry is actually a binary column vector with $k$ entries, obtained by choosing a basis of $\F_{2^k}$ over $\F_2$. If $k\geq 3$, the generator polynomial of $\Melas(k)$ is $M_{\alpha}(x)M_{\alpha^{-1}}(x)$ and has degree $2k$, making $\Melas(k)$ a $[2^k-1,2^k-2k-1]$ code.

By \cite[Chapter 7]{MacSlo78}, the codewords of the dual code, $\Melas(k)^{\perp}$, are precisely the LFSR sequences with connection polynomial $M_\alpha M_{\alpha^{-1}}$. Thus, by Theorem \ref{thm:LFSRtracecharacterization}, the codewords in $\Melas(k)^{\perp}$ are of the form
\[
(\Tr(\beta \alpha^j+\gamma \alpha^{-j}))_{0\leq j\leq 2^k-2},
\]
for some $\beta,\gamma\in\F_{2^k}$. Additionally, by Remark \ref{rem:LFSRpolycorrespondence}, there is a one-to-one correspondence between codewords of $\Melas(k)^\perp$ (distinct up to cyclic shift) and the polynomial orbits defined by the relation $\sim$.

The binary primitive $2$-error-correcting BCH code of length $2^k-1$ is defined in the following way: given $\alpha\in \F_{2^k}$ primitive, the parity-check matrix is defined as
\[
 \begin{pmatrix}
    1 & \alpha & \alpha^2 & \dots & \alpha^{2^k-2}\\
    1 & \alpha^3 & \alpha^6 & \dots  & \alpha^{3(2^k-2)}\\
\end{pmatrix}.
\]
We denote these codes $\BCH(2,k)$. If $k\geq 3$, the generator polynomial is $M_{\alpha}(x)M_{\alpha^3}(x)$ and has degree $2k$, see \cite[Chapter 9]{MacSlo78}.

\subsection{The $b$-symbol weight}
Now we introduce the definition of $b$-symbol weight for binary vectors. Recall that the Hamming weight of a vector $v=(v_0,\dots,v_{n-1})\in\F_2^n$ is defined as $\wt(v)\eqdef \#\set{0\leq i\leq n-1: v_i \neq 0}$. For each $1\leq b\leq n-1$, define the map $\pi_b:\F_2^n\rightarrow (\F_2^b)^n$ as
\[
\pi_b(v_0,\dots,v_{n-1})=\parenv*{(v_0,\dots,v_{b-1}),(v_1,\dots,v_b),\dots, (v_{n-1},v_0,\dots,v_{b-2})}.
\]
The $b$-symbol Hamming weight of $v\in\F_2^n$ is defined as the Hamming weight of $\pi_b(v)$. In particular, for $b=1$ we retrieve the regular Hamming weight of $v$. If we interpret $(v_0,\dots,v_{n-1})$ as a sequence, the $b$-symbol weight is equal to $n$ minus the number of occurrences of the pattern $0^b$ (a run of $b$ zeros) in the sequence (where we consider the last symbol to be consecutive to the first). 

\subsection{Burst-covering radius and critical exponent}
A new parameter on codes, the burst-covering radius, was introduced in \cite{SacSch26}. In the particular case of linear codes, this parameter can be defined in the following way.

\begin{definition}
    Let $\cC$ be an $[n,n-r]$ binary linear code, and $H\in\F_2^{r\times n}$ a parity-check matrix for $\cC$. The burst-covering radius, $b(\cC)$,  is the least integer $b$ satisfying that every column vector $x\in\F_2^r$ can be obtained as a linear combination of at most $b$ (cyclically\footnote{The last and first columns of $H$ are also considered consecutive.}) consecutive columns of $H$.
\end{definition}

It was shown in~\cite{SacSch26} that $b(\cC)$ is independent of the choice of parity-check matrix, $H$. For cyclic codes, there exists a useful characterization of the burst-covering radius in terms of polynomial orbits.

\begin{theorem}[{\cite[Theorem 23]{SacSch26}}]\label{thm:cyclicburstcovering}
    Let $\cC$ be an $[n,n-r]$ binary cyclic linear code with generator polynomial $g\in\F_2[x]$. Then
    \[
    b(\cC)= \max_{\substack{f\in \F_2[x]\\ \deg (f)<r, f\neq 0}} \min_{j\geq 0}\deg (x^j f\pmod{g})+1.
    \]
\end{theorem}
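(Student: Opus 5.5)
Since the paper states that $b(\cC)$ does not depend on the choice of parity-check matrix, I will prove the formula for one convenient matrix $H$ built from the generator polynomial. Throughout, I identify a column vector in $\F_2^r$ with a polynomial of degree $<r$ via its coefficients. Take $H$ to be the $r\times n$ matrix whose $i$-th column, $0\leq i\leq n-1$, is $x^i \pmod g$.

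First I check that this $H$ is a parity-check matrix of $\cC$. For $c\in\F_2^n$ we have $Hc = \sum_i c_i (x^i \bmod g) = c(x) \bmod g$. Hence $Hc=0$ exactly when $g\mid c(x)$, that is, when $c\in\cC$. The rank of $H$ is $r$, because the columns $i=0,\dots,r-1$ are $1,x,\dots,x^{r-1}$. I will also use two standard facts about a cyclic code: $g\mid x^n-1$ and $g(0)\neq 0$. Together they give $x^n\equiv 1 \pmod g$, so $x$ is invertible modulo $g$, and the exponents $j\geq 0$ taken modulo $n$ exhaust all its powers.

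Next I describe the syndromes reachable by a burst. A linear combination of the $b$ cyclically consecutive columns starting at position $i$ equals $\sum_{t=0}^{b-1}u_t x^{i+t} \bmod g = x^i u(x) \bmod g$, where $u$ ranges over all polynomials of degree $<b$. Cyclic wrap-around causes no problem, since $x^{n}\equiv 1$ means indices may be read modulo $n$. So a nonzero syndrome $s$ is a combination of at most $b$ consecutive columns if and only if there exist $i$ and $u$ with $\deg u<b$ and $s\equiv x^iu \pmod g$. Setting $j\equiv -i \pmod n$, with $j\geq 0$, this is equivalent to
\[
\deg(x^j s \bmod g) < b \quad \text{for some } j\geq 0 .
\]
In the direction from the degree condition back to a burst, I simply take $u = x^j s \bmod g$ and $i=-j \bmod n$.

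Finally I read off the formula. For each nonzero $s$ with $\deg s<r$, the least number of consecutive columns whose span contains $s$ is $\min_{j\geq 0}\deg(x^j s\bmod g)+1$. The syndrome $s=0$ needs no columns. The burst-covering radius is the largest of these minima over all syndromes, which gives the claimed expression. The only step that needs any care is the bookkeeping in the middle step: handling the cyclic wrap-around and confirming that multiplying by $x^{-i}$ is the same as multiplying by some nonnegative power $x^j$. Both follow from $g\mid x^n-1$.
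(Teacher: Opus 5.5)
Your proof is correct and follows the standard route behind the cited result \cite{SacSch26}, which this paper states without proof. With the parity-check matrix whose columns are $x^i \bmod g$, a window of width $b$ starting at column $i$ produces exactly the syndromes $x^i u \bmod g$ with $\deg u<b$, and $x^n\equiv 1 \pmod g$ turns sliding the window into moving along the orbit of $s$, which is the same sliding-window reasoning the paper uses for $H_k$ when motivating Algorithm~\ref{alg:covering}.
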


As a consequence of the last theorem, in the case $\cC= \Melas(k)$ we can use our previously introduced notation and write the burst-covering radius as,
\[
     b(\cC)= \max_{\substack{f\in \F_2[x]\\ \deg f<r, f\neq 0}} \min_{f'\in \cO_f}\deg f'+1.
\]

Finally, the critical exponent of a linear code is defined in the following way:

\begin{definition}[\cite{Gre67}]
 The critical exponent of an $[n,n-r]$ binary linear code $\cC\subseteq \F_2^n$, denoted $c(\cC)$, is the smallest integer $c$ such that $\cC$ contains $c$ codewords whose supports cover all $n$ coordinate positions. If $\cC$ has a coordinate which is always zero, the critical exponent is defined to be $\infty$.
\end{definition}

In \cite{SacSch26}, the following connection between the critical exponent and the burst-covering radius of a cyclic code was established:

\begin{theorem}\label{thm:critexpbound}
    Given $\cC$ a binary $[n,n-r]$ cyclic code with burst-covering radius $b$, $\cC\neq \F_2^n$, the critical exponent of its dual code satisfies:
    \[
        c(\cC^{\perp})\leq r-b+1.
    \]
\end{theorem}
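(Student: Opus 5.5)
The plan is to produce a single nonzero codeword $c\in\cC^\perp$ whose cyclic runs of zeros all have length at most $r-b$. Its $r-b+1$ cyclic shifts by $0,1,\dots,r-b$ positions then cover every coordinate. Since $\cC^\perp$ is cyclic (the dual of a cyclic code is cyclic), these shifts are codewords, and $c(\cC^\perp)\le r-b+1$ follows.

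\textbf{Step 1: choose the initial load.} Let $g$ be the generator polynomial of $\cC$, with $\deg g=r\ge 1$ (since $\cC\neq\F_2^n$). By Theorem~\ref{thm:cyclicburstcovering}, there is a nonzero $f$ with $\deg f<r$ such that
\[
\min_{j\ge 0}\deg(x^jf \bmod g)=b-1 .
\]
Run the Galois-mode LFSR with connection polynomial $g$ and initial load $f$, and let $(a_j)_{j\ge0}$ be its output. By Theorem~\ref{thm:galoisconsecutivezeros}, the longest run of zeros starting at position $j$ equals $r-1-\deg(x^jf\bmod g)\le r-b$ for every $j$. So the infinite sequence has no run of $r-b+1$ zeros anywhere. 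The same holds for runs starting at positions where $a_j=1$, for which the run length is taken to be $0$.

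\textbf{Step 2: identify the sequence with a codeword.} Since $g\mid x^n-1$, the states $x^jf \bmod g$ are periodic with period dividing $n$, so $(a_j)$ is periodic with period dividing $n$. By Theorem~\ref{thm:galois}, the output satisfies the linear recurrence with connection polynomial $g$. Hence $c=(a_0,\dots,a_{n-1})$ is a codeword of $\cC^\perp$, by the standard characterization of dual cyclic codes as LFSR sequences with connection polynomial equal to the generator of $\cC$, as recalled for Melas codes above. Because the sequence is $n$-periodic, the bound on runs of zeros also holds cyclically on $c$. Moreover $c\neq 0$, since $f\neq0$ and Remark~\ref{rem:LFSRpolycorrespondence} gives a bijection between loads and sequences.

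\textbf{Step 3: cover the coordinates.} Let $c^{(t)}$ denote the cyclic shift of $c$ by $t$ positions, for $0\le t\le r-b$. Fix a coordinate $i$. The cyclic window of positions $i-(r-b),\dots,i$ has length $r-b+1$, so $c$ is nonzero at one of them, say at $i-t$. Then $c^{(t)}$ is nonzero at $i$. Hence these $r-b+1$ codewords cover all $n$ positions, which proves $c(\cC^\perp)\le r-b+1$.

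The main point needing care is Step 2: checking that the LFSR output with connection polynomial $g$, rather than its reciprocal, is a codeword of $\cC^\perp$, and that the run bound transfers to cyclic runs. The $n$-periodicity coming from $g\mid x^n-1$ handles the cyclic transfer. If the convention for dual codewords required the reciprocal polynomial, I would reverse the sequence, which preserves run lengths.
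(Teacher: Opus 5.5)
The paper does not prove this theorem itself; it quotes it from \cite{SacSch26}, so there is no in-paper argument to compare against. Your proof is correct and complete, and it is the natural argument using the tools the paper recalls. Theorem~\ref{thm:cyclicburstcovering} gives an orbit all of whose elements have degree at least $b-1$. Theorem~\ref{thm:galoisconsecutivezeros} turns this into a nonzero dual codeword with no cyclic run of $r-b+1$ zeros, and its shifts by $0,\dots,r-b$ then cover every coordinate.

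The orientation concern you raise in Step~2 does not arise. $\cC$ is spanned by the non-wrapping shifts $x^jg$, $0\le j\le n-r-1$, and orthogonality to these is exactly the recurrence $a_{j+r}=\sum_{i<r}g_ia_{j+i}$ of Theorem~\ref{thm:galois}. Hence $c\in\cC^\perp$ directly, without any reversal. Periodicity is only needed to make the run bound cyclic, and $c\neq 0$ already follows from that bound.
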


\section{Pattern frequencies and $k$-symbol weight}\label{sec:patternfreqandweight}
In this section we will show how the connection between LFSR sequences and polynomial orbits can be used to give improved bounds on the number of occurrences of patterns in $\Melas(k)^\perp$ codewords and their $k$-symbol weights. Throughout this section we fix some $\alpha\in\F_{2^k}$, $k\geq 3$, $\alpha$ primitive, so that the equivalence relation $\sim$ is defined.

\subsection{Lower bound on $k$-symbol weight}

Let us define the following two sets,
\begin{equation}
\label{eq:defHU}
\begin{split}
H &\eqdef \set*{ h\in\F_2[x] : \text{$h$ is irreducible and self-reciprocal}},\\
U &\eqdef \set*{ \set{u,\rev{u}}: u\in\F_2[x], \deg(u)>1, \text{$u$ is irreducible and not self-reciprocal}}.
\end{split}
\end{equation}
We will often abuse notation and use $u\in U$ to indicate choosing one arbitrary representative from a pair in $U$.

\begin{lemma}\label{lemma:orbitcount}
Let $\alpha\in\F_{2^k}$ be primitive, $k\geq 3$. Let $P\in \F_2[x]$ with $x\nmid P$ and $\deg (P)=d<k-s$ for some $s\geq 0$. Suppose that $P$ factors as product of irreducible polynomials in the following way,
    \[
    P = \prod_{h\in H} h^{a_h} \prod_{u\in U} u^{b_u} \rev{u}^{c_u}.
    \]
    Then, the number of polynomials $g\sim P$ with $\deg (g)<k-s$ is 
    \begin{equation}\label{eq:orbitcount}
        (k-s-d)\prod_{u\in U}(1+b_u+c_u).
    \end{equation}
\end{lemma}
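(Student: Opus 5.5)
We must count the polynomials $g$ with $\deg(g)<k-s$ and $g\sim P$. The plan is to reduce this to a factorization problem via Lemma~\ref{lemma:orbitequivdef} 1), and then count how the degree budget can be spent on powers of $x$.

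\textbf{Step 1: reduction to the $x$-free part.} Since $\cO_P$ must be non-degenerate, we first check this. $P$ is nonzero of degree $<k$, while $M_\alpha$ and $M_{\alpha^{-1}}$ have degree $k$ (as $\alpha$ is primitive). Hence neither divides $P$, so $P(\alpha)P(\alpha^{-1})\neq 0$. By Lemma~\ref{lemma:equiv}, $g\sim P$ if and only if $g(\alpha)g(\alpha^{-1})=P(\alpha)P(\alpha^{-1})$.

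Write $g=x^bQ$ with $x\nmid Q$. Since $\deg g<k-s\le k$, Lemma~\ref{lemma:orbitequivdef} 1) applies with $f=P$ (so $a=0$). It says the condition is equivalent to $Q\rev{Q}=P\rev{P}$, and it forces $\deg Q=d$. So the count equals
\[
\#\set{Q: x\nmid Q,\ Q\rev{Q}=P\rev{P}}\ \times\ \#\set{b\ge 0: b+d<k-s}.
\]
The second factor is $k-s-d$. The degree restriction only constrains $b$, because $\deg Q=d$ is forced.

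\textbf{Step 2: counting solutions $Q$.} We use unique factorization in $\F_2[x]$. Any irreducible factor of $Q$ has nonzero constant term, so reversal maps it to an irreducible polynomial of the same degree. Moreover, $\rev{(AB)}=\rev{A}\rev{B}$ for polynomials not divisible by $x$.

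Write
\[
Q=\prod_{h\in H} h^{a'_h}\prod_{u\in U}u^{b'_u}\rev{u}^{c'_u}.
\]
Note that the irreducible $x+1$ is self-reciprocal, so it lies in $H$; this is why $U$ only requires $\deg(u)>1$. Then
\[
Q\rev{Q}=\prod_{h\in H} h^{2a'_h}\prod_{u\in U}(u\rev{u})^{b'_u+c'_u}.
\]
Comparing exponents with $P\rev{P}$, the equation $Q\rev{Q}=P\rev{P}$ holds if and only if $a'_h=a_h$ for every $h\in H$ and $b'_u+c'_u=b_u+c_u$ for every $u\in U$.

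For each pair $\set{u,\rev{u}}$ there are $1+b_u+c_u$ choices of $(b'_u,c'_u)$, and choices for different pairs are independent. Multiplying over $U$ gives $\prod_{u\in U}(1+b_u+c_u)$ distinct polynomials $Q$, distinct by unique factorization.

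\textbf{Step 3: conclusion.} Each pair $(b,Q)$ gives a distinct $g=x^bQ$, again by unique factorization. Combining the two factors yields~\eqref{eq:orbitcount}.

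The only delicate point is Step 1: checking non-degeneracy, and observing that Lemma~\ref{lemma:orbitequivdef} 1) needs only the $x$-free parts to have degree $<k$. That holds here because $\deg g<k-s\le k$.
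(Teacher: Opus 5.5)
Your proof is correct and follows the paper's argument essentially step for step: non-degeneracy from $\deg P<k$, then Lemma~\ref{lemma:equiv} and Lemma~\ref{lemma:orbitequivdef}, Claim 1), to reduce to $Q\rev{Q}=P\rev{P}$ with $\deg Q=d$ forced, then unique factorization to get $\prod_{u\in U}(1+b_u+c_u)$ choices of $Q$ and $k-s-d$ choices of the power of $x$. Your explicit justification of non-degeneracy via $\deg M_\alpha=\deg M_{\alpha^{-1}}=k$ fills in a step the paper asserts without comment.
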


\begin{IEEEproof}
Since $\deg(P)<k$, $\cO_P$ is non-degenerate. By Lemma \ref{lemma:equiv}, $g\sim P$ if and only if $g(\alpha)g(\alpha^{-1})=P(\alpha)P(\alpha^{-1})$. Let $g=x^bQ$, $x\nmid Q$, $\deg (g)<k-s$. By Lemma \ref{lemma:orbitequivdef}, Claim 1), $g$ belongs to the same orbit as $P$
if and only if $Q(x)\overleftarrow{Q}(x)=P(x)\overleftarrow{P}(x)$. If $Q$ factors as
    \[
    Q = \prod_{h\in H} h^{a'_h}\prod_{u\in U} u^{b'_u} \rev{u}^{c'_u},
    \]
then by uniqueness of factorization this is equivalent to asking $b'_u+c'_u=b_u+c_u$ for every $u\neq \rev{u}$, and $a_h=a'_h$ for every $h\in H$. There are $1+b_u+c_u$ choices for $(b'_u,c'_u)$ satisfying this condition. On the other hand, $\deg (Q)=\deg (P)$, and $\deg (g)<k-s$, thus $0\leq b< k-s-\deg (P)$. Putting all these together yields the desired result.
\end{IEEEproof}

\begin{theorem}\label{thm:occurofzero}
Consider any non-zero codeword $c\in\Melas(k)^\perp$, $k\geq 3$. Suppose $s_k\geq 0$ are such that $k-s_k\rightarrow \infty$. For the sake of simplicity of notation, we omit the dependency of $s$ on $k$. Then the number of occurrences of the pattern $0^{k+s}$ in $c$ is upper bounded by $2^{\frac{2(k-s)}{\log_2 (k-s)}(1+o(1))}$. Furthermore, there exists a non-zero codeword of $\Melas(k)^\perp$ with at least $2^{\frac{k-s}{2 \log_2 (k-s)}(1-o(1))}$ occurrences of $0^{k+s}$.
\end{theorem}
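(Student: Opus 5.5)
The plan is to turn the count of occurrences of $0^{k+s}$ into a count of low-degree polynomials in one orbit, and then bound that count with a divisor-type estimate in $\F_2[x]$.

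\textbf{Reduction to orbits.} By Remark~\ref{rem:LFSRpolycorrespondence}, a non-zero codeword $c\in\Melas(k)^\perp$ is, up to cyclic shift, the output of the Galois-mode LFSR with connection polynomial $g=M_\alpha M_{\alpha^{-1}}$ and some initial load $f\neq 0$. By Theorem~\ref{thm:characterizationzerosdeg} with $t=k+s$, the window $0^{k+s}$ occurs at position $j$ if and only if $\deg(x^jf \bmod g)<k-s$. Since the period is $2^k-1=\#\cO_f$ (Lemma~\ref{lem:orbitsize}), the number of occurrences equals $\#\set{g'\in\cO_f:\deg g'<k-s}$.

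If $\cO_f$ is degenerate, every element is a non-zero multiple of $M_\alpha$ or of $M_{\alpha^{-1}}$, so it has degree $\geq k$ and there are no occurrences. Otherwise, if some element has degree $<k-s$, write it as $x^aP$ with $x\nmid P$ and $d=\deg P<k-s$. Every element of $\cO_f$ of degree $<k-s$ is a polynomial $g'\sim P$ with $\deg g'<k-s$. Lemma~\ref{lemma:orbitcount} then says the count is exactly
\[
(k-s-d)\prod_{u\in U}(1+b_u+c_u).
\]

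\textbf{Upper bound.} Set $D=k-s$. The factor $k-s-d$ is at most $D=2^{o(D/\log_2 D)}$. Since $1+b+c\leq(1+b)(1+c)$, the product is at most the number of divisors of $P$, a polynomial of degree $<D$. I will prove the function-field divisor bound $\tau(P)\leq 2^{\frac{D}{\log_2 D}(1+o(1))}$ by the classical splitting argument:
\begin{itemize}
\item Fix a threshold $m$. There are at most $2^{m+1}$ irreducibles of degree $\leq m$, each with exponent $\leq D$. They contribute at most $(D+1)^{2^{m+1}}$.
\item Each irreducible factor of degree $>m$ has exponent $e$ with $1+e\leq 2^{e}$. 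The total exponent of these factors is $<D/m$, so they contribute at most $2^{D/m}$.
\end{itemize}
Choose $m=\lfloor\log_2 D-2\log_2\log_2 D\rfloor$. Then $2^{D/m}=2^{\frac{D}{\log_2 D}(1+o(1))}$, while $(D+1)^{2^{m+1}}\leq 2^{O(D/\log_2 D)\cdot\log_2(D+1)/\log_2 D}$. That exponent is $O(D/\log_2 D)$, not $o(D/\log_2 D)$, so I need to tune $m$ slightly; taking $m=\log_2 D-3\log_2\log_2 D$ makes the first term $2^{o(D/\log D)}$.

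This gives the bound with constant $1$ in the exponent, which is even stronger than the stated constant $2$. The slack of $2$ absorbs any crude handling of $(1+b+c)$ versus divisor counts, so I would not worry about optimality here. This tuning of the divisor bound is the only delicate calculation, and I expect it to be the main (though routine) obstacle.

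\textbf{Lower bound.} Again set $D=k-s$ and take $m=\lfloor\log_2 D\rfloor-1$. The number of irreducibles of degree exactly $m$ is $\geq (2^m-2^{m/2+1})/m$. Self-reciprocal ones number at most about $2^{m/2}$, so there are at least $(1-o(1))2^{m-1}/m$ pairs in $U$ of degree $m$. Choose $t=\min\{\#\text{pairs},\lfloor D/(2m)\rfloor\}$ of these pairs, and let $P$ be the product of one representative from each, so $d=tm\leq D/2<k-s$.

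By Lemma~\ref{lemma:orbitcount}, $\cO_P$ is non-degenerate and contains at least $(D-d)2^{t}\geq 2^{t}$ polynomials of degree $<k-s$. For the chosen $m$, $t=(1-o(1))\frac{D}{2\log_2 D}$, giving $2^{\frac{k-s}{2\log_2(k-s)}(1-o(1))}$ occurrences. By Remark~\ref{rem:LFSRpolycorrespondence}, the orbit $\cO_P$ corresponds to a non-zero codeword of $\Melas(k)^\perp$, which completes the lower bound.
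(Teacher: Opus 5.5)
Your reduction to $\#\set{g'\in\cO_f:\deg g'<k-s}$ through Lemma~\ref{lemma:orbitcount} is the same as the paper's, and your upper bound is correct. It is in fact sharper than the paper's. The paper splits the irreducible factors at degree $\floor{\frac12\log_2(k-s)}$. This makes the small-degree part a negligible $2^{\sqrt{k-s}\log_2(k-s)}$, but costs a factor $2$ in the exponent of the large-degree part $2^{(k-s)/D}$. Your threshold $m\approx\log_2 D-3\log_2\log_2 D$ is $(1-o(1))\log_2 D$ while still keeping $(D+1)^{2^{m+1}}=2^{o(D/\log_2 D)}$. You therefore get exponent $\frac{D}{\log_2 D}(1+o(1))$, the function-field analogue of the classical divisor bound.

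There is an error in your lower bound. With $m=\floor{\log_2 D}-1$ you have $2^m\leq D/2$, so there are at most $2^{m-1}/m\leq D/(4m)$ pairs in $U$ of degree $m$. That is fewer than $\floor{D/(2m)}$, so $t$ equals the number of pairs and $t\leq\frac{D}{4\log_2 D}(1+o(1))$. When $D$ is just below a power of $2$, $t$ is only about $\frac{D}{8\log_2 D}$. So the claim $t=(1-o(1))\frac{D}{2\log_2 D}$ is false, and your argument as written only guarantees $2^{\frac{D}{8\log_2 D}(1-o(1))}$ occurrences, not the stated $2^{\frac{D}{2\log_2 D}(1-o(1))}$.

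The remedy is to raise the degree by a constant. With $m=\lceil\log_2 D\rceil+1$ you have $2^{m-1}\geq D$, hence at least $(1-o(1))D/m>\floor{D/(2m)}$ pairs. Then $t=\floor{D/(2m)}=(1-o(1))\frac{D}{2\log_2 D}$, since $m=\log_2 D+O(1)$, and $d=tm\leq D/2<k-s$. With $m=\lceil\log_2 D\rceil+2$ and $t=\floor{(D-1)/m}$ you would even get $2^{\frac{D}{\log_2 D}(1-o(1))}$, matching your upper bound.

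The paper avoids both the self-reciprocal bookkeeping and the shortage of pairs by using irreducibles of odd degree $d\approx 2\log_2(k-s)$. No odd-degree irreducible other than $x+1$ is self-reciprocal, and there are roughly $(k-s)^2/d$ of them, far more than the $\floor{(k-s-1)/d}$ pairs required.
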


\begin{IEEEproof}
As discussed in Section~\ref{sec:prelim}, any codeword of $\Melas(k)^\perp$ is an LFSR sequence that is the output of a Galois-mode LFSR with connection polynomial $M_\alpha M_{\alpha^{-1}}$ and an initial load $f\in\F_2[x]$, $\deg(f)<2k$, where $\alpha\in\F_{2^k}$ is primitive. As the claim omits the all-zero codeword, $f\neq 0$. By Theorem \ref{thm:characterizationzerosdeg}, a run of $k+s$ zeros in this LFSR sequence corresponds to a polynomial of degree $\leq k-s-1$ in $\cO_f$. Thus, we need to count the number of these polynomials. 

Assume $P\in\cO_f$, $\deg(P)\leq k-s-1$, and $x\nmid P$. Since $f\neq 0$, and since $\deg(P)<k$, $\cO_f$ is non-degenerate. Suppose $P$ factors as in Lemma \ref{lemma:orbitcount}. We will now bound the value of \eqref{eq:orbitcount}. The first factor satisfies
\[
k-s-\deg(P) \leq k-s = 2^{\log_2 (k-s)}.
\]
Let $D \eqdef \floor{\frac{1}{2}\log_2 (k-s)}$. The number of irreducible factors $u$ of degree less than $D$ is at most $2^{D}\leq \sqrt{k-s}$. These contribute at most 
\[
\prod_{u: \deg(u)<D} (1+b_u+c_u) \leq \prod_{u: \deg(u)<D} (k-s) \leq (k-s)^{\sqrt{k-s}}= 2^{\sqrt{k-s}\log_2(k-s)}.
\]
For those factors $u$ with degree greater or equal to $D$, we have that
$D\cdot\sum_{u: \deg (u)\geq D} (b_u+c_u)\leq k-s$, and so 
\[\prod_{u: \deg (u)\geq D} (1+b_u+c_u)\leq \prod_{u: \deg (u)\geq D} 2^{b_u+c_u}\leq 2^{(k-s)/D}\leq 2^{(k-s)/(1/2 \log_2(k-s)-1)}=2^{2(k-s)/(\log_2 (k-s)-2)}.\]
By multiplying together all three bounds we obtain the claimed result.

    For the existence result, choose $d$ odd such that $2\log_2 (k-s) \leq d\leq 2\log_2 (k-s)+2$. The number of binary irreducible polynomials of degree $d$ is given by (see \cite{LidNie97})
    \[
    \frac{1}{d}\sum_{e|d}\mu(e)2^{d/e}\geq\frac{2^d-d 2^{d/3}}{d},
    \]
    where $\mu$ is the M\"obius function, and the bound follows because every proper divisor of $d$ odd is at most $d/3$. For large enough $k-s$, $2^d-d2^{d/3}>2(k-s-1)+d$, and thus there are at least $\frac{2(k-1-s)}{d}+1$ distinct irreducible polynomials of degree $d$. It is known that no odd-degree irreducible polynomial is self-reciprocal except for $x+1$ (e.g., see~\cite[Remark 2.1.50]{MulPan13}), so we deduce that there exist at least $r= \lfloor\frac{k-s-1}{d}\rfloor$ pairs of reciprocal irreducible polynomials of degree $d$, $\set{u_1,\rev{u_1}},\dots, \set{u_r, \rev{u_r}}$. 

    Consider the $2^r\geq 2^{\frac{k-s-1}{2\log_2 (k-s)+2}-1}$ polynomials
    \[
    \prod_{i=1}^r u_i^{b_i}\rev{u_i}^{1-b_i},
    \]
    with $b_1,\dots,b_r\in\{0,1\}$. They all have degree $r\cdot d<k-s$ and they all belong to the same orbit by the proof of Lemma \ref{lemma:orbitcount}. 
\end{IEEEproof}

\begin{corollary}\label{cor:lowboundkweight}
The $k$-symbol weight of any non-zero codeword in $\Melas(k)^{\perp}$, $k\geq 3$, is at least $2^k-1-2^{\frac{2k(1+o(1))}{\log_2 k}}$.
\end{corollary}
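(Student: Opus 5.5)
This corollary is the case $s=0$ of Theorem~\ref{thm:occurofzero}, combined with the description of $b$-symbol weight as an occurrence count. For a codeword $c\in\F_2^n$ with $n=2^k-1$, the $k$-symbol weight of $c$ equals $n$ minus the number of cyclic positions $j$ at which the window $(c_j,\dots,c_{j+k-1})$ is $0^k$. So I only need to bound the number of occurrences of $0^k$ in $c$ by $2^{\frac{2k(1+o(1))}{\log_2 k}}$.

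I take $s=s_k=0$ in Theorem~\ref{thm:occurofzero}. Then $k-s=k\to\infty$, so the hypothesis holds, and every non-zero codeword of $\Melas(k)^\perp$ has at most $2^{\frac{2k}{\log_2 k}(1+o(1))}$ occurrences of $0^{k}$.

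The one point to check is that the occurrences counted in that theorem agree with the cyclic windows used in the $b$-symbol weight. The theorem counts polynomials of degree $\le k-1$ in $\cO_f$. The orbit has exactly $2^k-1=n$ elements (Lemma~\ref{lem:orbitsize}), and the output sequence of the Galois-mode LFSR is periodic with period $n$, as the codeword is a cyclic LFSR sequence. By Theorem~\ref{thm:characterizationzerosdeg}, position $j\in\{0,\dots,n-1\}$ starts a run of $k$ zeros in the periodic sequence exactly when $\deg(x^jf \bmod M_\alpha M_{\alpha^{-1}})<k$. The periodic extension accounts for the wrap-around windows. Since the map $j\mapsto x^jf \bmod M_\alpha M_{\alpha^{-1}}$ is a bijection from $\{0,\dots,n-1\}$ onto $\cO_f$, the number of cyclic occurrences of $0^k$ equals the number of elements of $\cO_f$ of degree $<k$.

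This number is exactly what Theorem~\ref{thm:occurofzero} bounds. Subtracting from $n=2^k-1$ gives the stated lower bound $2^k-1-2^{\frac{2k(1+o(1))}{\log_2 k}}$. There is no real obstacle beyond this correspondence of indices, and it is routine.
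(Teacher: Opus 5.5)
Your proposal is correct and follows the paper's route. The paper states the corollary without a separate proof, as the $s=0$ case of Theorem~\ref{thm:occurofzero}, using that the $k$-symbol weight equals $2^k-1$ minus the number of cyclic occurrences of $0^k$. Your explicit check that positions $j\in\{0,\dots,2^k-2\}$ correspond bijectively to elements of $\cO_f$ (so the count in that theorem is the cyclic one) is a detail the paper leaves implicit.
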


The bounds in Theorem \ref{thm:occurofzero} and Corollary \ref{cor:lowboundkweight} are strictly better than the best known results in the literature. Theorem 30 in \cite{SacSch26} (or alternatively, an adaptation of Theorem V.4 in \cite{ShiOzb21}) implies an upper bound on the number of occurrences of $0^k$ of the order $2^{k/2}$, in contrast to $2^{k/\log k}$. The computation in the proof of \cite[Theorem 27]{SacSch26} shows that the number of occurrences $N_{0^k}$ of $0^k$ in the LFSR sequence $(\Tr(\alpha^j+c\cdot \alpha^{-j}))_{0\leq j\leq 2^k-2}$, $c\in\F_{2^k}^*$, is given by
\[
N_{0^k} = \frac{1}{2^k}\sum_{J\subseteq [k-1]} \varphi(J),
\]
where 
\[
\varphi(J) = \sum_{x\in\F_{2^k}^*}\chi\parenv*{J(\alpha)\cdot x+c\cdot J(\alpha^{-1})\cdot x^{-1}},
\]
 $\chi$ is the additive character $\chi(x)=(-1)^{\Tr(x)}$, $[k-1]=\set{0,1,\dots,k-1}$, and $J\subseteq [k-1]$ is interpreted as being the non-zero coefficients of a polynomial of degree $<k$. 

 More generally, for any $w\in\F_2^k$, the number of occurrences $N_w$ is given by
 \begin{equation}\label{eq:fourier}
 N_w=\frac{1}{2^k}\sum_{J\subseteq [k-1]}(-1)^{\sum_{j\in J}w_j}\varphi(J).     
 \end{equation}


 It is known by the Weil-Carlitz-Uchiyama bound (see for example \cite{MorMor91}), that $|\varphi(J)|\leq 2+2^{1+k/2}$ if $J\neq \emptyset$. This yields the upper  bound $|N_{0^k}|\leq 1+ 2^{1+k/2}$, which is strictly worse than Theorem \ref{thm:occurofzero}. That means that there is significant cancellation in the computation of $N_{0^k}$, which a direct application of the Weil bound misses. 

So far we have only focused on the all-zero pattern. Can we extend the previous results to others? The following theorem shows that the same techniques allow us to give an answer to this question to a somewhat unsatisfactory extent.

\begin{theorem}\label{thm:nonzeropatterns}
Consider any codeword $c\in\Melas(k)^\perp$, $k\geq 3$. There exist at least $2^{\floor{\frac{k}{\log_2 k}}+1}$ patterns $w\in\F_2^k$ such that they occur at most $2^{\frac{5k}{\log_2 k}(1+o(1))}$ times in $c$.
\end{theorem}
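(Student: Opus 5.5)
We would reduce the statement to counting low-degree polynomials in a single orbit, as in Theorem~\ref{thm:occurofzero}, but now allowing degree slightly above $k$. Three cases are trivial and handled first. If $c=0$, every $w\neq 0^k$ occurs zero times. If $c$ comes from a degenerate orbit, it is a nonzero sequence with connection polynomial $M_\alpha$ or $M_{\alpha^{-1}}$, i.e., a PN sequence of degree $k$. In such a sequence each nonzero window of length $k$ occurs exactly once. So assume $c$ is the output of the Galois-mode LFSR with connection polynomial $g=M_\alpha M_{\alpha^{-1}}$ and initial load $f$, with $\cO_f$ non-degenerate.

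\textbf{Step 1 (grouping patterns).} By Theorem~\ref{thm:characterizationotherpatterns}, $w$ occurs at position $j$ iff $x^jf \bmod g\in\langle 1,\dots,x^{k-1}\rangle+h_w$, where $h_w$ is chosen to be a multiple of $x^k$. The proof shows that the map $w\mapsto h_w$ is a linear bijection from $\F_2^k$ onto $\langle x^k,\dots,x^{2k-1}\rangle$. Fix $t\eqdef\floor{k/\log_2 k}+1$ and let $W_t$ be the $2^t$ patterns with $h_w\in\langle x^k,\dots,x^{k+t-1}\rangle$. The affine spaces for $w\in W_t$ partition the polynomials of degree $<k+t$. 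Hence
\[
\sum_{w\in W_t}N_w=\#\set{f'\in\cO_f:\deg f'<k+t},
\]
and it suffices to bound this orbit count by $2^{\frac{5k}{\log_2 k}(1+o(1))}$. Each individual $N_w$, $w\in W_t$, is then bounded by the same quantity.

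\textbf{Step 2 (counting orbit elements of degree $<k+t$).} Fix one element $x^aP\in\cO_f$ with $x\nmid P$ and $\deg P$ maximal among such elements. Any other $x^bQ$ in the orbit with $\deg<k+t$ satisfies $f(\alpha)f(\alpha^{-1})=Q(\alpha)Q(\alpha^{-1})$, by Lemma~\ref{lemma:equiv}. By Lemma~\ref{lemma:orbitequivdef}, Claim~2) (with $s=t$, possibly swapping roles so the hypothesis $\deg Q\le\deg P$ holds), the polynomial $Q\rev{Q}$ is determined by $\deg Q$ and by one of at most $2^{t+1}$ polynomials $R$:
\[
x^{\deg P-\deg Q}Q\rev{Q}=P\rev{P}+M_\alpha M_{\alpha^{-1}}R.
\]
So $Q\rev Q$ ranges over at most $(k+t)2^{t+1}$ polynomials $S$, each of degree $\le 2(k+t)$. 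For a given $S$, $Q$ is a divisor of $S$, so the number of choices of $Q$ is at most the divisor count of $S$. The argument of Theorem~\ref{thm:occurofzero} (split irreducible factors at degree $D\approx\frac12\log_2(k+t)$) bounds this by $2^{\frac{2\cdot 2(k+t)}{\log_2(k+t)}(1+o(1))}$. The exponent of $x$ then has at most $k+t$ choices.

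\textbf{Step 3 (assembling).} Multiplying gives a total of at most
\[
2^{t+1}(k+t)^2\,2^{\frac{4(k+t)}{\log_2(k+t)}(1+o(1))}=2^{\frac{5k}{\log_2 k}(1+o(1))},
\]
since $t=k/\log_2 k+O(1)$, so $t=o(k)$.

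The main obstacle is Step 2. Unlike Lemma~\ref{lemma:orbitcount}, the relation $P\rev P\leftrightarrow Q\rev Q$ is no longer an equality, so unique factorization cannot be applied directly. The fix is to enumerate the $2^{t+1}$ admissible $R$'s and the possible degree offsets, and to bound the number of $Q$ with prescribed $Q\rev Q$ crudely by a divisor-count estimate. This crude bound costs a factor of $2$ in the exponent compared with the $0^{k+s}$ case, and that factor is what the constant $5$ absorbs. We would also double-check the degree bookkeeping in Claim~2): both polynomials must have degree $\le k+t$, which holds here.
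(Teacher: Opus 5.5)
Your proof is correct and follows essentially the same route as the paper: the same $2^{\floor{k/\log_2 k}+1}$ patterns with $\deg h_w\le k+\floor{k/\log_2 k}$, the reduction to counting orbit elements of that degree, the enumeration of the admissible $R$ from Lemma~\ref{lemma:orbitequivdef}, Claim 2), and the split of irreducible factors at degree $D\approx\frac12\log_2(k+t)$. The only substantive difference is that you bound the number of $Q$ with prescribed $Q\rev{Q}$ by the full divisor count of $S$, whereas the paper uses unique factorization to get exactly $\prod_{u\in U}(1+b_{u,R})$ with $\sum_{u}\deg(u)\,b_{u,R}\le k+t$, which halves that contribution to the exponent (so the paper's argument actually yields constant $3$, while yours lands exactly on $5$); your extra factor $k+t$ for $\deg Q$ is also unnecessary, since $\deg P-\deg Q$ is the exact power of $x$ dividing $P\rev{P}+M_\alpha M_{\alpha^{-1}}R$, but it is harmless.
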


\begin{IEEEproof}
Once again, any codeword of $\Melas(k)^\perp$ is an LFSR sequence that is the output of a Galois-mode LFSR with connection polynomial $M_\alpha M_{\alpha^{-1}}$ and an initial load $f\in\F_2[x]$, $\deg(f)<2k$, where $\alpha\in\F_{2^k}$ is primitive. The claim trivially holds for the all-zero sequence, so for the rest of the proof we assume $f\neq 0$. 

Let $s_k=\floor{\frac{k}{\log_2 k}}$. Consider $w\in \F_2^k$ such that its corresponding polynomial, $h_w$, as defined in Theorem \ref{thm:characterizationotherpatterns}, has degree $\deg (h_w)\leq k+s_k$. There are $2^{s_k+1}$ such $w$, since the $h_w$ are multiples of $x^k$ and have degree $\leq k+s_k$. By Theorem \ref{thm:characterizationotherpatterns}, the number of occurrences of $w$ is equal to the number of polynomials in the orbit which also belong to the affine space $\langle 1,x,\dots,x^{k-1}\rangle +h_w$, i.e.,
\[\#(\cO_f \cap \langle 1,x,\dots,x^{k-1}\rangle +h_w) \leq \# \set*{ u\in \cO_f : \deg(u) \leq k+s_k}.
\]

Assume $P\in\cO_f$, $\deg(P)\leq k+s_k$, $x\nmid P$, and $P$ is of maximal degree among all valid choices. By Lemma \ref{lemma:orbitequivdef}, Claim 2), any other polynomial in $\cO_f$, $g=x^b Q$ with $x\nmid Q$ and $\deg g\leq k+s_k$, has to satisfy one of the following equations:
    \[
    P(x)\rev{P}(x)+M_\alpha(x)M_{\alpha^{-1}}(x) R(x)=x^{\deg (P)-\deg (Q)} Q(x)\rev{Q}(x) ,
    \]
    with $\deg(R)\leq 2s_k$ and $x^{2(\deg P-k)}R(x^{-1})=R(x)$. Since $P$ is fixed, for each of the possible $2^{s_k+1}$ polynomials $R$ we can compute the left-hand side of the above equation and compute its factorization,
    \[
   x^{a_R}\prod_{h\in H}h^{a_{h,R}}\prod_{u\in U}u^{b_{u,R}}\rev{u}^{c_{u,R}}=   x^{\deg P-\deg Q} Q(x)\rev{Q}(x),
    \]
    where the left-hand side has degree at most $2(k+s_k)$. These equations might or might not have a solution: all the exponents $a_{h,R}$ must be even and $b_{u,R}=c_{u,R}$ has to hold for every $u\in U$ for there to exist a valid $Q$. If there are solutions, by the same argument as in the proof of Lemma \ref{lemma:orbitcount}, the number of possible polynomials $Q$ is $\prod_{u\in U}(1+b_{u,R})$. For each of the $2^{s_k+1}$ equations, the exponent $b$ can take at most $k+s_k+1$ values, so the total number of solutions is upper bounded by
    \[
    \sum_{\substack{R : \deg (R)\leq 2s_k \\x^{2(\deg P-k)}R(x^{-1})=R(x)}} (k+s_k+1)\prod_{u\in U}(1+b_{u,R}).
    \]
    Notice that $\sum_{u\in U}\deg u \cdot b_{u,R}\leq k+s_k$. Following the same proof as in Theorem \ref{thm:occurofzero}, we can give the bound
    \[
    2^{s_k+1}\cdot (k+s_k+1)\cdot 2^{\sqrt{k+s_k}\log_2(k+s_k)}\cdot 2^{2(k+s_k)/(\log_2(k+s_k)-2)}\leq 2^{\frac{5k}{\log_2 k}(1+o(1))}.
    \]
\end{IEEEproof}

We conjecture that a similar bound on the number of occurrences should hold for every pattern $w\in \F_2^k$, and evidence supporting this claim is discussed in Section \ref{sec:connection}.

\begin{conjecture}\label{conj:klogk}
The number of occurrences of any $w\in \F_2^k$ in an arbitrary non-zero codeword of $\Melas(k)^\perp$ is at most $2^{\frac{\ell k}{\log_2 k}(1+o(1))}$ for some constant $\ell$.
\end{conjecture}

It would seem that the techniques we used so far are  incapable of proving this conjecture. While it might be tempting to attribute this to some deficiency in the proposed approach, in Section \ref{sec:connection} we actually show that there really exists a fundamental difference between the behavior of the all-zero pattern $0^k$ and other patterns.

To conclude this section, we show an analogous result to Theorem \ref{thm:occurofzero} for runs of zeros of length $<k$.

\begin{theorem}\label{thm:occurshorterzeros}
Consider any non-zero codeword $c\in\Melas(k)^\perp$. Let $s_k\geq 0$ for each $k$. For the sake of simplicity of notation, we omit the dependency of $s$ on $k$. The number of occurrences of the pattern $0^{k-s}$ in $c$ is upper bounded by $2^{s+\frac{2(k+s)(1+o(1))}{\log_2 (k+s)}}$. Furthermore, there exists a non-zero codeword of $\Melas(k)^\perp$ with at least $2^{\frac{k+s}{2 \log_2 (k+s)}(1-o(1))}$ occurrences of $0^{k-s}$.
\end{theorem}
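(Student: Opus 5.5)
By Theorem~\ref{thm:characterizationzerosdeg}, an occurrence of $0^{k-s}$ at position $j$ in the output of the Galois-mode LFSR with connection polynomial $M_\alpha M_{\alpha^{-1}}$ and initial load $f\neq 0$ corresponds to a polynomial $x^jf \bmod M_\alpha M_{\alpha^{-1}}$ in $\cO_f$ of degree $<k+s$, i.e.\ $\le k+s-1$. So the plan is to count the elements of $\cO_f$ of degree $\le k+s-1$, following the proof of Theorem~\ref{thm:nonzeropatterns} but with $s$ in place of $s_k$. If $\cO_f$ is degenerate, then $\deg P\ge k$ for its members, and I would still argue through Lemma~\ref{lemma:orbitequivdef}, Claim~2), which only uses evaluation at $\alpha$ and $\alpha^{-1}$. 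If it is not clear that this lemma applies to degenerate orbits, I would handle them separately: a degenerate orbit corresponds to a codeword that is a PN sequence (or its reversal), which contains $0^{k-s}$ exactly $2^{s}$ times (or fewer), well within the bound.

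First, fix $P\in\cO_f$ with $x\nmid P$, $\deg P\le k+s-1$, and $\deg P$ maximal. Any other $g=x^bQ\in\cO_f$ with $x\nmid Q$ and $\deg g\le k+s-1$ has $g(\alpha)g(\alpha^{-1})=P(\alpha)P(\alpha^{-1})$. By Claim~2) of Lemma~\ref{lemma:orbitequivdef}, this gives
\[
P\rev{P}+M_\alpha M_{\alpha^{-1}}R=x^{\deg P-\deg Q}Q\rev{Q},
\]
where $R$ ranges over at most $2^{s+1}$ self-reciprocal-type polynomials of degree $\le 2s$.

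Next, for each fixed $R$, the left-hand side is a fixed polynomial of degree $\le 2(k+s)$. By unique factorization, as in Lemma~\ref{lemma:orbitcount}, the number of valid $Q$ is at most $\prod_{u\in U}(1+b_{u,R})$, where $\sum_u \deg u\cdot b_{u,R}\le k+s$. The shift exponent $b$ takes at most $k+s$ values. Splitting the factors $u$ at degree $D=\lfloor\frac12\log_2(k+s)\rfloor$, exactly as in Theorem~\ref{thm:occurofzero}, bounds this product by $2^{\sqrt{k+s}\log_2(k+s)}\cdot 2^{2(k+s)/(\log_2(k+s)-2)}$. Multiplying over the $2^{s+1}$ choices of $R$ and the $k+s$ shifts gives
\[
2^{s+1}(k+s)\,2^{\sqrt{k+s}\log_2(k+s)}\,2^{\frac{2(k+s)}{\log_2(k+s)-2}}=2^{s+\frac{2(k+s)(1+o(1))}{\log_2(k+s)}}.
\]
The step needing care is absorbing $(k+s)$, the extra factor $2$, and $2^{\sqrt{k+s}\log_2(k+s)}$ into the $o(1)$ term. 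This holds since each is $2^{o((k+s)/\log(k+s))}$.

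For existence, I would reuse the construction from Theorem~\ref{thm:occurofzero} with $k+s$ in place of $k-s$. Choose $d$ odd with $2\log_2(k+s)\le d\le 2\log_2(k+s)+2$, and take $r=\lfloor\frac{k-1}{d}\rfloor$ pairs $\{u_i,\rev{u_i}\}$ of non-self-reciprocal irreducibles of degree $d$, keeping the total degree below $k$. The $2^r$ products $\prod_i u_i^{b_i}\rev{u_i}^{1-b_i}$ lie in one non-degenerate orbit by Lemma~\ref{lemma:orbitcount} and all have degree $<k\le k+s$. Their shifts $x^bQ$ with $\deg<k+s$ give further elements. This yields $2^{\frac{k}{2\log_2(k+s)}(1-o(1))}$ occurrences.

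The main obstacle is matching the stated exponent $\frac{k+s}{2\log_2(k+s)}$ when $s$ is large compared to $k$, since the factors must have total degree $<k$ to apply Lemma~\ref{lemma:orbitcount}. For that regime I would allow total degree up to $k+s-1$ and verify orbit membership through Claim~2) with $R=0$. Since $P\rev P=Q\rev Q$ directly gives $P(\alpha)P(\alpha^{-1})=Q(\alpha)Q(\alpha^{-1})$, this keeps all $2^r$ products in one orbit with $r\approx\frac{k+s}{d}$, provided at least one of them is non-degenerate. Any remaining discrepancy at large $s$ I would resolve by restricting to the regime implicitly intended, $s=o(k)$.
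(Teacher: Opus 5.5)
Your proposal follows the paper's proof: the upper bound is exactly the argument of Theorem~\ref{thm:nonzeropatterns}, with the $2^{s+1}$ choices of $R$ kept in the exponent rather than absorbed, and your final existence argument (about $(k+s)/d$ reciprocal pairs of total degree $<k+s$, placed in one orbit via the $R=0$ equation) is precisely the paper's remark. The non-degeneracy you leave as a proviso is automatic, since every factor has degree $d<k$ and so cannot be $M_\alpha$ or $M_{\alpha^{-1}}$; this version therefore covers all $s<k$, and the retreat to $s=o(k)$ is unnecessary.
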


\begin{IEEEproof}
$c$ is an LFSR sequence that is the output of a Galois-mode LFSR with connection polynomial $M_\alpha M_{\alpha^{-1}}$ and an initial load $f\in\F_2[x]$, $\deg(f)<2k$, $f\neq 0$, where $\alpha\in\F_{2^k}$ is primitive. By Theorem \ref{thm:characterizationzerosdeg}, the number of occurrences of $0^{k-s}$ in $c$ is equal to $\#\set{u\in\cO_f : \deg(u)\leq k+s-1}$. The proof is now the same as the proof of Theorem \ref{thm:nonzeropatterns}, but the $2^{s+1}$ is not absorbed in the final bound.

For the existence result, the proof is the same as in Theorem \ref{thm:occurofzero}. (This shows that the equation corresponding to $R(x)=0$ has enough solutions, while disregarding the rest of the equations).
\end{IEEEproof}

As a final corollary for this subsection, we show that Theorem~\ref{thm:occurshorterzeros} implies Conjecture~\ref{conj:klogk} for patterns of low linear complexity.

\begin{corollary}
Let $w=(w_0,\dots,w_{k-1})\in\F_2^k$ be a binary pattern, and assume there exist $0\leq s<k$ and coefficients not all zero $a_0,\dots,a_s\in\F_2$ such that for all $0\leq j< k-s$, $\sum_{i=0}^s a_i w_{j+i} = 0$. Then the number of occurrences of $w$ in any non-zero codeword $c\in\Melas(k)^\perp$ is upper bounded by $2^{s+\frac{2(k+s)(1+o(1))}{\log_2 (k+s)}}$. In particular, for patterns with $s=\cO(\frac{k}{\log_2 k})$, Conjecture~\ref{conj:klogk} holds.
\end{corollary}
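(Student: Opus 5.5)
The idea is to reduce occurrences of $w$ to occurrences of the shorter all-zero run $0^{k-s}$ in a transformed codeword, and then invoke Theorem~\ref{thm:occurshorterzeros}. Let $a(x)=\sum_{i=0}^s a_i x^i\neq 0$. Given $c\in\Melas(k)^\perp$ with $c=(c_j)$, define the filtered sequence $c'_j=\sum_{i=0}^s a_i c_{j+i}$, with indices taken cyclically. Since $\Melas(k)^\perp$ is cyclic, hence closed under shifts and sums, $c'\in\Melas(k)^\perp$. Concretely, $c'$ is again an LFSR sequence with connection polynomial $M_\alpha M_{\alpha^{-1}}$. In trace form, if $c_j=\Tr(\beta\alpha^j+\gamma\alpha^{-j})$, then $c'_j=\Tr(\beta a(\alpha)\alpha^j+\gamma a(\alpha^{-1})\alpha^{-j})$.

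If $w$ occurs at position $j$ of $c$, i.e.\ $(c_j,\dots,c_{j+k-1})=w$, then by hypothesis $c'_{j},\dots,c'_{j+k-s-1}$ all vanish. So $c'$ has a run of $k-s$ zeros at position $j$. Distinct occurrence positions give distinct starting positions, so
\[
N_w(c)\leq N_{0^{k-s}}(c').
\]
Theorem~\ref{thm:occurshorterzeros} then gives the bound $2^{s+\frac{2(k+s)(1+o(1))}{\log_2(k+s)}}$, provided $c'\neq 0$.

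The main obstacle is ensuring $c'\neq 0$. If $c'=0$, the argument fails. We need $\beta a(\alpha)$ and $\gamma a(\alpha^{-1})$ not both zero. This holds unless $a(\alpha)=0$ (and $\beta\neq0$) or $a(\alpha^{-1})=0$ (and $\gamma\ne 0$), which requires $M_\alpha\mid a$ or $M_{\alpha^{-1}}\mid a$, i.e.\ $s\ge k$. Since $s<k$ and $a\neq 0$, neither $a(\alpha)$ nor $a(\alpha^{-1})$ vanishes: $\deg a\leq s<k=\deg M_\alpha$. Hence for nonzero $c$ (so $(\beta,\gamma)\neq(0,0)$) we get $c'\neq0$.

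For the final claim, take $s=O(k/\log_2 k)$. Then $2^s=2^{O(k/\log_2 k)}$, and $\frac{2(k+s)}{\log_2(k+s)}=\frac{2k(1+o(1))}{\log_2 k}$. The whole bound is therefore $2^{\frac{\ell k}{\log_2 k}(1+o(1))}$ for a constant $\ell$, which is Conjecture~\ref{conj:klogk} for these patterns.
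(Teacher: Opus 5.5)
Your proposal is correct and follows the paper's proof. Like the paper, you filter $c$ by $\sum_{i=0}^s a_i E^i$ to get a codeword $c'\in\Melas(k)^\perp$ in which each occurrence of $w$ yields a run $0^{k-s}$, and then apply Theorem~\ref{thm:occurshorterzeros}. Your trace-form computation $c'_j=\Tr(\beta a(\alpha)\alpha^j+\gamma a(\alpha^{-1})\alpha^{-j})$ uses $a(\alpha),a(\alpha^{-1})\neq 0$ (since $\deg a\leq s<k$) and the injectivity of $(\beta,\gamma)\mapsto c$; it simply spells out the paper's one-line claim that $s<k$ forces $c'\neq 0$.
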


\begin{IEEEproof}
Let $c\in\Melas(k)^\perp$ be a non-zero codeword, $c=(c_0,\dots,c_{2^k-2})$. Let $E$ be the cyclic rotation operator, i.e., $E(c)=(c_1,\dots,c_{2^k-2},c_0)$. Note that any occurrence of $w$ in $c$ induces an occurrence of $0^{k-s}$ in $\sum_{i=0}^s a_i E^i(c) = c'$, where $c'\in\Melas(k)^\perp$. Since $s<k$, necessarily $c'\neq 0$. The conclusion follows by Theorem~\ref{thm:occurshorterzeros}.
\end{IEEEproof}

\subsection{Codewords with full $k$-symbol weight}

Now we shift the focus from studying individual codewords in $\Melas(k)^\perp$ and the number of occurrences of $0^k$, to analyzing the number of codewords that have this pattern at least once. This will allow us to give asymptotics on the number of codewords with full $k$-symbol weight, and also give an improved lower bound on the burst-covering radius of Melas codes in the next section.

Recall the definition of $H$ and $U$ from~\eqref{eq:defHU}. Consider the following set of tuples of natural numbers:
\[
S = \set{((a_h)_{h\in H}, (n_u)_{u\in U}): \#\supp(a_h),\#\supp(n_u)<\infty}.
\]
We define the \emph{degree} of an element in $S$ as $\sum_{h\in H}a_h\deg(h)+\sum_{u\in U}n_u\deg(u)$. With each non-zero $f\in\F_2[x]$, we associate an element $\phi(f)\in S$ in the following way: 
\[
f= x^a \prod_{h\in H}h^{a_h}\prod_{u\in U}u^{b_u}\rev{u}^{c_u} \mapsto \phi(f)\eqdef ((a_h)_{h\in H}, (b_u+c_u)_{u\in U}).
\]
Notice that $\deg(\phi(f))+a=\deg(f)$, so that $\deg(\phi(f))\leq \deg(f)$.

\begin{lemma}\label{lemma:phiequality}
Let $\alpha\in\F_{2^k}$ be primitive, $k\geq 3$. Let $f,g\in\F_2[x]$, $\deg(f),\deg(g)<2k$, be such that $f(\alpha),f(\alpha^{-1})\neq 0$. If $\phi(f)=\phi(g)$, then $f\sim g$. Whenever $\deg(f),\deg(g)<k$, the converse is also true.
\end{lemma}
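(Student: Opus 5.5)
The plan is to reduce both directions to Lemma~\ref{lemma:equiv}, using the factorization data recorded by $\phi$. Since $f(\alpha),f(\alpha^{-1})\neq 0$, the orbit $\cO_f$ is non-degenerate. By Lemma~\ref{lemma:equiv} it therefore suffices to show $f(\alpha)f(\alpha^{-1})=g(\alpha)g(\alpha^{-1})$.

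For the forward direction, write
\[
f=x^a\prod_{h\in H}h^{a_h}\prod_{u\in U}u^{b_u}\rev{u}^{c_u},\qquad
g=x^{a'}\prod_{h\in H}h^{a_h}\prod_{u\in U}u^{b'_u}\rev{u}^{c'_u},
\]
with $b_u+c_u=b'_u+c'_u$ for every $u\in U$; the $H$-exponents agree because $\phi(f)=\phi(g)$. Then consider the quantity $q(x)\eqdef x^{\deg q}q(x)q(x^{-1})=q\rev{q}$ for each irreducible factor $q$, i.e., evaluate $q(\alpha)q(\alpha^{-1})$ for each factor.
\begin{itemize}
\item For $x$, this gives $\alpha\cdot\alpha^{-1}=1$, so the powers of $x$ contribute nothing.
\item For $u\in U$, since $\rev{u}(x)=x^{\deg u}u(x^{-1})$, we get $u(\alpha^{-1})=\alpha^{-\deg u}\rev{u}(\alpha)$ and $\rev{u}(\alpha^{-1})=\alpha^{-\deg u}u(\alpha)$. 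Hence
\[
u(\alpha)u(\alpha^{-1})=\rev{u}(\alpha)\rev{u}(\alpha^{-1})=\alpha^{-\deg u}u(\alpha)\rev{u}(\alpha).
\]
So $u^{b}\rev{u}^{c}$ contributes $\bigl(\alpha^{-\deg u}u(\alpha)\rev{u}(\alpha)\bigr)^{b+c}$, which depends only on $b+c$.
\end{itemize}
The map $F\mapsto F(\alpha)F(\alpha^{-1})$ is multiplicative, so $f(\alpha)f(\alpha^{-1})=g(\alpha)g(\alpha^{-1})$. The only thing to check is that $g$ is also nondegenerate. This is immediate: the irreducible factors of $g$, apart from $x$, are those of $f$ or reciprocals of them, so $g(\alpha)\neq0$ and $g(\alpha^{-1})\neq0$. (Lemma~\ref{lemma:equiv} in fact already handles this.) No degree hypothesis beyond $\deg<2k$ is needed here, because we only use the evaluation criterion.

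For the converse with $\deg f,\deg g<k$, write $f=x^aP$ and $g=x^bQ$ with $x\nmid P,Q$. Since $f\sim g$, the first part of Lemma~\ref{lemma:equiv} gives $f(\alpha)f(\alpha^{-1})=g(\alpha)g(\alpha^{-1})$. Claim~1) of Lemma~\ref{lemma:orbitequivdef} then gives $P\rev{P}=Q\rev{Q}$. Now factor both sides by unique factorization, as in the proof of Lemma~\ref{lemma:orbitcount}:
\begin{itemize}
\item For $h\in H$, the exponent of $h$ in $P\rev{P}$ is $2a_h$. So $a_h(P)=a_h(Q)$.
\item For $u\in U$, the exponent of $u$ in $P\rev{P}$ is $b_u+c_u$, since the reversal swaps $u$ and $\rev{u}$. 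So $b_u+c_u$ agrees for $P$ and $Q$.
\end{itemize}
Because $\phi$ ignores the power of $x$, $\phi(f)=\phi(P)=\phi(Q)=\phi(g)$.

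The main point needing care is the bookkeeping of reciprocals: we need $\rev{(PQ)}=\rev{P}\rev{Q}$ and $\rev{\rev{u}}=u$, which use $x\nmid u$. The $H$-factors need a similar check: $h(\alpha^{-1})=\alpha^{-\deg h}h(\alpha)$, so their contributions match trivially. This includes $x+1$, which lies in $H$. Apart from this, everything reduces to earlier lemmas.
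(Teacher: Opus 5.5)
Your proof is correct and follows the paper's argument. The forward direction is the evaluated-at-$\alpha$ version of the paper's observation that $f(x)f(x^{-1})=x^{-\deg\phi(f)}\prod_{h\in H}h^{2a_h}\prod_{u\in U}(u\rev{u})^{n_u}$ is determined by $\phi(f)$, followed by Lemma~\ref{lemma:equiv}, and the converse goes through Claim~1) of Lemma~\ref{lemma:orbitequivdef} and unique factorization exactly as the paper does. The only flaw is cosmetic: the displayed ``definition'' $q(x)\eqdef x^{\deg q}q(x)q(x^{-1})$ is circular, but your ``i.e.'' makes clear that the intended quantity is $q(\alpha)q(\alpha^{-1})$.
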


\begin{IEEEproof}
Let $\phi(f)=((a_h)_{h\in H}, (n_u)_{\set{u,\rev{u}}\in U})$. Notice that $f(x)f(x^{-1})=x^{-\deg(\phi(f))}\prod_{h\in H}h^{2a_h}\prod_{u\in U}(u\rev{u})^{n_u}$ is completely determined by $\phi(f)$, so $f(x)f(x^{-1})=g(x)g(x^{-1})$, which clearly implies that $f(\alpha)f(\alpha^{-1})=g(\alpha)g(\alpha^{-1})$, and thus $f\sim g$, by Lemma~\ref{lemma:equiv}.

If $\deg(f),\deg(g)<k$, and $f\sim g$ belong to a non-degenerate orbit, then by Lemma \ref{lemma:orbitequivdef}, $f(x)f(x^{-1})=g(x)g(x^{-1})$, and thus $\phi(f)=\phi(g)$.
\end{IEEEproof}

Define
\begin{align*}
\sigma(d)&\eqdef \#\set*{s\in S: \deg(s)=d},\\
\tau(d) &\eqdef \#\set*{ h\in H : \deg(h)=d}.
\end{align*}
From Lemma \ref{lemma:phiequality} we immediately obtain the following theorem.
\begin{theorem}\label{thm:orbitupbound}
Let $k\geq 3$, and define $\theta(D)$ to be the number of distinct non-degenerate orbits of $\sim$ containing a polynomial of degree $<D$. Then, 
\[
\theta(D)\leq \sum_{d< D}\sigma(d).
\]
    Furthermore, if $D\leq k$, then equality holds.
\end{theorem}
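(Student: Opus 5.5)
The plan is to build a map from the set of non-degenerate orbits that contain a polynomial of degree $<D$ into the set $\{s\in S:\deg(s)<D\}$, and then use Lemma~\ref{lemma:phiequality} to show the map is injective, and bijective when $D\leq k$.

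\textbf{Upper bound.} For each non-degenerate orbit $\cO$ with some member of degree $<D$, choose such a member $f_{\cO}$ and set $\Phi(\cO)\eqdef\phi(f_{\cO})$. Since $\deg(\phi(f))\leq\deg(f)$, we get $\deg(\Phi(\cO))<D$, so $\Phi$ lands in $\{s\in S:\deg(s)<D\}$.

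For injectivity, suppose $\Phi(\cO)=\Phi(\cO')$. Both $f_{\cO}$ and $f_{\cO'}$ have degree $<2k$, because they are representatives in the domain of $\sim$. Since $\cO$ is non-degenerate, $f_{\cO}(\alpha),f_{\cO}(\alpha^{-1})\neq 0$. The first part of Lemma~\ref{lemma:phiequality} then gives $f_{\cO}\sim f_{\cO'}$, so $\cO=\cO'$. Summing over degrees gives $\theta(D)\leq\sum_{d<D}\sigma(d)$.

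\textbf{Equality when $D\leq k$.} It remains to show $\Phi$ is surjective onto elements of degree $<D$, and this is the only step needing care. Given $s=((a_h),(n_u))$ with $\deg(s)<D\leq k$, realize it by the polynomial $f=\prod_h h^{a_h}\prod_u u^{n_u}$, taking any representative of each pair in $U$. Then $\phi(f)=s$ and $\deg f=\deg s<k$.

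Next, check that $f$ lies in a non-degenerate orbit. Because $\deg f<k$ and $M_\alpha$, $M_{\alpha^{-1}}$ are irreducible of degree $k$, neither $\alpha$ nor $\alpha^{-1}$ is a root of $f$. So $\cO_f$ is non-degenerate and contains a polynomial of degree $<D$.

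It is not yet clear that $\Phi(\cO_f)=s$, since $\Phi$ used an arbitrarily chosen representative $f_{\cO_f}$, possibly different from $f$. Here the converse part of Lemma~\ref{lemma:phiequality} applies: both $f$ and $f_{\cO_f}$ have degree $<D\leq k$ and satisfy $f\sim f_{\cO_f}$, so $\phi(f_{\cO_f})=\phi(f)=s$.

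Hence $\Phi$ is a bijection onto $\{s\in S:\deg(s)<D\}$, which gives equality. Equivalently, when $D\leq k$, $\phi$ is constant on the degree-$<D$ members of each such orbit, so $\Phi$ does not depend on the choice of representatives.
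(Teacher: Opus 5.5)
Your proof is correct and follows essentially the same route as the paper. You map each orbit to the value of $\phi$ at one of its members of degree $<D$, get injectivity from the first part of Lemma~\ref{lemma:phiequality}, and, for $D\leq k$, get surjectivity by realizing $s$ as $\prod_h h^{a_h}\prod_u u^{n_u}$. Your use of the converse part of the lemma to show the chosen representative does not matter is exactly what the paper compresses into the phrase ``well defined.''
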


\begin{IEEEproof}
    Given an orbit which contains an element $f$ of degree $< D$, we can associate to it the element $\phi(f)\in S$ which has degree $\deg(\phi(f))\leq \deg(f)< D$. Notice that there might be more than one choice for each such orbit, but given two distinct non-degenerate orbits, their associated elements in $S$ will be different thanks to Lemma \ref{lemma:phiequality}. This proves the desired inequality.

    If $D\leq k$, define the mapping $\cO_f\mapsto \phi(f)$. By Lemma \ref{lemma:phiequality}, it is well defined and injective. Given $s=((a_h)_{h\in H}, (n_u)_{u\in U})\in S$ of degree $d<D$,  the polynomial $f=\prod_{h\in H}h^{a_h}\prod_{u\in U}u^{n_u}$ has degree $d<D$, $\phi(f)=s$, and the orbit of $f$ is non-degenerate since $\deg(f)<k$. This shows that the map is also surjective.   
\end{IEEEproof}

Define the following three generating functions
\[
\cH(z)= \prod_{h\in H}\frac{1}{1-z^{\deg h}}, \qquad \cU(z)=\prod_{u\in U}\frac{1}{1-z^{\deg u}}, \qquad \Sigma(z)=\sum_{d\geq 0} \sigma(d) z^d.
\]

\begin{lemma}\label{lemma:genfuncidentities}
    The previous generating functions satisfy,
    \begin{itemize}
        \item [a)]$\cH(z)\cU(z)=\Sigma(z)$,
        \item [b)]$\cH(z)\cU^2(z)=(1-z)/(1-2z)$,
        \item [c)] $\cH(z)$ is analytic in $|z|<1/\sqrt{2}$ and has a singularity in $z=1/\sqrt{2}$.
    \end{itemize}
       
\end{lemma}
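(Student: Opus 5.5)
Part a) is a direct generating-function count. An element of $S$ is a finitely supported choice of exponents $a_h\geq 0$ for $h\in H$ and $n_u\geq 0$ for $u\in U$, and its degree is additive. Hence $\Sigma(z)=\prod_{h\in H}\sum_{a\geq 0}z^{a\deg h}\prod_{u\in U}\sum_{n\geq 0}z^{n\deg u}=\cH(z)\cU(z)$. This is an identity of formal power series; it holds because only finitely many $h,u$ have any given degree, so each coefficient is a finite computation.

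For b), I will count all monic polynomials in $\F_2[x]$ that are not divisible by $x$. Every such polynomial factors uniquely as a product of irreducibles other than $x$. Each of these irreducibles is either self-reciprocal (it lies in $H$, and this includes $x+1$) or belongs to a reciprocal pair $\set{u,\rev{u}}\in U$ of two distinct irreducibles of the same degree. One point needs checking: $U$ requires $\deg u>1$. The only degree-$1$ irreducibles are $x$, which is excluded, and $x+1$, which is self-reciprocal, so no irreducible is missed. Unique factorization then gives
\[\sum_{x\nmid f}z^{\deg f}=\prod_{h\in H}\frac{1}{1-z^{\deg h}}\prod_{u\in U}\frac{1}{(1-z^{\deg u})^2}=\cH(z)\cU^2(z).\]
On the other side, there are $2^d$ polynomials of degree $d$, and for $d\geq 1$ exactly half of them have nonzero constant term. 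So the left-hand side is $1+\sum_{d\geq1}2^{d-1}z^d=1+\frac{z}{1-2z}=\frac{1-z}{1-2z}$.

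For c), I would use the identity $\frac{\cH(z)^2}{\cH(z^2)}\cdot\cH(z^2)=\cH(z)^2$ together with b), and it is easier to work through $\cU$. Squaring b) gives $\cH(z)^2\cU(z)^4=\bigl(\frac{1-z}{1-2z}\bigr)^2$. I need a second relation, and I would get it by counting self-reciprocal polynomials. A polynomial $f$ with $x\nmid f$ is self-reciprocal up to the obvious normalization exactly when $\rev{u}$ appears with the same exponent as $u$ for every pair in $U$. Each factor $h$ then contributes $\frac{1}{1-z^{\deg h}}$, while each pair $u\rev{u}$ of degree $2\deg u$ contributes $\frac{1}{1-z^{2\deg u}}$. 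So $\cH(z)\cU(z^2)$ is the generating function of self-reciprocal polynomials not divisible by $x$.

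A self-reciprocal polynomial of degree $d$ is determined by roughly half its coefficients, so there are about $2^{\lceil d/2\rceil}$ of them, and their generating function is rational with its poles at $z^2=1/2$. Explicitly, the count is $2^{\floor{(d+1)/2}-1}$ for $d\geq1$ together with the leading and constant terms equal to $1$, which gives something like $\frac{(1+z)(1-z^2)\cdots}{1-2z^2}$; I will only need its form.

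Combining the two relations: b) gives $\cU(z)^2=\frac{1-z}{(1-2z)\cH(z)}$, so $\cU(z^2)^2=\frac{1-z^2}{(1-2z^2)\cH(z^2)}$. Substituting into the square of the self-reciprocal relation gives
\[\frac{\cH(z)^2}{\cH(z^2)}=R(z)^2\,\frac{1-2z^2}{1-z^2},\]
where $R(z)$ is the rational self-reciprocal generating function. This expresses $\cH(z)^2$ as $\cH(z^2)$ times an explicit rational function.

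The coefficients of $\cH$ are dominated by those of $\frac{1-z}{1-2z}$, since $\cU$ has nonnegative coefficients and constant term $1$. So $\cH$ is analytic for $|z|<1/2$, and $\cH(z^2)$ is analytic and nonzero (a convergent product) for $|z|<1/\sqrt2$. The right-hand side $R(z)^2\frac{1-2z^2}{1-z^2}$ therefore controls everything on that disk.

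The main obstacle is to show that this rational function has no zeros or poles inside $|z|<1/\sqrt2$, apart from the fact that $\cH$, having nonnegative coefficients, can be taken as the analytic square root. Because $R$ has a double pole at $z^2=1/2$ that dominates the simple zero of $1-2z^2$, $\cH(z)^2$ has a pole at $z=1/\sqrt2$. By Pringsheim's theorem applied to the nonnegative coefficients of $\cH$, the radius of convergence of $\cH$ is exactly $1/\sqrt2$, which gives the singularity there.
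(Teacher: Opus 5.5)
Parts a) and b) are correct and agree with the paper. In b) you discard $x$ instead of adjoining its factor $\frac{1}{1-z}$, which is the same count.

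\textbf{Part c).} Your route differs from the paper's. The paper bounds $\log\cH$ termwise using $\tau(d)\leq 2^{d/2+1}$, then invokes Carlitz's formula to get divergence at $1/\sqrt2$. You instead derive a functional equation from self-reciprocal polynomials. The route works, but the proposal stops exactly at the step you call the main obstacle, and the data you give for it is wrong.

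A self-reciprocal polynomial of degree $d$ has $f_0=f_d=1$ and $f_i=f_{d-i}$. So there are $2^{\floor{d/2}}$ of them, for example $x^2+1$ and $x^2+x+1$ when $d=2$. Your count $2^{\floor{(d+1)/2}-1}$ loses the middle coefficient when $d$ is even. Hence $R(z)=\sum_{d\geq 0}2^{\floor{d/2}}z^d=\frac{1+z}{1-2z^2}$. This has a \emph{simple} pole at $1/\sqrt2$; it is $R^2$ that has the double pole. Your relation becomes
\[
\cH(z)^2=\cH(z^2)\,Q(z),\qquad Q(z)\eqdef\frac{1+z}{(1-z)(1-2z^2)},
\]
which checks against $\cH(z)=1+z+2z^2+2z^3+4z^4+\cdots$. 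The only zero of $Q$ is $-1$ and its poles are $1,\pm1/\sqrt2$, so your obstacle disappears.

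The explicit form does matter. Your count would give $R=\frac{1+z-z^2}{1-2z^2}$. That yields a false identity forcing a zero of $\cH$ at $(1-\sqrt5)/2$, inside the disk. It would contradict the zero-freeness of $(1-z)\cH(z)$ that is used to define $\cA$ in Theorem~\ref{thm:sigmaasymptotics}.

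\textbf{Two further repairs.} First, the square-root step is not justified by nonnegativity of coefficients. The correct justification: $\cH(z^2)Q(z)$ is holomorphic and zero-free on the simply connected disk $|z|<1/\sqrt2$. You obtained this correctly for $\cH(z^2)$ from the domination by $\frac{1-z}{1-2z}$ and the absolutely convergent product. So $\cH(z^2)Q(z)$ has a holomorphic square root $G$ with $G(0)=1$. A formal power series with constant term $1$ has a unique square root with constant term $1$, so the Taylor series of $G$ is $\cH$. This is the same argument the paper uses for $\cA$.

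Second, for the singularity, nonvanishing of $\cH(z^2)$ on the open disk says nothing at $z=1/\sqrt2$. Instead use $\cH(r^2)\geq 1$ for $0\leq r<1/\sqrt2$. Then $\cH(r)^2\geq Q(r)\to\infty$ as $r\to 1/\sqrt2^-$, which makes $1/\sqrt2$ singular directly, with no need for Pringsheim. Since $\cH(1/2)$ is now finite, in fact $\cH(z)\sim c\,(1-\sqrt2 z)^{-1/2}$ there.

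\textbf{Comparison with the paper.} With these fixes your proof is more self-contained than the paper's. It uses only the self-reciprocal count and b), and avoids Carlitz's formula entirely. It also gives zero-freeness of $\cH$ on the disk and the exact nature of the singularity. Iterating the functional equation gives $\cH(z)=\prod_{j\geq 0}Q(z^{2^j})^{2^{-j-1}}$, which could replace the $\tau(d)$-series in the numerical evaluation of $\cA(1/2)$.
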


\begin{IEEEproof}
    See the Appendix. 
\end{IEEEproof}

For $\lambda\in \C$ we can define the formal exponentiation by:
\[
    (1+F(z))^{\lambda}=\sum_{n\geq 0}\binom{\lambda}{n}F(z)^n
    \]
    (see for example \cite[Example 1.1.10]{Sta11}), which allows us to define the formal square root. Define
\[\cA(z)\eqdef \sqrt{(1-z)\cH(z)}=\sum_{n\geq 0}a_n z^n.\]
    
\begin{theorem}\label{thm:genfuncbigequality}
We have
\[
    \Sigma(z)= \sqrt{\frac{(1-z)\cH(z)}{1-2z}}= (1-2z)^{-1/2}\cA(z),
\]    
where we choose the branch of the square root such that $\Sigma(0)=1$.
\end{theorem}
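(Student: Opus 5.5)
Our plan is to combine parts a) and b) of Lemma~\ref{lemma:genfuncidentities} purely algebraically, inside the ring of formal power series with constant term $1$. From a) we have $\Sigma(z)=\cH(z)\cU(z)$. Squaring gives
\[
\Sigma(z)^2=\cH(z)^2\cU(z)^2=\cH(z)\cdot\bigl(\cH(z)\cU(z)^2\bigr).
\]
Substituting b) then yields
\[
\Sigma(z)^2=\cH(z)\cdot\frac{1-z}{1-2z}=\frac{(1-z)\cH(z)}{1-2z}.
\]
All series here are legitimate formal power series with constant term $1$. Indeed, $\cH$ and $\cU$ are infinite products of factors $1/(1-z^{\deg})$ with $\deg\geq 1$, and only finitely many factors have any given degree. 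Hence the products converge formally, and their constant term is $1$.

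Next we extract the square root. In $\C[[z]]$, a series $F$ with $F(0)=1$ has exactly two square roots, $\pm G$, where $G$ is the unique root with $G(0)=1$. This $G$ coincides with the formal exponentiation $(1+(F-1))^{1/2}$ defined via the binomial series. Uniqueness follows by solving $G^2=F$ coefficient by coefficient: the coefficient of $z^n$ gives $2g_n+(\text{terms in }g_1,\dots,g_{n-1})=f_n$, which determines $g_n$ once $g_0$ is fixed. Since $\Sigma(0)=\sigma(0)=1$, where the only element of degree $0$ in $S$ is the zero tuple, $\Sigma$ is the branch with constant term $1$. This proves the first equality.

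For the second equality, we again use uniqueness of square roots with constant term $1$. The product $(1-2z)^{-1/2}\cA(z)$ has constant term $1\cdot a_0=1$, since $(1-z)\cH(z)$ has constant term $1$. Its square is $(1-2z)^{-1}(1-z)\cH(z)$. This uses the multiplicativity of formal powers, $(PQ)^{1/2}=P^{1/2}Q^{1/2}$ for $P,Q$ with constant term $1$, which is itself a consequence of the same uniqueness argument. Hence this product equals $\Sigma(z)$.

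The only potential subtlety is the well-definedness of the formal manipulations, namely the infinite products and the branch choice, and the uniqueness argument above settles it. The substantive content lies entirely in Lemma~\ref{lemma:genfuncidentities}, which we take as given.
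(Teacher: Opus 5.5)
Your proposal is correct and follows the paper's route: the paper's proof is just ``immediate from Lemma~\ref{lemma:genfuncidentities}''. You square $\Sigma=\cH\cU$, substitute part b), and select the square root with constant term $\Sigma(0)=1$. Your uniqueness argument for formal square roots with constant term $1$ simply makes explicit the branch choice and the second equality that the paper leaves implicit.
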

\begin{IEEEproof}
The proof is immediate from Lemma \ref{lemma:genfuncidentities}.
\end{IEEEproof}

\begin{theorem}\label{thm:sigmaasymptotics}
\[
\sigma(d)=\cA(1/2)\frac{2^d}{\sqrt{\pi d}}(1+\cO(1/d))
\]    
\end{theorem}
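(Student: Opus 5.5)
We will obtain the asymptotics by singularity analysis (Darboux's method) applied to the factorization of Theorem~\ref{thm:genfuncbigequality}, $\Sigma(z)=(1-2z)^{-1/2}\cA(z)$. The idea is that $(1-2z)^{-1/2}$ carries the dominant singularity at $z=1/2$, while $\cA(z)$ is analytic on a strictly larger disc. Hence $\sigma(d)$ should behave like $\cA(1/2)$ times the coefficient of $z^d$ in $(1-2z)^{-1/2}$, which is $\binom{2d}{d}4^{-d}2^d=\frac{2^d}{\sqrt{\pi d}}(1+\cO(1/d))$.

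\textbf{Step 1: analyticity of $\cA$.} By Lemma~\ref{lemma:genfuncidentities}(c), $\cH$ is analytic in $|z|<1/\sqrt{2}$. It is also non-vanishing there: it is a convergent infinite product of factors $(1-z^{\deg h})^{-1}$, and convergence follows from $\tau(d)\leq 2^{d/2}$, which holds because a self-reciprocal polynomial is determined by half its coefficients. So $(1-z)\cH(z)$ is analytic and zero-free on the simply connected disc $|z|<1/\sqrt2$ (the factor $1-z$ vanishes only at $z=1$). Therefore an analytic square root exists on this disc, and with the normalization $\cA(0)=1$ it coincides with the formal square root. Fix $r$ with $1/2<r<1/\sqrt2$. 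Cauchy's estimates then give $|a_n|\leq C r^{-n}$, and in particular $\sum_n a_n 2^{-n}$ converges to $\cA(1/2)$.

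\textbf{Step 2: convolution.} Write $c_m\eqdef\binom{2m}{m}4^{-m}2^m$, the coefficients of $(1-2z)^{-1/2}$, so that $\sigma(d)=\sum_{n=0}^d a_n c_{d-n}$. Stirling's formula gives $c_m=\frac{2^m}{\sqrt{\pi m}}(1+\cO(1/m))$ for $m\geq1$, and trivially $c_m\leq 2^m$ for all $m$. We split the sum at $n=d/2$.

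For $n\leq d/2$, we use $c_{d-n}=\frac{2^{d-n}}{\sqrt{\pi d}}\sqrt{\frac{d}{d-n}}(1+\cO(1/d))$ and expand $\sqrt{d/(d-n)}=1+\cO(n/d)$. Since $\sum_n n|a_n|2^{-n}<\infty$, this range contributes
\[
\frac{2^d}{\sqrt{\pi d}}\Bigl(\sum_{n\leq d/2}a_n2^{-n}+\cO(1/d)\Bigr).
\]
Replacing the partial sum by $\cA(1/2)$ costs $\cO((2r)^{-d/2})$, which is exponentially small.

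For $n>d/2$, the trivial bound gives
\[
\sum_{n>d/2}|a_n|2^{d-n}\leq C\,2^d\sum_{n>d/2}(2r)^{-n}=\cO\bigl(2^d(2r)^{-d/2}\bigr),
\]
again exponentially smaller than $2^d/\sqrt d$.

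Combining both ranges gives $\sigma(d)=\cA(1/2)\frac{2^d}{\sqrt{\pi d}}(1+\cO(1/d))$, provided $\cA(1/2)\neq0$. This holds because $\cA(1/2)^2=\frac12\cH(1/2)>0$, as $\cH(1/2)$ is a product of factors greater than $1$. Note that dividing the additive $\cO(2^d/d^{3/2})$ error by $\cA(1/2)2^d/\sqrt{\pi d}$ requires exactly this non-vanishing.

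\textbf{Main obstacle.} The delicate step is Step 1: certifying that $\cA$ is analytic on a disc of radius strictly greater than $1/2$. This requires both the analyticity of $\cH$ up to radius $1/\sqrt2$ (Lemma~\ref{lemma:genfuncidentities}(c)) and its non-vanishing there, so that the square-root branch is well defined. If the non-vanishing argument via the product turns out to be awkward, the fallback is to take logarithms. One writes $\log\cH(z)=\sum_{h}\sum_{j\geq1}z^{j\deg h}/j$, whose coefficients are bounded by $\cO(d\,2^{d/2})$. Then one defines $\cA=\exp\bigl(\tfrac12(\log(1-z)+\log\cH(z))\bigr)$ directly, which is analytic for $|z|<1/\sqrt2$.
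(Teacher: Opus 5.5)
Your proof is correct and follows essentially the same route as the paper: analyticity of $\cA$ on $|z|<1/\sqrt{2}$, Cauchy estimates $|a_n|\leq C r^{-n}$ for some $1/2<r<1/\sqrt{2}$, the convolution $\sigma(d)=\sum_j a_j b_{d-j}$ split at $d/2$, and the central binomial asymptotics. The differences are minor: you obtain non-vanishing of $\cH$ from the absolutely convergent product, where the paper uses the logarithm. You also check explicitly that $\cA(1/2)\neq 0$, which is needed to turn the additive $\cO(2^d d^{-3/2})$ error into the relative $\cO(1/d)$ and which the paper leaves implicit.
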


\begin{IEEEproof}
   See the Appendix.
\end{IEEEproof}

\begin{theorem}\label{thm:sigmasumestimate}
    \[
    \sum_{d<D} \sigma(d)=\cA(1/2)\frac{2^D}{\sqrt{\pi D}}(1+\cO(1/D)).
    \]
\end{theorem}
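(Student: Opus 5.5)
The plan is to sum the pointwise asymptotic of Theorem~\ref{thm:sigmaasymptotics} over $d<D$. The sum is dominated by the last few terms, since the $2^d$ growth makes the tail a geometric series. An alternative route goes through generating functions: $\sum_{d<D}\sigma(d)$ is the coefficient of $z^{D-1}$ in $\Sigma(z)/(1-z)$. By Theorem~\ref{thm:genfuncbigequality} this function is $(1-2z)^{-1/2}\cA(z)/(1-z)$, and a singularity analysis at $z=1/2$ gives the leading constant $\cA(1/2)/(1-1/2)$ multiplying $[z^{D-1}](1-2z)^{-1/2}\sim 2^{D-1}/\sqrt{\pi D}$, i.e.\ $\cA(1/2)2^D/\sqrt{\pi D}$. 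I will carry out the elementary summation, since it only uses Theorem~\ref{thm:sigmaasymptotics}.

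First fix a cutoff $D_0=D-\lceil 2\log_2 D\rceil$ and split the sum into $d<D_0$ and $D_0\le d<D$. For the head, Theorem~\ref{thm:sigmaasymptotics} gives $\sigma(d)\le C2^d$ for all $d$. Hence $\sum_{d<D_0}\sigma(d)\le C2^{D_0}\le C2^D/D^2$, which is $\cO(1/D)$ relative to the main term $2^D/\sqrt{\pi D}$ since $D^{-2}\le D^{-3/2}$.

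For the tail, write $d=D-j$ with $1\le j\le \lceil 2\log_2 D\rceil$, so $j=\cO(\log D)$. Then
\[
\sigma(D-j)=\cA(1/2)\frac{2^{D-j}}{\sqrt{\pi (D-j)}}(1+\cO(1/D)).
\]
Moreover $\sqrt{D/(D-j)}=1+\cO(j/D)$, where the constant in the $\cO$ is uniform. Summing over $j$ gives
\[
\cA(1/2)\frac{2^D}{\sqrt{\pi D}}\sum_{j\ge1}2^{-j}\bigl(1+\cO(j/D)\bigr),
\]
up to the truncated tail $\sum_{j>2\log_2 D}2^{-j}=\cO(1/D^2)$. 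Since $\sum_{j\geq 1}2^{-j}=1$ and $\sum_j j2^{-j}<\infty$, the bracket equals $1+\cO(1/D)$, which gives the claim.

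The main obstacle is the uniformity of the error terms. The $\cO(1/d)$ in Theorem~\ref{thm:sigmaasymptotics} must be used as a single constant valid for all $d\ge1$, which holds because finitely many small $d$ can be absorbed. One must also check that the weighted errors $j2^{-j}/D$ sum to $\cO(1/D)$ rather than picking up a $\log D$ factor; the geometric weights handle this. One final point: $\cA(1/2)\neq0$ is implicitly needed for the relative error to make sense, and I would take it from the positivity of $\sigma(d)$ together with Theorem~\ref{thm:sigmaasymptotics}.
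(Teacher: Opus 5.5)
Your proposal is correct and follows essentially the paper's argument: substitute Theorem~\ref{thm:sigmaasymptotics}, reindex from the top with $j=D-d$ so the dominant part becomes $\cA(1/2)\frac{2^D}{\sqrt{\pi D}}\sum_{j}2^{-j}\sqrt{D/(D-j)}\,(1+\cO(1/(D-j)))$, note that this bracket equals $1+\cO(1/D)$, and show the remaining range is negligible. The only difference is cosmetic: you cut at $D-\lceil 2\log_2 D\rceil$ and bound the lower range crudely by $C2^{D_0}\leq C2^D/D^2$, whereas the paper cuts at $j=D/2$ and bounds that range by $C\sqrt{D}\sum_{j\geq D/2}2^{-j}$.
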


\begin{IEEEproof}
   See the Appendix.
\end{IEEEproof}

\begin{theorem}\label{thm:notfullsymbolweight}
    Let $s\geq 0$ such that $k-s\rightarrow \infty$. The number of codewords in $\Melas(k)^{\perp}$ which do not have full $(k+s)$-symbol weight is
    \[
   \cA(1/2)\frac{2^{2k-s}}{\sqrt{\pi (k-s)}}\parenv*{1+\cO\parenv*{\frac{1}{k-s}}}.
    \]    
    In particular, the number $A^{(k+s)}_{2^k-1}$ of full $(k+s)$-symbol weight codewords satisfies 
    \[
    \frac{A^{(k+s)}_{2^k-1}}{2^{2k}}=1-\cO\parenv*{\frac{1}{2^s\sqrt{k-s}}}\longrightarrow 1.
    \]
\end{theorem}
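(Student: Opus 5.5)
The plan is to reduce to counting orbits and then apply Theorem~\ref{thm:sigmasumestimate}. A nonzero codeword $c\in\Melas(k)^\perp$ fails to have full $(k+s)$-symbol weight exactly when $0^{k+s}$ occurs in it. Let $f\neq 0$ be the initial load of the corresponding Galois-mode LFSR with connection polynomial $M_\alpha M_{\alpha^{-1}}$. By Theorem~\ref{thm:characterizationzerosdeg}, $0^{k+s}$ occurs in $c$ if and only if $\cO_f$ contains a polynomial of degree $<2k-(k+s)=k-s$.

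Any nonzero polynomial of degree $<k$ is divisible by neither $M_\alpha$ nor $M_{\alpha^{-1}}$. Hence such an orbit is non-degenerate, and the degenerate orbits never contribute. The zero codeword does not have full weight, but it adds only $1$, which is absorbed in the error term.

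By Remark~\ref{rem:LFSRpolycorrespondence}, non-degenerate orbits correspond bijectively to nonzero codewords up to cyclic shift. By Lemma~\ref{lem:orbitsize}, each such orbit has size $2^k-1$. Every cyclic shift of $c$ is again a codeword, and it contains $0^{k+s}$ if and only if $c$ does. Each orbit therefore accounts for exactly $2^k-1$ distinct codewords, all of them of non-full weight when the orbit qualifies.

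I would also check that the $2^k-1$ shifts are distinct. A codeword with period $p<2^k-1$ would give $x^pf\equiv f$, contradicting the orbit size. The count of non-full-weight nonzero codewords is thus $(2^k-1)\,\theta(k-s)$.

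Since $k-s\leq k$, the equality case of Theorem~\ref{thm:orbitupbound} gives $\theta(k-s)=\sum_{d<k-s}\sigma(d)$. Theorem~\ref{thm:sigmasumestimate} with $D=k-s\to\infty$ then evaluates this sum as $\cA(1/2)\frac{2^{k-s}}{\sqrt{\pi(k-s)}}(1+\cO(1/(k-s)))$. Multiplying by $2^k-1=2^k(1+\cO(2^{-k}))$ and adding $1$ for the zero codeword gives the stated asymptotic. Both corrections are $\cO(1/(k-s))$ relative errors, since $2^{-k}\leq 1/(k-s)$ and the $+1$ is negligible against the main term.

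For the second claim, the number of codewords is $\#\Melas(k)^\perp=2^{2k}$. Subtracting the count above and dividing by $2^{2k}$ gives $1-\cO(2^{-s}(k-s)^{-1/2})$, which tends to $1$ because $k-s\to\infty$.

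The main thing to get right is not an obstacle but the bookkeeping in the bijection. One must check that orbits of qualifying polynomials correspond exactly to full shift-classes of codewords containing the pattern, that no two orbits give the same codeword, and that the cyclic wraparound in the definition of $\pi_b$ matches the periodic LFSR output. The wraparound matches because the LFSR sequence is periodic with period $2^k-1$, so the cyclic windows coincide with the windows of the infinite sequence. The analytic content is entirely carried by Theorem~\ref{thm:sigmasumestimate}.
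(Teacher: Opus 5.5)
Your proposal is correct and follows the paper's proof. Both reduce via Theorem~\ref{thm:characterizationzerosdeg} to non-degenerate orbits containing a polynomial of degree $<k-s$, count these with the equality case of Theorem~\ref{thm:orbitupbound} and Theorem~\ref{thm:sigmasumestimate}, and multiply by the $2^k-1$ distinct rotations per orbit. Your extra bookkeeping fills in details the paper leaves implicit: the zero codeword, distinctness of the shifts, cyclic windows, and absorbing the factor $1-2^{-k}$ into the $\cO(1/(k-s))$ term.
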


\begin{IEEEproof}
    By Theorem \ref{thm:characterizationzerosdeg}, a codeword in $\Melas(k)^{\perp}$ contains a run of $k+s$ zeros (and thus, does not have full $(k+s)$-symbol weight) if and only if the corresponding polynomial orbit contains a polynomial of degree $< 2k-(k+s)=k-s$. Combine the equality from Theorem \ref{thm:orbitupbound} and Theorem \ref{thm:sigmasumestimate}. Since each non-degenerate orbit induces $2^k-1$ codewords (there are $2^k-1$ distinct rotations), we obtain the claimed result.
\end{IEEEproof}

\section{Burst-covering radius and critical exponent}\label{sec:burstcoveringradius}

In this section, we prove new bounds on the burst-covering radius of $\Melas(k)$ as well as the critical exponent of $\Melas(k)^\perp$. These improve upon known results by using tools from the previous section. We then also provide an improved algorithm for computing the exact burst-covering radius of Melas codes. We start with the burst-covering radius.

\begin{theorem}
The burst-covering radius of $\Melas(k)$ satisfies
    \begin{align*}
    b(\Melas_k) &
    \geq k+\frac{1}{2}\log_2 k + \log_2\parenv*{\frac{\sqrt{\pi}}{\cA(1/2)}} -o(1) \\   
    &\geq k+\frac{1}{2}\log_2 k +0.550958\ldots -o(1).
    \end{align*}
\end{theorem}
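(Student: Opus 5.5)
We prove the bound by counting orbits. By Theorem~\ref{thm:cyclicburstcovering}, $b(\Melas(k))\geq D+1$ as soon as some non-zero $f$ with $\deg f<2k$ has $\min_{f'\in\cO_f}\deg f'\geq D$. That is, it suffices that some orbit $\cO_f$ contains no polynomial of degree $<D$. We then choose $D$ as large as possible subject to the number of orbits meeting degree $<D$ being strictly smaller than the total number of nonzero orbits.

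By Lemma~\ref{lem:orbitsize}, there are $2^k+1$ nonzero orbits, two of which are degenerate. The degenerate ones contain $M_\alpha$ and $M_{\alpha^{-1}}$, both of degree $k$. Hence for $D\leq k$ they contain no polynomial of degree $<D$; for $D>k$ we simply count them as two extra orbits. We will in fact take $D>k$, so we need an upper bound on the number of non-degenerate orbits hitting degree $<D$ in that range. The inequality part of Theorem~\ref{thm:orbitupbound} holds for every $D$, giving $\theta(D)\leq\sum_{d<D}\sigma(d)$. Theorem~\ref{thm:sigmasumestimate} then gives
\[
\theta(D)\leq \cA(1/2)\frac{2^D}{\sqrt{\pi D}}(1+\cO(1/D)).
\]
So some nonzero orbit avoids all degrees $<D$ whenever
\[
\cA(1/2)\frac{2^D}{\sqrt{\pi D}}(1+\cO(1/D))+2<2^k+1 .
\]

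Next we solve for $D$. Write $D=k+t$ with $t=\frac12\log_2 k+\log_2(\sqrt{\pi}/\cA(1/2))-\epsilon_k$, where $\epsilon_k\to 0$ slowly, for instance $\epsilon_k=1/\log k$. Then
\[
\frac{\cA(1/2)2^D}{\sqrt{\pi D}}=2^k\sqrt{\tfrac{k}{k+t}}\,2^{-\epsilon_k}=2^k(1-\Omega(\epsilon_k)),
\]
using $\sqrt{k/(k+t)}=1-\cO(\log k/k)$. Since $\epsilon_k$ dominates both $\cO(\log k/k)$ and $\cO(1/D)$, the left side is $2^k(1-\Omega(\epsilon_k))+2<2^k+1$ for large $k$. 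Taking $D=\floor{k+t}$ gives $b\geq D+1\geq k+t$. The $-\epsilon_k$ and the floor loss are absorbed: the floor loses less than $1$ and the $+1$ from $D+1$ compensates it, which yields the $-o(1)$ term.

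Finally we evaluate the constant numerically. Computing $\cA(1/2)=\sqrt{\tfrac12\cH(1/2)}$ gives $\log_2(\sqrt\pi/\cA(1/2))=0.550958\ldots$. We compute $\cH(1/2)$ via its product over self-reciprocal irreducibles, using the known count $\tau(d)$: it is $0$ for odd $d>1$, and for $d=2m$ it is $\frac{1}{2m}\sum_{e\mid m,\,e\text{ odd}}\mu(e)2^{m/e}$. The product converges fast because $\tau(2m)\approx 2^m/2m$ and the factor is $1+\cO(2^{-m})$ per term.

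The main subtlety is that Theorem~\ref{thm:sigmasumestimate} is an asymptotic with an implicit constant. We must therefore check that $\theta(D)\leq\sum_{d<D}\sigma(d)$ is valid for $D>k$, which it is, since only the equality requires $D\leq k$. The other point is handling the $o(1)$ slack carefully so that strict inequality against $2^k-1$ non-degenerate orbits holds.
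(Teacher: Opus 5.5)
Your proposal is correct and follows the paper's proof essentially step for step: bound the number of non-degenerate orbits meeting degrees $<D$ by $\sum_{d<D}\sigma(d)$ (the inequality half of Theorem~\ref{thm:orbitupbound}, valid for $D>k$), insert Theorem~\ref{thm:sigmasumestimate}, take the largest admissible $D$ just above $k+\frac{1}{2}\log_2 k$, and recover the rounding loss from $b\geq D+1$. Your explicit bookkeeping of the two degenerate orbits gives the same condition $\sum_{d<D}\sigma(d)\leq 2^k-2$ that the paper uses, and you obtain the numerical constant the same way, from $\cH(1/2)$ via $\tau(d)$.
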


\begin{IEEEproof}
By Lemma \ref{lem:orbitsize}, the number of non-degenerate orbits is $2^k-1$. By Theorem \ref{thm:orbitupbound}, if $D$ is such that $\sum_{d< D}\sigma(d)\leq 2^k-2$, then there exists an orbit whose elements all have degree $\geq D$, and thus $b(\Melas_k)\geq D+1$ by Theorem \ref{thm:cyclicburstcovering}.

    By Theorem \ref{thm:sigmasumestimate}, it suffices to ask 
    \[
     D\leq k+\frac{1}{2}\log_2 D+\log_2\parenv*{\frac{\sqrt{\pi}}{\cA(1/2)}}-o(1),
    \]
    to guarantee $\sum_{d<D}\sigma(d)\leq 2^k-2$. For $D>k$ we can choose $D$ as big as
    \[
    D\leq k+\frac{1}{2}\log_2 k+\log_2\parenv*{\frac{\sqrt{\pi}}{\cA(1/2)}}-1-o(1) \leq k+\floor*{\frac{1}{2}\log_2 k+\log_2\parenv*{\frac{\sqrt{\pi}}{\cA(1/2)}}-o(1)}.
    \]
The value of $\log_2(\cA(1/2))$ can be estimated by noticing
    \[
    \log_2(\cA(1/2))=\frac{\log(\cA(1/2))}{\log 2}=\frac{0.5 \log 0.5+0.5\log \cH(0.5)}{\log 2}\approx 0.274789\dots,
    \]
    where $\log$ is the natural logarithm and  $\log \cH(0.5)= \sum_{d\geq 1}\tau(d)\log\parenv{\frac{1}{1-0.5^d}}$ was estimated by using the formula
    \[\tau(d)=\frac{1}{d}\sum_{j|\frac{d}{2}, j\text{ odd}}\mu(j)2^{\frac{d}{2j}},\]
    for $d$ even (see \cite{Car67}), $\tau(d)=0$ for $d$ odd greater than $1$, $\tau(1)=1$. 
\end{IEEEproof}

The previous bound~\cite[Theorem 38]{SacSch26} gave only $b(\Melas(k))\geq k+2$. Thanks to Theorem \ref{thm:critexpbound}, we also obtain the following bound on the critical exponent of the dual Melas codes.

\begin{corollary}
    The critical exponent of $\Melas(k)^{\perp}$ is upper bounded by
    \[
    c(\Melas(k)^\perp)\leq k-\frac{1}{2}\log_2k+0.449041\ldots+o(1).
    \]
\end{corollary}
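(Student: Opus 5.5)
This is a direct substitution: combine Theorem~\ref{thm:critexpbound} with the lower bound on $b(\Melas(k))$ just proved.

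First, check the hypotheses of Theorem~\ref{thm:critexpbound} for $\cC=\Melas(k)$.
\begin{itemize}
\item $\Melas(k)$ is a binary cyclic code.
\item For $k\geq 3$ its generator polynomial $M_\alpha M_{\alpha^{-1}}$ has degree $2k$. So $\Melas(k)$ is an $[n,n-r]$ code with $n=2^k-1$ and $r=2k$.
\item $\Melas(k)\neq\F_2^n$, since $r=2k>0$.
\end{itemize}
Theorem~\ref{thm:critexpbound} then gives
\[
c(\Melas(k)^\perp)\leq 2k-b(\Melas(k))+1 .
\]

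Next, substitute the lower bound from the preceding theorem,
\[
b(\Melas(k))\geq k+\tfrac{1}{2}\log_2 k+\log_2\parenv*{\sqrt{\pi}/\cA(1/2)}-o(1).
\]
This yields
\[
c(\Melas(k)^\perp)\leq k-\tfrac{1}{2}\log_2 k+1-\log_2\parenv*{\sqrt{\pi}/\cA(1/2)}+o(1).
\]

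Finally, evaluate the constant. Using $\log_2(\cA(1/2))\approx 0.274789$ and $\tfrac12\log_2\pi\approx 0.825748$, we get
\[
1-\tfrac{1}{2}\log_2\pi+\log_2\cA(1/2)\approx 0.449041\ldots,
\]
which is exactly $1$ minus the constant $0.550958\ldots$ in the previous theorem. This is the stated bound.

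There is no real obstacle. The only care needed is the sign bookkeeping: the $-o(1)$ term in the lower bound on $b$ becomes $+o(1)$ in the upper bound on $c$. It is also worth keeping the exact constant $\log_2(\sqrt{\pi}/\cA(1/2))$ rather than the rounded decimal, so that the numerical value $0.449041\ldots$ is $1-0.550958\ldots$ with consistent rounding.
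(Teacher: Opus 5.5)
Your proposal is correct and is essentially the paper's argument. The paper likewise substitutes the lower bound $b(\Melas(k))\geq k+\frac{1}{2}\log_2 k+0.550958\ldots-o(1)$ into Theorem~\ref{thm:critexpbound} with $r=2k$. Your constant check $1-0.550958\ldots=0.449041\ldots$ and the sign flip of the $o(1)$ term are both handled correctly.
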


As far as we are aware, this bound is strictly better than what could be obtained by known results, such as Bose-Burton's theorem \cite{BoBu66}.

\subsection{An improved algorithm to compute the burst-covering radius}
In \cite{SacSch26} a general algorithm is given for computing the burst-covering radius of binary cyclic codes. In this section we propose an improved algorithm for the case of Melas codes. This algorithm also applies to primitive double-error-correcting BCH codes, and can be further generalized to any binary cyclic code with simple roots, although we do not give this generalization as it lies outside the scope of this paper. 

Recall the parity-check matrix $H_k$ of~\eqref{eq:pcmelas} for $\Melas(k)$. Consider the syndromes of $\Melas(k)$, $(x_1,x_2)\in \F_{2^k}^2$. Define an equivalence relation among syndromes as follows: $(x_1,x_2)\approx(y_1,y_2)$ if and only if $y_1=\alpha^j x_1$, $y_2=\alpha^{-j} x_2$ for some integer $j$. Besides $\set{(0,0)}$, there are two degenerate equivalence classes $\set{(0,x_2): x_2\in \F_{2^k}}$, $\set{(x_1,0): x_1\in \F_{2^k}}$. The rest of the equivalence classes have $2^k-1$ elements each. These can be identified by a single element in $\F_{2^k}$: the product $x_1\cdot x_2$ is constant throughout each equivalence class, and two distinct non-degenerate equivalence classes have distinct values of $x_1\cdot x_2$. 

Consider a linear combination within a window of width $t$ starting at the first column of $H_k$. This linear combination can be represented with a polynomial $f\in\F_2[x]$, $\deg (f)<t$, and the syndrome is $(f(\alpha),f(\alpha^{-1}))$. As we slide the window, we generate all the syndromes $(\alpha^j f(\alpha),\alpha^{-j}f(\alpha^{-1}))$, that is, the equivalence class of the original syndrome. 

Algorithm \ref{alg:covering} computes the burst-covering radius of $\Melas(k)$ using the following strategy. The polynomials in $\F_2[x]$ are listed using a binary reflected Gray code. In this way, all polynomials of degree $j$ are listed before all polynomials of degree $j+1$, for every $j$. For each polynomial $f$ we compute the corresponding syndrome $(f(\alpha),f(\alpha^{-1}))$ and record that the corresponding syndrome equivalence class has been covered. To this end, we employ two boolean variables corresponding to the degenerate orbits $\texttt{degenerate\_1, degenerate\_2}$, and a boolean array $\texttt{non\_degenerate\_orbs}$ indexed by the elements of $\F_{2^k}$. In terms of implementation, the latter can be thought of as a boolean array of dimensions $2\times 2\times\dots \times 2$, indexed by the coefficients of the polynomial representation of the field elements. The algorithm finishes when all orbits have been covered, and outputs the current size of the window (which can be computed from the current index in the Gray code ordering of the polynomials). 

The algorithm does not actually generate the full Gray code listing at the beginning, but instead employs the function $\texttt{GrayTransition(i)}$, which given an index in the Gray code, computes the bit position which needs to be flipped in that step. In that way, the syndromes can be updated easily by adding $(\alpha^j,\alpha^{-j})$, where $j$ is the bitflip position. The $i$-th vector in the reflected binary Gray code can be computed with bitwise operations as $i\oplus \floor{i/2}$ (the bitwise XOR of the binary representations), see \cite[Eq. 7.2.1.1–(9)]{Knu11}, and $\texttt{GrayTransition(i)}$ can be computed from this. We can assume it requires at most $\cO(k)$ time.

\begin{algorithm}[t]
\caption{Computation of the burst-covering radius of $\Melas(k)$}\label{alg:covering}
\begin{algorithmic}[1]
\State \textbf{Input:} $k\geq 3, \alpha\in\F_{2^k}$ \text{ primitive}
\State \textbf{Output:} $b$ \text{ burst-covering radius of} $\Melas(k)$ \text{with generator polynomial} $M_\alpha M_{\alpha^{-1}}$
\Function{Burst\_radius\_Melas}{$k, \alpha$}
    \State $\texttt{non\_degenerate\_orbs}\gets \underbrace{[\texttt{False},\dots, \texttt{False}]}_{2^k-1}$ 
    \State $\texttt{degenerate\_1}\gets \texttt{False}$
    \State $\texttt{degenerate\_2}\gets \texttt{False}$
    \State $\texttt{covered\_orbits}\gets 0$
    \State $\texttt{synd}\gets (1,1)$
    \State $\texttt{i}\gets 1$
    \While{\texttt{True}}
        \If{$\texttt{synd[0]=0 and not degenerate\_1}$}
           \State $\texttt{degenerate}\_1\gets \texttt{True}$
           \State $\texttt{covered\_orbits++}$
        \ElsIf{$\texttt{synd[1]=0 and not degenerate\_2}$}
           \State $\texttt{degenerate}\_2\gets \texttt{True}$
           \State $\texttt{covered\_orbits++}$
        \ElsIf{$\texttt{synd[0]}\neq 0 \texttt{ and synd[1]}\neq 0$}
            \State $\texttt{orbit}\gets \texttt{synd[0]}\cdot \texttt{synd[1]}$
            \If{\texttt{not non\_degenerate\_orbs[orbit]}}
                \State $\texttt{non\_degenerate\_orbs[orbit]}\gets \texttt{True}$
                \State $\texttt{covered\_orbits++}$
            \EndIf    
        \EndIf
        \If{$\texttt{covered\_orbits}=2^k+1$}
            \State \Return $\floor{\log_2 i}+1$
        \EndIf
        \State $j\gets \texttt{GrayTransition(i)}$
        \State $\texttt{synd}\gets (\texttt{synd}[0]+\alpha^j, \texttt{synd}[1]+\alpha^{-j})$
        
        \State $\texttt{i}++$
    \EndWhile
\EndFunction
\end{algorithmic}
\end{algorithm}

Now we analyze the time complexity of Algorithm \ref{alg:covering}. The initialization of variables requires $\cO(2^k)=\cO(n)$ time. The number of iterations is at most $2^{b(\Melas(k))}\leq 2^{\frac{3}{2}k+\frac{3}{2}}=\cO(n^{1.5})$ by \cite[Theorem 36]{SacSch26}. In each iteration, all operations require constant time except for the finite field sum, multiplication and exponentiation operations. Addition requires time $\cO(k)=\cO(\log n)$. Even with a naive implementation, the cost of the finite field multiplication is $\cO(k^2)=\cO(\log^2 n)$. The powers of $\alpha$ and $\alpha^{-1}$ can be computed the first time they are needed and then stored. Furthermore, $\alpha^j$ is not needed until after all the powers $\alpha,\alpha^2,\dots,\alpha^{j-1}$ have been computed, so it can be obtained with just one multiplication $\alpha\cdot \alpha^{j-1}$. Notice that the highest power needed $\alpha^j$ has $j\leq b(\Melas(k))\leq \frac{3}{2}k+\frac{3}{2}$. Thus, the total cost is $\cO(n+n^{1.5}\cdot \log^2 n+\log^2 n\cdot \frac{3}{2}\log n)=\cO(n^{1.5}\cdot \log^2 n)$. In comparison, Algorithm 1 in \cite{SacSch26} requires $\cO(n^2 \log n)$ time.

Algorithm \ref{alg:covering} was used to compute all possible values of the burst-covering radius of $\Melas(k)$ for $6\leq k\leq 12$, which can be found in Table \ref{tab:exhaustivemelas}. For each value of $b$, the table indicates how many non-conjugate primitive roots yield a code with that value of $b$, and one possible $M_\alpha$. 

\begin{table}
\caption{All possible burst-covering radii of $\Melas(k)$}
\label{tab:exhaustivemelas}
\begin{center}
\begin{tabular}{c|ccccc}
\hline\hline
    $k$ & Upper bound & $b$ &  $M_\alpha(x)$ & \# of roots  \\
    \hline 
    6 & 10 & 9 & $x^6+x+1$ & 2/6  \\
      &    & 10 & $x^6 + x^4 + x^3 + x + 1$ & 4/6 \\
      \hline
   7 & 12  & 11 & $x^7+x+1$ & 18/18   \\
   \hline
   8 & 13 & 12 & $x^8 + x^4 + x^3 + x^2 + 1$ & 16/16  \\
   \hline 
   9 & 15 & 13 & $x^9 + x^6 + x^4 + x^3 + 1$ & 26/48  \\
     &    & 14 & $x^9+x^4+1$ & 22/48\\
     \hline
   10 & 16 & 15 & $x^{10} + x^6 + x^5 + x^3 + x^2 + x + 1$ & 60/60  \\
   \hline
   11 & 18 & 16 & $x^{11}+x^2+1$ & 170/176 \\
      &    & 17 & $x^{11} + x^{10} + x^9 + x^8 + x^3 + x + 1$ & 6/176 \\    \hline
    12 & 19 & 17 & $x^{12} + x^7 + x^6 + x^5 + x^3 + x + 1$ & 132/144\\
       &    & 18 & $x^{12} + x^{10} + x^8 + x^4 + x^3 + x^2 + 1$ & 12/144\\
    \hline\hline
\end{tabular}    
\end{center}
\end{table}

\section{A Poisson randomness postulate}\label{sec:poisson}
\subsection{Introduction and definition}

In the study of infinite sequences over finite alphabets, several different notions of what it means for a sequence to behave randomly have been given. Probably the first and most famous one is Borel's \emph{normality} \cite{Borel1909}, which requires the frequencies of symbols to converge to the uniform distribution as we look at increasingly longer prefixes, as well as the frequencies of pairs of symbols, triplets, etc. If we only ask the frequencies of symbols to satisfy this requirement, the sequence is called \emph{simply normal}.

More recently, Zeev Rudnick proposed the definition of $\lambda$-\emph{Poisson generic sequences} based on his work on lacunary sequences \cite{RudZah99}. In particular, we will focus on the case $\lambda=1$. Let $S\in \F_2^{\N}$ be an infinite binary sequence. Denote by $N_w^{(k)}$ the number of occurrences of $w\in\F_2^k$ in the prefix of $S$ of length $2^k$. Consider the probability spaces $\Omega_k = \F_2^k$ with the uniform measure $\mu_k$, and define the random variables $X_k: \Omega_k \rightarrow \R$ as $X_k(w)=N_w^{(k)}$. Then $S$ is said to be \emph{$1$-Poisson generic} if
\[
X_k\xrightarrow[k\rightarrow \infty]{d} \cP(1),
\]
or equivalently, if
\[
\frac{\#\set{w\in \F_2^k: N_w^{(k)}=i}}{2^k}\xrightarrow[k\rightarrow \infty]{} \frac{e^{-1}}{i!}
\]
for every $i\geq 0$.

Peres and Weiss proved that almost all sequences with respect to the product measure are Poisson generic, although they did not publish their result. The proof was later presented in \cite{AlvBecMer23}, where it is also shown that there exist countably many computable Poisson generic sequences. An explicit construction of $\lambda$-Poisson generic sequences was given in \cite{BecSac23} for alphabets of size $3$ or larger, and in the binary case, the construction works for $\lambda \leq \ln(2)$.

In the context of finite (or periodic) binary sequences there has also been extensive work on the matter of randomness. For example, Golomb proposed three randomness postulates R1, R2 and R3 in his book on linear feedback shift register sequences \cite{Gol67}. R1 requires the number of ones and zeros in a period of the sequence to differ by at most $1$. Notice that this resembles the notion of simple normality that we mentioned earlier, albeit somewhat stronger. 

We propose a randomness postulate on a sequence of binary sequences, which extends the definition of $1$-Poisson generic sequences. Loosely speaking, we are looking for a sequence of binary sequences that asymptotically approaches the $1$-Poisson generic postulate.

\begin{definition}[\textbf{Poisson randomness postulate}]\label{def:poisson}
Let $\cS=(S^{(k)})_{k\in \N}$ be a sequence of infinite binary sequences. Let $N_w(S^{(k)})$ be the number of occurrences of $w\in\F_2^k$ whose first coordinate falls inside the prefix\footnote{Notice that by our convention, the windows of length $k$ that start in one of the last $k-1$ positions of the prefix are counted, even if they are not fully contained within the prefix of length $2^k-1$.} of length $2^k-1$ of $S^{(k)}$. Consider the probability spaces $\Omega_k = \F_2^k$ with the uniform measure $\mu_k$, and define the random variables $X_k: \Omega_k \rightarrow \R$ as $X_k(w)= N_w(S^{(k)})$. We say $\cS$ is \emph{$1$-Poisson generic} if 
    \[
    X_k\xrightarrow[k\rightarrow \infty]{d} \cP(1).
    \]
\end{definition}

We comment that the length of the prefix, $2^k-1$, is somewhat arbitrary, and may be replaced with prefixes of length $2^k(1+o(1))$ without changing the essence of Poisson randomness.

For an intuition, we can think of this definition in terms of allocations of balls in bins. Given $k\in \N$ and a binary sequence of length $2^k$, we may think of each window of length $k$ in the sequence as a ball, and each of the possible length-$k$ binary patterns as a bin. Placing the ball corresponding to the $i$-th window inside a bin labeled $w\in\F_2^k$ means that the sequence has the pattern $w$ at position $i$. If we ignore the dependencies between balls corresponding to overlapping windows, distributing $2^k$ balls in $2^k$ boxes uniformly at random should follow a Poisson occupancy law.  

PN sequences satisfy all three of Golomb's randomness postulates. Thus, one might be tempted to think that choosing $\cS=(S^{(k)})_{k\in\N}$ with $S^{(k)}$ being an order-$k$ PN sequence, provides a $1$-Poisson generic family. However, this idea fails completely: in a PN sequence of order $k$, every non-zero window of length $k$ appears exactly once, and thus the random variables $X_k$ converge in distribution to the constant $1$. (This is analogous to the fact that infinite de Bruijn sequences are not Poisson generic). Given the heuristic in the previous paragraph, this does not come as much of a surprise, since a PN sequence corresponds to an allocation in which all but one of the bins contain exactly one ball each.

As a remark, it is possible to give alternative definitions of the Poisson randomness postulate that apply to individual sequences rather than families of sequences, although we will not consider them in this work. For example, one may conceive a Poisson $\varepsilon$-generic sequence to be a periodic sequence, say of length $2^k-1$, such that $d_{TV}(X_k,\cP(1))<\varepsilon$.

\subsection{A candidate family of sequences}
We will now present two families of sequences which empirically seem to satisfy the Poisson convergence condition. For each $k$, let $\alpha_k\in \F_{2^k}^*$ be a primitive element. For ease of notation, we will omit the subscript $k$ in $\alpha_k$ and leave the dependency implicit. Define the two LFSR sequences $S_1^{(k)}, S_2^{(k)}$ as 
\begin{align*}
S_1^{(k)}&=(\Tr(\alpha^i))_{0\leq i\leq 2^k-2}, \\
S_2^{(k)}&=(\Tr(\alpha^{-i}))_{0\leq i\leq 2^k-2}.
\end{align*}
Notice that $S_2^{(k)}$ is the reverse of $S_1^{(k)}$. When $k\geq 3$, $\alpha, \alpha^{-1}$ are both primitive, so it follows that both sequences are PN sequences of minimal period $2^k-1$.

For each $c\in\F_{2^k}$, denote by $S_c^{(k)}$ the LFSR sequence
\[
S_c^{(k)} = (\Tr(\alpha^i+c\cdot \alpha^{-i}))_{0\leq i\leq 2^k-2},
\]
which is the bitwise sum of $S_1^{(k)}$ and a shift of $S_2^{(k)}$. The sequences $S_c^{(k)}$ are codewords in the code $\Melas(k)^\perp$.

Technically, Definition~\ref{def:poisson} calls for infinite sequences, but since we are only interested in the $(2^k-1)$-prefix, we may complete $S_c^{(k)}$ in any way to an infinite sequence, say, periodically. Computing the distribution of the random variables $X_k$ for the sequences $S_c^{(k)}$ as defined in the previous section, we observe that they indeed seem to converge to a $\cP(1)$ distribution. Table \ref{tab:empiricaldistr} shows the point mass probabilities of the random variables $X_k$ for two LFSR sequences, one in $\Melas(15)^\perp$ and one in $\BCH(2,15)^\perp$, and compares them to the $\cP(1)$ distribution by means of the total variation distance $d_{TV}$. The values are rounded to the fourth most significant digit.

\begin{conjecture}\label{conj:melas}
    For any choice of constants $c_k\in \F_{2^k}^*$, the family $\set{S_{c_k}^{(k)}}$ is Poisson generic. 
\end{conjecture}

We also conjecture that this holds for the codewords in the $\BCH(2,k)^\perp$ code:

\begin{conjecture}\label{conj:bch}
    For any choice of constants $c_k\in \F_{2^k}^*$, the family $\set{(\Tr(\alpha^i+c_k\cdot \alpha^{3i}))_{0\leq i\leq 2^k-2}}$ is Poisson generic.
\end{conjecture}

\begin{table}
\caption{Empirical distributions $\mathbb{P}(X_{15}=i)$ for $\Melas(15)^\perp$ and $\BCH(2,15)^\perp$, $i\leq 8$}
\label{tab:empiricaldistr}
\begin{center}
\setlength{\tabcolsep}{4pt}
\begin{tabular}{c|ccccccccc|c}
\hline\hline
    $i$ & 0 & 1 & 2 & 3 & 4 & 5 & 6 & 7 & 8 & $d_{TV}(X_{15}, \cP(1))$\\
    \hline 
    $\Melas(15)^{\perp}$ & 0.3681 & 0.3643 & 0.1897 &  0.06042 & 0.01379 &  2.991 $\times 10^{-3}$&  5.798 $\times 10^{-4}$& 6.104 $\times 10^{-5}$ &  6.104$\times 10^{-5}$ & 0.006132 \\
    $\BCH(2,15)^\perp$ & 0.3655 & 0.3727 & 0.1819 & 0.06107 & 0.01514 & 3.143$\times 10^{-3}$ & 4.272$\times 10^{-4}$ & 6.104 $\times 10^{-5}$ & 3.052$\times 10^{-5}$ & 0.004991\\
    \hline
    $\cP(1)$ & 0.3679 & 0.3679 & 0.1839 & 0.06131 & 0.01533 & 3.066$\times 10^{-3}$ & 5.109$\times 10^{-4}$ & 7.299$\times 10^{-5}$ & 9.124 $\times 10^{-6}$\\
    \hline\hline
\end{tabular}  

\vspace{0.2cm}

 $\Melas(15)^{\perp}$ connection polynomial: $x^{30} + x^{28} + x^{26} + x^{25} + x^{20} + x^{19} + x^{18} + x^{16} + x^{15} + x^{14} + x^{12} + x^{11} + x^{10} + x^5 + x^4 + x^2 + 1$

\vspace{0.2cm}

$\BCH(2,15)^\perp$ connection: $x^{30} + x^{25} + x^{23} + x^{22} + x^{20} + x^{19} + x^{17} + x^{15} + x^{14} + x^{13} + x^{12} + x^{11} + x^9 + x^8 + x^7 + x^5 + x^4 + x^2 + 1$

\vspace{0.2cm}
Initial condition: $[0,0,\dots,0,1]$
\end{center}
\end{table}

We now give a heuristic in an attempt to explain intuitively why these sequences are somewhat natural candidates. Choose two PN sequences of length $2^k-1$, $S_1$ and $S_2$, independently at random and compute their bitwise addition $S_1\oplus S_2$. If we disregard again the dependencies between overlapping windows, we may think of each PN sequence as a random permutation of the length-$k$ windows $00\dots01, 00\dots10,\dots, 11\dots11$. Fix a pattern $w\in\F_2^k$. For each $u\in\F_2^k$, there is exactly one $v=u\oplus w\in\F_2^k\setminus\set{0^k}$ such that $u\oplus v=w$ (this does not apply to $u=w$, but we can ignore it for the sake of the heuristic). The probability that $w$ does \emph{not} appear in $S_1\oplus S_2$ is equal to the probability that none of the pairs $(u,u\oplus w)$ are aligned. This is equivalent to computing the probability that $S_2$ is (as a sequence of $k$-length windows) a permutation of $S_1$ with no fixed point (a derangement). This is known to be
\[
\sum_{i=0}^{2^k-1}\frac{(-1)^i}{i!}\longrightarrow \frac{1}{e},
\]
precisely the probability predicted by $\cP(1)$. More generally, the probability that $w$ appears exactly $j$ times is given by
\[
\frac{\binom{2^k-1}{j}}{(2^k-1)!}(2^k-1-j)!\sum_{i=0}^{2^k-1-j}\frac{(-1)^i}{i!}=\frac{1}{j!}\sum_{i=0}^{2^k-1-j}\frac{(-1)^i}{i!},
\]
which converges to $e^{-1}/j!$ for each fixed $j$.

We point out that the heuristic is still not completely satisfactory: what does it mean for two PN sequences to be ``independent''? Is the reverse of a PN sequence ``independent'' from the original one, which is what Conjecture \ref{conj:melas} might seem to suggest? Also, when $k$ is even, the sequence $(\Tr(\alpha^{3i}))_{0\leq i\leq 2^k-2}$ is not PN, since $\alpha^3$ is not primitive.

\subsection{Study of $\Melas(k)^\perp$}
Let $T_1^{(k)},T_2^{(k)}:\F_{2^k}\rightarrow \F_2^k$ be the $\F_2$-linear transformations given by
\begin{align*}
    T_1^{(k)}(x) & =(\Tr(\alpha^j \cdot x ))_{0\leq j\leq k-1},\\
    T_2^{(k)}(x) & =(\Tr(\alpha^{-j} \cdot x ))_{0\leq j\leq k-1}.
\end{align*}
For convenience of notation, from now on we will avoid the superscript $k$ unless we want to make the dependency explicit. Notice that $T_1(\alpha^i)$ and $T_2(\alpha^{-i})$ are the $i$-th windows in $S_1$ and $S_2$ respectively. This implies that $T_1,T_2$ are both isomorphisms: since $S_1$ and $S_2$ are PN sequences, each non-zero window appears exactly once in each period. Let $L$ be the $\F_2$-linear map defined as $L=T_2^{-1}T_1$. 

Define the probability spaces $\Omega_{k}=\F_2^k$ together with the uniform measure $\mu_k$. Let $N_w(S_c^{(k)})$ be the number of occurrences of $w$ in the first period of $S_c^{(k)}$ (its length $2^k-1$ prefix). Define in each $\Omega_{k}$ the random variables $X_{k,c}:\Omega_k\rightarrow \N_0$ as,
\[
X_{k,c}(w)=N_w(S_c^{(k)}).
\]
Conjecture \ref{conj:melas} says that for every choice of constants $c_1,c_2,\dots$, with $c_k\in\F_{2^k}^*$ for every $k$, 
    \[
X_{k,c_k}\xrightarrow[k\rightarrow \infty]{d} \cP(1).
    \]
For convenience, we will sometimes omit the dependency of $c$ on $k$. To prove the conjecture, it would suffice to show that the factorial moments of $X_{k,c}$ converge to $1$ (see for example \cite[Theorem 1.22]{Bol01}). 

Define the falling factorial $(x)_t\eqdef x(x-1)\dots(x-t+1)$. Let $M_t^{(k,c)} = \E[(X_{k,c})_t]=\E[X_{k,c}(X_{k,c}-1)\dots (X_{k,c}-t+1)]$ be the $t$-th factorial  moment of $X_{k,c}$. Notice that the $i$-th window of $S_c$ can be computed as 
\[
T_1(\alpha^i)+T_2(c\cdot \alpha^{-i})= T_2(T_2^{-1}T_1(\alpha^i)+c\cdot \alpha^{-i}).
\]
Since $T_2$ is an isomorphism, studying the pattern frequencies in $S_c$ is equivalent to studying the sizes of the fibers of the map 
\begin{equation}\label{eq:linearmapequivalence}
    x\mapsto L(x)+cx^{-1}.
\end{equation}
Thus,
\begin{align}
         M_t^{(k,c)} &= \sum_{j\geq 0} (j)_t \cdot \mu(\set{w: N_w(S_c)=j}) \nonumber \\
         &= \frac{1}{2^k}\sum_{w\in\F_{2}^k} (N_w(S_c))_t \nonumber\\
         &= \frac{1}{2^k} \sum_{u\in\F_{2^k}} \#\set{(x_1,\dots,x_t)\in(\F_{2^k}^*)^t: \text{the $x_i$ are distinct, } L(x_1)+cx_1^{-1}=\dots =L(x_t)+cx_t^{-1}=u} \label{eq:momentformula}
    \end{align}

\begin{theorem}
    The following convergence results hold:
    \begin{itemize}
        \item[1)] For every $(c_1,c_2,\dots)\in\prod_{k\in\N}\F_{2^k}^*$, we have $M_1^{(k,c_k)}\xrightarrow[k\rightarrow \infty]{}1$.
        \item[2)] For every $(c_1,c_2,\dots)\in\prod_{k\in\N}\F_{2^k}^*$, we have  $\limsup\limits_{k\rightarrow\infty}M_2^{(k,c_k)}\leq 2$.
        \item[3)] For almost every choice of $(c_1,c_2,\dots)\in\prod_{k\in\N}\F_{2^k}^*$ (with respect to the product of uniform measures), we have $M_2^{(k,c_k)}\xrightarrow[k\rightarrow \infty]{}1$.
    \end{itemize}
\end{theorem}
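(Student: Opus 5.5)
Throughout, I will work from the moment formula \eqref{eq:momentformula} and the linearity of $L$.

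\textbf{Claim 1).} Take $t=1$ in \eqref{eq:momentformula}. Every $x\in\F_{2^k}^*$ lies in exactly one fiber, so $M_1^{(k,c)}=(2^k-1)/2^k\to 1$, uniformly in $c$.

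\textbf{Claim 2).} For $t=2$ I will count ordered pairs of distinct $x,y\in\F_{2^k}^*$ with $L(x)+cx^{-1}=L(y)+cy^{-1}$. Put $z=x+y\neq 0$. Since $L$ is linear, the condition becomes
\[
L(z)=c\,\frac{x+y}{xy}=\frac{cz}{xy}.
\]
Because $L$ is an isomorphism, $L(z)\neq 0$, so the condition is equivalent to $xy=p_z\eqdef cz/L(z)$, and $p_z\neq0$. Conversely, fix $z\neq 0$. The pairs with $x+y=z$ and $xy=p_z$ are the roots of $X^2+zX+p_z$. By the standard criterion, this quadratic has roots in $\F_{2^k}$ iff $\Tr(p_z/z^2)=0$. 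In that case it has two distinct roots (since $z\neq0$), both nonzero (since $p_z\neq 0$), giving exactly $2$ ordered pairs. Hence
\[
M_2^{(k,c)}=\frac{2}{2^k}\,\#\set*{z\in\F_{2^k}^*:\Tr\parenv*{\frac{c}{zL(z)}}=0}.
\]
The trivial bound $\#\{\cdot\}\leq 2^k-1$ gives $\limsup M_2\leq 2$.

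\textbf{Claim 3).} Write $\chi(x)=(-1)^{\Tr(x)}$ and $S(c)\eqdef\sum_{z\neq0}\chi(c\,w(z))$ with $w(z)\eqdef 1/(zL(z))$. Then
\[
M_2^{(k,c)}=\frac{2^k-1+S(c)}{2^k},
\]
so it suffices to show $S(c_k)/2^k\to0$ almost surely.

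I will use the second moment over $c\in\F_{2^k}^*$ together with Borel--Cantelli. By orthogonality of characters,
\[
\sum_{c\in\F_{2^k}}|S(c)|^2 = 2^k\, N_k, \qquad N_k\eqdef\#\set{(z,z'):zL(z)=z'L(z')}.
\]
Removing $c=0$ only lowers the sum. Chebyshev then gives
\[
\Pr_c\parenv*{|S(c)|>\varepsilon 2^k}\leq \frac{N_k}{\varepsilon^2 2^{2k}}\cdot(1+o(1)).
\]
So if $N_k=\cO(k\,2^k)$, these probabilities are summable in $k$. Borel--Cantelli under the product measure then yields $S(c_k)/2^k\to 0$ for almost every sequence $(c_k)$, which gives $M_2\to 1$.

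\textbf{Main obstacle: bounding $N_k$.} My plan is to identify $z\mapsto zL(z)$, up to a fixed nonzero multiplicative constant, with the map $f(\alpha)\mapsto f(\alpha)f(\alpha^{-1})$ on polynomials $f$ of degree $<k$. The idea is to write $z=f(\alpha)$ uniquely with $\deg f<k$. I will then check, via the duality between $T_1,T_2$ and the Galois-mode LFSR description in Remark~\ref{rem:LFSRpolycorrespondence}, that $L(f(\alpha))=\gamma f(\alpha^{-1})$ for a constant $\gamma\neq 0$ depending only on $\alpha$ and $k$. This amounts to verifying $\Tr(\alpha^j f(\alpha))=\Tr(\alpha^{-j}\gamma f(\alpha^{-1}))$ for $0\leq j<k$, e.g.\ by choosing $\gamma$ as the appropriate dual-basis element.

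Granting this identification, $N_k$ equals the count in Lemma~\ref{lemma:thirdmomentcounting}, Claim 2), plus the diagonal term at $0$. That count is $(k+1)2^{k-1}$, which is $\cO(k2^k)$ as needed. If the exact identification fails, the fallback is to bound the fibers of the quadratic-type map $z\mapsto zL(z)$ directly. Each fiber over $w$ corresponds to zeros of the $\F_2$-quadratic form $\Tr(\lambda(zL(z)+w))$ averaged over $\lambda$. Bounding these via the rank of the associated bilinear forms would still give $N_k=\cO(2^k\,\mathrm{poly}(k))$ for all but few $\lambda$, which suffices for summability.
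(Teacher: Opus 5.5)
Your Claims 1) and 2) follow the paper's argument exactly. The skeleton of your Claim 3) does too: your $S(c)$ is the paper's $W(c)=\sum_{u}N_{0^k}(S_u)\chi(cu^{-1})$, your $N_k$ over nonzero pairs is $\sum_u N_u^2$, and the paper also finishes with a second-moment bound, Chebyshev and Borel--Cantelli.

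\textbf{The gap.} It lies in the step you flag as the main obstacle. The identity $L(f(\alpha))=\gamma f(\alpha^{-1})$, with the unscaled parametrization $z=f(\alpha)$, is false in general. By linearity in $f$ it is equivalent to $\Tr(\alpha^m)=\Tr(\gamma\alpha^{-m})$ for all $0\le m\le 2k-2$. Writing $\gamma=\alpha^s$, this says the first $2k-1$ terms of the PN sequence $a_m=\Tr(\alpha^m)$ coincide with the reversed stretch $a_s,a_{s-1},\dots,a_{s-2k+2}$. Take $k=3$ and $\alpha^3=\alpha+1$, so $(a_0,\dots,a_6)=(1,0,0,1,0,1,1)$. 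Checking all seven shifts $s$ shows none works; for instance, $s=3$ matches for $m\le 3$ but fails at $m=4$. So no choice of dual-basis element makes your verification go through.

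\textbf{The repair.} Put a constant on the $z$ side as well. Consider the $\F_2$-linear map
\[
(\beta,\gamma)\mapsto\bigl(\Tr(\beta\alpha^m)+\Tr(\gamma\alpha^{-m})\bigr)_{0\le m\le 2k-2}.
\]
It goes from a $2k$-dimensional space to a $(2k-1)$-dimensional one, so it has a nonzero kernel element $(\beta_0,\gamma_0)$. Both coordinates are nonzero, because $\Tr(\beta\alpha^{\pm m})$ cannot vanish for $k$ consecutive values of $m$ unless $\beta=0$. Then $L(\beta_0 f(\alpha))=\gamma_0 f(\alpha^{-1})$ for every $f$ with $\deg f<k$. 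Since $f\mapsto\beta_0 f(\alpha)$ is a bijection onto $\F_{2^k}$, writing $z=\beta_0 f(\alpha)$ gives $zL(z)=\beta_0\gamma_0\,f(\alpha)f(\alpha^{-1})$. Your $N_k$ over nonzero pairs is then exactly the count $(k+1)2^{k-1}-1$ of Lemma~\ref{lemma:thirdmomentcounting}, Claim 2). In the $k=3$ example, $\beta_0=\alpha^4$ and $\gamma_0=\alpha$.

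\textbf{Comparison with the paper.} With this fix, your argument is a self-contained algebraic substitute for the paper's route. The paper identifies $\#\set{a\neq 0: aL(a)=u}$ with $N_{0^k}(S_u)$. It then uses the Galois-LFSR correspondence between runs $0^k$ and elements of degree $<k$ in an orbit (Theorem~\ref{thm:characterizationzerosdeg}, Remark~\ref{rem:LFSRpolycorrespondence}), which builds in the correct normalization without naming $\beta_0,\gamma_0$.

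\textbf{The fallback.} Your quadratic-form fallback is not a proof as stated. One has $N_k=2^{-k}\sum_\lambda\bigl|\sum_z\chi(\lambda zL(z))\bigr|^2\le\sum_\lambda 2^{r_\lambda}$, where $r_\lambda$ is the dimension of the radical. This needs an averaged bound on $2^{r_\lambda}$, not a rank bound for ``all but few'' $\lambda$. It is unnecessary once the scaling is fixed.
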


\begin{IEEEproof}
    1) We have 
    \[
    \E[X_{k,c}] = \sum_{i\geq 0} i\cdot \mu_k(\set{w: N_w( S_c^{(k)})=i}) = \frac{1}{2^k}\sum_{w\in \F_2^k} N_w(S_c^{(k)})=1-\frac{1}{2^k}\longrightarrow 1.
    \]

    2) By \eqref{eq:momentformula}, 
    \[
    M_2^{(k,c)}=\frac{1}{2^k}\#\set*{(x,y)\in (\F_{2^k}^*)^2,  x\neq y: L(x)+cx^{-1}=L(y)+cy^{-1}}. 
    \]
    Given $x\neq y$ in $\F_{2^k}^*$, let $a=x+y\neq 0$, $b=xy\neq 0$. Notice that this means that $x^2+ax+b=0$. The condition defining the set in the above equation is equivalent to
    \[
    b = \frac{c\cdot a}{L(a)}.
    \]
    Thus, for each value of $a$, $b$ is completely determined, which implies that $x$ can take zero or two values, depending on whether the quadratic equation
    \[
    x^2+ax+\frac{ca}{L(a)}=0
    \]
    has roots in $\F_{2^k}$ or not. Solving this equation is equivalent to solving
    \[
        (x/a)^2+(x/a)+\frac{c}{aL(a)}=0.
    \]
    It is known \cite[Theorem 2.25]{LidNie97} that this equation has two solutions if and only if $\Tr\parenv{\frac{c}{aL(a)}}=0$, and no solutions otherwise.
    Thus,
    \begin{equation}\label{eq:secondmoment}
         M_2^{(k,c)} = \frac{1}{2^k} \sum_{a\in\F_{2^k}^*} \parenv*{1+\chi\parenv*{\frac{c}{aL(a)}}}=1-\frac{1}{2^k}+\frac{1}{2^k}\sum_{a\in\F_{2^k}^*}\chi\parenv*{\frac{c}{aL(a)}}\leq 2-\frac{1}{2^{k-1}}.
    \end{equation}

Notice that in the last sum above, the argument of the additive character is a rational function of $a$, since $L$ is a linearized polynomial. Traditional Weil bounds are not relevant in this case due to the high degree of the polynomial. There are some known extensions of the bound for polynomials of low weight-degree, such as \cite{KauLov11}, but they are not useful in the previous computation.

3) First, let us compute the expected value of $M_2^{(k,c)}$ when randomizing over $c$:
\begin{align*}
    \E_c[M_2^{(k,c)}]=\frac{1}{2^{k}(2^k-1)}\sum_{c\in \F_{2^k}^*}\sum_{a\in\F_{2^k}^*} \parenv*{1+\chi\parenv*{\frac{c}{aL(a)}}} &=\frac{1}{2^{k}(2^k-1)} \sum_{a\in\F_{2^k}^*}\parenv*{2^k-1+\sum_{c\in\F_{2^k}^*}\chi\parenv*{\frac{c}{aL(a)}}}\\
    &=\frac{(2^k-2)(2^k-1)}{2^{k}(2^k-1)}=1-\frac{1}{2^{k-1}}.
\end{align*}
Now, notice that for every $u\in\F_{2^k}^*$
\[
\#\set*{a\in \F_{2^k}^*: aL(a)=u}=\#\set*{a\in \F_{2^k}^*: L(a)+ua^{-1}=0}=N_{0^k}(S_u),
\]
so the sum in \eqref{eq:secondmoment} can be rewritten as
\[
M_2^{(k,c)}=1-\frac{1}{2^k}+\frac{1}{2^k}\sum_{u\in\F_{2^k}^*}N_{0^k}(S_u)\chi(cu^{-1}).
\]

Let $W(c)=\sum_{u\in\F_{2^k}^*}N_{0^k}(S_u)\chi(cu^{-1})$. For convenience, let $N_u=N_{0^k}(S_u)$. Regard $W(c)$ and $M_2^{(k,c)}$ as random variables over $\F_{2^k}^*$ considered with the uniform measure. Then
\begin{align*}
\E_{c}[W(c)]&=\frac{1}{2^k-1}\sum_{c\in\F_{2^k}^*}\sum_{u\in\F_{2^k}^*}N_u\chi(cu^{-1})=-\frac{1}{2^k-1}\sum_{u\in\F_{2^k}^*}N_u=-1,\\
    \E_{c}[W(c)^2] &=\frac{1}{2^k-1}\sum_{c\in\F_{2^k}^*}\sum_{u\in\F_{2^k}^*}\sum_{v\in\F_{2^k}^*}N_uN_v\chi(c(u^{-1}+v^{-1}))\\
    &=\sum_{u\in\F_{2^k}^*}N_u^2-\frac{1}{2^k-1}\sum_{u,v\in\F_{2^k}^*, u\neq v}N_uN_v\\
    &=\sum_{u\in\F_{2^k}^*}N_u^2-\frac{1}{2^k-1}\parenv*{\parenv*{\sum_{u\in\F_{2^k}^*}N_u}^2-\sum_{u\in\F_{2^k}^*}N_u^2}\\
    &=\frac{2^k}{2^k-1}\sum_{u\in\F_{2^k}^*}N_u^2-(2^k-1),
\end{align*}
where we have used the orthogonality of characters.

By the correspondence between occurrences of $0^k$ in the LFSR sequences $S_u$, and polynomials of degree $<k$ in polynomial orbits, we see that
\[
\sum_{u\in\F_{2^k}^*}N_u^2= 
    \#\set*{(f,g)\in (\F_2[x]-\set{0})^2 : \deg(f),\deg(g)<k, g(\alpha)g(\alpha^{-1})=f(\alpha)f(\alpha^{-1})} = (k+1)2^{k-1}-1,
\]
by Lemma \ref{lemma:thirdmomentcounting}, Claim 2). Now, a standard computation shows that
\[
\V_{c}[M_2^{(k,c)}]=\frac{1}{2^{2k}}(\E[W(c)^2]-\E[W(c)]^2)=\frac{k-1}{2(2^k-1)}.
\]
By Chebyshev's inequality,
\[
\mathbb{P}_c\parenv*{\abs*{M_2^{(k,c)}-\parenv*{1-\frac{1}{2^{k-1}}}}>\frac{1}{k}}\leq \frac{k^2(k-1)}{2(2^k-1)}.
\]
The result follows from the Borel-Cantelli lemma, since $\sum_k k^3/2^k<\infty$.
\end{IEEEproof}

We will see now that if we average over $c$ (that is, we average the frequency counts over all distinct rotationally non-equivalent LFSRs with connection polynomial $M_\alpha M_{\alpha^{-1}}$), then we can prove convergence of the third moment. 

Define the probability spaces $\widetilde{\Omega}_{k}=\F_{2^k}^*\times\F_2^k$ together with the uniform measure $\widetilde{\mu}_k$. Define in each $\widetilde{\Omega_{k}}$ the random variables $X_{k}:\widetilde{\Omega}_k\rightarrow \N_0$ as:
\[
X_{k}(c,w)= N_w(S_c^{(k)}).
\]
Let $M_t^{(k)} = \E[(X_{k})_t]=\E[X_{k}(X_{k}-1)\dots (X_{k}-t+1)]$ be the $t$-th factorial  moment of $X_{k}$.

\begin{theorem}\label{thm:averagedmoments}
    The following factorial moment converges:
    \[
    M_3^{(k)}\xrightarrow[k\rightarrow \infty]{}1.
    \]

\end{theorem}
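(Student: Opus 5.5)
The plan is to average formula \eqref{eq:momentformula} over $c\in\F_{2^k}^*$ and to turn the average into a count of pairs, in the same way $M_2^{(k,c)}$ was handled. Summing over $c$ gives
\[
M_3^{(k)}=\frac{1}{2^k(2^k-1)}\,\#\set*{(c,x,y,z): c\neq 0,\ x,y,z\in\F_{2^k}^* \text{ distinct},\ L(x)+cx^{-1}=L(y)+cy^{-1}=L(z)+cz^{-1}}.
\]
First, for distinct $x,y$ the equation $L(x)+cx^{-1}=L(y)+cy^{-1}$ can be solved uniquely for $c$:
\[
c=\frac{xy\,L(x+y)}{x+y}.
\]
This $c$ is automatically non-zero, because $L$ is an isomorphism. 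So the quadruple count equals the number of triples of distinct non-zero $(x,y,z)$ satisfying
\[
\frac{xy\,L(x+y)}{x+y}=\frac{xz\,L(x+z)}{x+z}.
\]

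Next, substitute $a=x+y$ and $b=x+z$, which are non-zero and distinct. Dividing by $x$ and using linearity of $L$, the condition becomes
\[
x\parenv*{\frac{L(a)}{a}+\frac{L(b)}{b}}=L(a+b).
\]
Since $a\neq b$, we have $L(a+b)\neq 0$. Two cases follow.
\begin{itemize}
\item If $L(a)/a=L(b)/b$, there is no solution.
\item Otherwise $x$ is uniquely determined and $x\neq 0$. Moreover $x=a$ would force $aL(b)/b=L(b)$, i.e.\ $a=b$, and $x=b$ is excluded symmetrically. Hence $y=x+a$ and $z=x+b$ are non-zero and pairwise distinct from $x$ and each other.
\end{itemize}
Therefore
\[
M_3^{(k)}=\frac{(2^k-1)(2^k-2)-E_k}{2^k(2^k-1)},\qquad E_k\eqdef\#\set*{(a,b)\in(\F_{2^k}^*)^2: a\neq b,\ \tfrac{L(a)}{a}=\tfrac{L(b)}{b}}.
\]
The main term tends to $1$, so it suffices to show $E_k=o(2^{2k})$.

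The main obstacle is bounding $E_k$. The plan is to transport the condition $L(a)/a=L(b)/b$ to the polynomial side and then apply Lemma \ref{lemma:thirdmomentcounting}, Claim 1).
\begin{itemize}
\item Write $a=f(\alpha)$ with $f\in\F_2[x]$, $\deg f<k$, using the basis $1,\alpha,\dots,\alpha^{k-1}$.
\item Unwind the definitions of $T_1$ and $T_2$ through the windows of $S_1$ and $S_2$, using the Galois-mode correspondence of Remark \ref{rem:LFSRpolycorrespondence} and Theorem \ref{thm:characterizationotherpatterns}. The aim is to show that $L(f(\alpha))=\kappa\, f(\alpha^{-1})$ for a fixed non-zero constant $\kappa$, possibly after a fixed reparametrisation such as replacing $f$ by a shifted reciprocal.
\end{itemize}
This is the step I expect to require the most care, and I would check the identity first on the basis elements $\alpha^j$, $0\leq j<k$.

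Granting the identity, the condition $L(a)/a=L(b)/b$ becomes
\[
\frac{f(\alpha^{-1})}{f(\alpha)}=\frac{g(\alpha^{-1})}{g(\alpha)}
\]
for the non-zero polynomials $f,g$ of degree $<k$ representing $a$ and $b$. Lemma \ref{lemma:thirdmomentcounting}, Claim 1) then gives $E_k\leq (3k-3)2^{k-2}+1=\cO(k2^k)$. Consequently $M_3^{(k)}=1-\cO(k2^{-k})\to 1$.

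If the identity for $L$ fails to hold in exactly this form, the fallback is Lemma \ref{lemma:orbitequivdef}, Claim 3). That claim converts equality of the ratios into the polynomial identity $f(x)g(x^{-1})=f(x^{-1})g(x)$. A unique-factorization count of its solutions, in the style of Lemma \ref{lemma:orbitcount}, would again yield $E_k=\cO(k2^k)$.
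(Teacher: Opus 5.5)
Your reduction to
$M_3^{(k)}=\frac{(2^k-1)(2^k-2)-E_k}{2^k(2^k-1)}$
matches the paper exactly, including your check that $x\neq 0,a,b$. The gap is the step you flagged yourself: translating $L(a)/a=L(b)/b$ into a condition on polynomials.

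\textbf{The proposed identity is false.} The identity you aim for, $L(f(\alpha))=\kappa f(\alpha^{-1})$ for all $\deg f<k$, is equivalent to $L(\alpha^i)=\kappa\alpha^{-i}$ for $0\leq i<k$. That is, it is equivalent to $\Tr(\alpha^m)=\Tr(\kappa\alpha^{-m})$ for all $0\leq m\leq 2k-2$, and this fails in general. Take $k=3$ and $M_\alpha=x^3+x+1$, so that $(\Tr(\alpha^m))_{m\geq 0}=1001011\ldots$. One finds $L(1)=\alpha^3$ and $L(\alpha)=\alpha^2$. But $T_1(\alpha^2)=(0,1,0)\neq(0,1,1)=T_2(\alpha)$, so $L(\alpha^2)\neq\alpha^3\cdot\alpha^{-2}$.

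\textbf{The fallback does not help.} Lemma~\ref{lemma:orbitequivdef}, Claim 3) rewrites a condition that is already expressed through $f(\alpha)$ and $f(\alpha^{-1})$. It therefore presupposes exactly the translation from $L$ that is missing.

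\textbf{How to repair it.} The idea works once the input is scaled by a fixed constant. The $2k-1$ linear conditions $\Tr(\beta\alpha^m)=\Tr(\gamma\alpha^{-m})$, $0\leq m\leq 2k-2$, on the $2k$-dimensional $\F_2$-space $\F_{2^k}^2$ have a non-zero solution $(\beta_0,\gamma_0)$. Both entries are non-zero: if $\beta_0=0$ then $T_2(\gamma_0)=0$, and if $\gamma_0=0$ then $T_1(\beta_0)=0$. Since $i+j\leq 2k-2$ for $0\leq i,j<k$, you get $T_1(\beta_0\alpha^i)=T_2(\gamma_0\alpha^{-i})$. Hence $L(\beta_0 f(\alpha))=\gamma_0 f(\alpha^{-1})$ for every $\deg f<k$. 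In the example above, $\beta_0=\alpha^4$ and $\gamma_0=\alpha$. Now parametrize $a=\beta_0 f(\alpha)$ and $b=\beta_0 g(\alpha)$. Then $L(a)/a=L(b)/b$ becomes $f(\alpha^{-1})/f(\alpha)=g(\alpha^{-1})/g(\alpha)$, and your appeal to Lemma~\ref{lemma:thirdmomentcounting}, Claim 1) goes through.

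\textbf{The paper's route.} The paper avoids needing any explicit formula for $L$. It writes $E_k=\sum_{v\neq 0}(\#E_v-1)(\#E_v-2)$ with $E_v=\set{a:L(a)=va}$. It notes that $E_v$ is the trace-dual of $\langle\alpha^j+v\alpha^{-j}\rangle_{0\leq j<k}$. Meanwhile $R_v=\set{g:\deg g<k,\ g(\alpha)=v\,g(\alpha^{-1})}$ is the kernel of $g\mapsto\sum_j g_j(\alpha^j+v\alpha^{-j})$. So $\#E_v=\#R_v$, and summing over $v$ identifies $E_k$ with the count of Lemma~\ref{lemma:thirdmomentcounting}, Claim 1) minus $2^k-1$.
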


\begin{IEEEproof}[Proof of Theorem \ref{thm:averagedmoments}]

    By \eqref{eq:momentformula} adapted to $\widetilde{\Omega}_k$, 
    \[
    M_3^{(k)} = \frac{1}{2^{k}(2^k-1)}\sum_{c\in\F_{2^k}^*}\#\set*{(x,y,z)\in (\F_{2^k}^*)^3:  \text{ $x,y,z$ are distinct, } L(x)+cx^{-1}=L(y)+cy^{-1}= L(z)+cz^{-1}}.
    \]
    The two equalities $x,y,z$ must satisfy are equivalent to 
    \[
    c = xy\frac{L(x+y)}{x+y}=xz\frac{L(x+z)}{x+z}.
    \]
    Let $a=x+y\neq 0$, $b=x+z\neq 0$. Now,
    \[
    (x+a) \frac{L(a)}{a}= (x+b)\frac{L(b)}{b}, 
    \]
    and thus
    \begin{equation}\label{eq:thirdmoment}
        x \parenv*{\frac{L(a)}{a}+\frac{L(b)}{b}}= L(a)+L(b).
    \end{equation}
The right-hand side must be non-zero since $y\neq z$, and thus we conclude there is a solution for $x$ if and only if $\frac{L(a)}{a}+\frac{L(b)}{b}\neq 0$. 
    
    
    Notice that $(c,x,y,z)$ is uniquely determined by the triplet $(a,b,x)$. In order to guarantee $x,y,z\neq 0$ and that they are all distinct, $a,b,x$ should satisfy $a,b\neq 0$, $a\neq b$, $x\neq 0$, $x\neq a$ and $x\neq b$. It is easy to verify using \eqref{eq:thirdmoment} that the last three conditions are redundant. Thus, the third moment is equal to
    \begin{align*}
        M_3^{(k)} & = \frac{1}{2^{k}(2^k-1)}\#\set*{(a,b)\in (\F_{2^k}^*)^2, a\neq b :  \frac{L(a)}{a}\neq \frac{L(b)}{b}}\\
        & = \frac{2^k-2}{2^k}-\frac{1}{2^{k}(2^k-1)}\#\set*{(a,b)\in (\F_{2^k}^*)^2, a\neq b :  \frac{L(a)}{a}= \frac{L(b)}{b}}.
    \end{align*}




    Denote $A  = \#\set{(a,b)\in (\F_{2^k}^*)^2, a\neq b :  \frac{L(a)}{a}= \frac{L(b)}{b}}$.  Given $v\in\F_{2^k}^*$, define the $\F_2$-linear subspaces
    \begin{align*}
        E_v & = \set*{a \in \F_{2^k}: L(a)=v\cdot a},\\
        R_v & = \set*{g\in\F_2[x]: \deg (g)<k, g(\alpha)=v\cdot g(\alpha^{-1})}.
    \end{align*}
    We claim that $\dim E_v = \dim R_v$. On the one hand, $a\in E_v$ if and only if $\Tr(a\cdot(\alpha^j+v\alpha^{-j}))=0$ for every $0\leq j\leq k-1$, which happens if and only if $a\in \langle 1+v,  \alpha+v\alpha^{-1},\dots, \alpha^{k-1}+v\alpha^{-(k-1)}\rangle^{\perp}$. Thus, $\dim E_v= k-\dim \langle\alpha^j+v\alpha^{-j}\rangle_{0\leq j\leq k-1}$. On the other hand, $R_v$ is the kernel of the linear map $\sum_{j=0}^{k-1}g_j x^j\mapsto \sum_{j=0}^{k-1}g_j\cdot (\alpha^j+v\alpha^{-j})$, which is a surjection onto $\langle\alpha^j+v\alpha^{-j}\rangle_{0\leq j\leq k-1}$, and thus has dimension $\dim R_v= k-\dim \langle\alpha^j+v\alpha^{-j}\rangle_{0\leq j\leq k-1}=\dim E_v$.

    We deduce that 
    \begin{align*}
         A = \sum_{v\neq 0} (\#E_v-1)(\#E_v-2) & = \sum_{v\neq 0} (\#R_v-1)(\#R_v-2)  = \sum_{v\neq 0}(\#R_v-1)^2-\sum_{v\neq 0}(\#R_v-1) \\
         & = \#\set*{(f,g)\in (\F_2[x]-\set{0})^2, \deg f,g<k : \frac{g(\alpha)}{g(\alpha^{-1})}=\frac{f(\alpha)}{f(\alpha^{-1})}}-(2^k-1).
    \end{align*}
By Lemma \ref{lemma:thirdmomentcounting}, if $k$ is odd
\[
M_3^{(k)} = 1-\frac{1}{2^{k-1}}-\frac{(3k-3)2^{k-2}+1}{2^k(2^k-1)}+\frac{1}{2^k}= 1-\cO\parenv*{\frac{k}{2^k}}\rightarrow 1,
\]
and for $k$ even we obtain an analogous result, which concludes the proof.  

\end{IEEEproof}

\subsection{Connection to Section \ref{sec:patternfreqandweight}}\label{sec:connection}

 Let $N_{\max}(S_c^{(k)})$ denote the maximum number of occurrences of any pattern $w\in\F_2^k$ in $S_c^{(k)}$. If it were true that all moments of the variables $X_{k,c}$ converged to those of $\cP(1)$, this would imply that they all remain bounded as $k\rightarrow\infty$. In particular, we should have that for every $t\geq 1$, the following remains bounded:
\[
\frac{(N_{\max}(S_c^{(k)}))^t}{2^k}.
\]
In particular, the function $2^{\frac{k}{\log k}}$ satisfies this requirement, which is coherent with Theorem \ref{thm:occurofzero} and Conjecture \ref{conj:klogk}.

So far, our analysis has fallen into one of the following two categories: we either fixed an LFSR sequence and randomized over the patterns, or we randomized over both sequences and patterns. In both cases, we observe a Poisson behavior. But what if we fix a pattern and randomize over the sequences instead? More precisely, given a fixed $w_k\in \F_2^k$, define the random variables $X_{w_k}$ over $\F_{2^k}^*$ with the uniform measure as $X_{w_k}(c)=N_{w_k}(S_c)$. Is it true that for every choice of $w_1,w_2,\dots$, we have $X_{w_k}\xrightarrow[k\rightarrow \infty]{d} \cP(1)$? The answer is no! For $w=0^k$, Theorem \ref{thm:notfullsymbolweight} with $s=0$ implies that $X_{0^k}\xrightarrow[k\rightarrow \infty]{d} 0$. 

Nonetheless, we conjecture that for almost every choice of $(w_1,w_2,\dots,w_k,\dots)\in \prod_{k\geq 1}\F_2^k$ with respect to the product of uniform measures, the distribution does indeed approach the $\cP(1)$ distribution. Table \ref{tab:fixedw} shows the empirical point mass distributions rounded to $4$ significant digits, for $k=8$ and six possible values of $w$. We can observe that indeed for $w=0^8$ (and $w=0^71$ as well), the distributions are far away from $\cP(1)$, while for the remaining four patterns the distance is much smaller.

\begin{table}
\caption{Empirical distribution $\mathbb{P}(X_w=i)$ for $\Melas(8)^\perp$, $w$ fixed, $i\leq 3$}
\label{tab:fixedw}
\begin{center}
\begin{tabular}{c|cccc|c}
\hline\hline
    $i$ & 0 & 1 & 2 & 3 & $d_{TV}(X_w, \cP(1))$\\
    \hline 
    $w=00000000$ & 0.702 & 0.02745 & 0.1255 & 0.01961 & 0.4406\\
   
    $w=00000001$ & 0.5176 & 0.102 & 0.2863 & 0.06275 & 0.2665 \\
    
    $w=00000010$ & 0.3686 & 0.3569 & 0.2 & 0.06275 & 0.0191\\

    $w=00000011$ & 0.3647 & 0.3686 & 0.1843 & 0.07059 & 0.0104\\

    $w=00000100$ & 0.3608 & 0.3569 & 0.2196 & 0.05098 & 0.03567\\

    $w=00000101$ & 0.3412 & 0.3882 & 0.2157 & 0.04706 & 0.05296\\

    \hline
    $\cP(1)$ & 0.3679 & 0.3679 & 0.1839 & 0.06131 & \\
    \hline\hline
\end{tabular}
\vspace{0.2cm}

Connection polynomial: $x^{16} + x^{14} + x^{13} + x^{11} + x^8 + x^5 + x^3 + x^2 + 1$
\end{center}
\end{table}



\subsection{Another relaxation}
Given our previous observation that studying the pattern frequencies in $S_c$ is equivalent to studying fibers of the map \eqref{eq:linearmapequivalence}, a natural relaxation of the problem would be to randomize over all possible $\F_2$-linear transformations $L\in\cL(\F_{2^k})$. We could also randomize over all invertible linear transformations, although the arguments become a bit more involved. 

\begin{theorem}\label{thm:poissonlinearrelaxation}
    For each $k\geq 1$ consider the space $\Omega'_k= \F_{2^k} \times \cL(\F_{2^k})$ together with the uniform measure $\mu'_k$. Define the random variables $X_k': \Omega'_k \rightarrow \N_0$ as:
    \[
    X_k'(u,L) = \#{\set*{x\in\F_{2^k}^*: x^{-1}+L(x)=u}}.
    \]
    Then,
    \[
    X_k' \xrightarrow[k\rightarrow \infty]{d} \cP(1).
    \]
\end{theorem}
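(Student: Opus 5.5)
We will use the method of factorial moments, as in the discussion preceding \eqref{eq:momentformula}: it suffices to show that for every fixed $t\geq 1$, $\E[(X_k')_t]\to 1$ (see \cite[Theorem 1.22]{Bol01}). Write $q=2^k$. Summing over $u$ and averaging over $(u,L)$, we obtain
\[
\E[(X_k')_t]=\frac{1}{q}\sum_{(x_1,\dots,x_t)}\mathbb{P}_L\parenv*{x_1^{-1}+L(x_1)=\dots=x_t^{-1}+L(x_t)},
\]
where the sum is over ordered $t$-tuples of distinct elements of $\F_{q}^*$ (the $u$ is then determined). The event is the system of $t-1$ linear conditions $L(x_1+x_i)=x_1^{-1}+x_i^{-1}$, $2\leq i\leq t$, on a uniformly random $\F_2$-linear map $L$.

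The key observation is the following. For a uniformly random linear map $L$ and fixed vectors $v_2,\dots,v_t$ with prescribed targets $w_2,\dots,w_t$, the probability that $L(v_i)=w_i$ for all $i$ is $q^{-r}$ if the system is consistent, where $r=\dim\langle v_2,\dots,v_t\rangle$, and $0$ otherwise. Consistency means that every linear relation $\sum_{i\in I}v_i=0$ forces $\sum_{i\in I}w_i=0$. With $v_i=x_1+x_i$, we call a tuple \emph{generic} if $x_1,\dots,x_t$ are affinely independent over $\F_2$, which is the same as $v_2,\dots,v_t$ being linearly independent. For generic tuples the probability is exactly $q^{-(t-1)}$. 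There are $q^t(1-o(1))$ generic tuples, since affinely dependent tuples number $O_t(q^{t-1})$. Their total contribution is therefore $\frac1q\cdot q^t(1-o(1))\cdot q^{-(t-1)}\to 1$.

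The main obstacle is the non-generic tuples. These are tuples with $r<t-1$, which occurs exactly when some even-size subset $J\subseteq\{1,\dots,t\}$ satisfies $\sum_{j\in J}x_j=0$. Such a tuple contributes $q^{-r}$ only if it is consistent, which requires additionally $\sum_{j\in J}x_j^{-1}=0$ for every such minimal relation. The plan is to stratify the non-generic tuples by the relation space, that is, by the kernel of the map $\F_2^{t}\supseteq\{\text{even vectors}\}\to\F_q$ sending $e_J$ to $\sum_J x_j$. We fix a relation space of dimension $m=t-1-r\geq 1$. The tuples with that relation space lie in a set of size $O_t(q^{t-m})$, because $x$ is determined by $t-m$ free coordinates. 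Imposing consistency adds $m$ independent algebraic equations $\sum_{j\in J}x_j^{-1}=0$. We will argue that these cut the count by a further factor $q^{-1}$ at least once. Concretely, we parametrize the tuple by free variables $y_1,\dots,y_{t-m}$ with each $x_j$ an $\F_2$-linear combination of them. A relation with $|J|\geq 4$, say, then yields the equation $\sum_{j\in J}\ell_j(y)^{-1}=0$. This is a nonzero rational function, because the distinct linear forms $\ell_j$ give distinct poles. So it has $O_t(q^{t-m-1})$ zeros by a Schwartz--Zippel-type bound: we clear denominators to get a nonzero polynomial of degree $O(t)$. (A relation with $|J|=2$ is impossible since the $x_j$ are distinct.) Hence each stratum contributes at most $\frac1q\cdot O_t(q^{t-m-1})\cdot q^{-(t-1-m)}=O_t(q^{-1})$. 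There are finitely many strata for fixed $t$, so the non-generic part tends to $0$.

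Combining the two parts gives $\E[(X_k')_t]=1+O_t(1/q)\to1$ for every fixed $t$, and the convergence $X_k'\xrightarrow{d}\cP(1)$ follows. The step we expect to be delicate is verifying that, in each degenerate stratum, the consistency equation is a genuinely nonzero polynomial in the free parameters. We will handle this by the pole argument above: the $\ell_j$ in a minimal relation are pairwise distinct nonzero linear forms, so the poles cannot cancel.
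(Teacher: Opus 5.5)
Your argument is correct. It shares the paper's skeleton: factorial moments, generic tuples giving $1$, and degenerate tuples controlled by the consistency condition $\sum_{j\in J}x_j^{-1}=0$ together with Schwartz--Zippel. The decomposition, however, is different.

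The paper keeps $u$ in the system $L(x_i)=u+x_i^{-1}$ and splits by \emph{linear} dependence of $x_1,\dots,x_t$. It treats three cases: independent tuples; tuples with an odd relation, which pin down $u$ and are where its factor $2^{-k}$ comes from; and tuples all of whose relations are even. In the last case it picks one minimal relation, applies Schwartz--Zippel to an explicit polynomial in $s-1$ free variables (whose non-vanishing can be read off its monomials), and then counts the remaining elements crudely.

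By summing over $u$ first, you reduce to the $t-1$ conditions $L(x_1+x_i)=x_1^{-1}+x_i^{-1}$, so genericity becomes \emph{affine} independence. Odd relations then play no role, and the paper's Case 2.a is simply absorbed into your generic term. Stratifying by the exact relation space $K$ makes the bookkeeping uniform. The probability is exactly $q^{-(t-1-m)}$, the stratum sits in a $(t-m)$-dimensional $\F_q$-subspace defined over $\F_2$, and a single nonzero relation saves one factor of $q$. So every stratum contributes $O_t(q^{-1})$, with no separate count of ``additional independent elements''.

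The cost is that your non-vanishing step rests on two facts you should state explicitly. First, the subspace cut out by $K$ has a basis in $\F_2^t$, so the $\ell_j$ have $\F_2$-coefficients and distinct ones are non-proportional over $\F_q$. Second, the $\ell_j$ are pairwise distinct because a nonempty stratum contains a tuple with distinct entries (equivalently, $K$ has no weight-$2$ vector). Given these, reducing $\sum_{j\in J}\prod_{i\neq j}\ell_i$ modulo any $\ell_{j_0}$ shows it is a nonzero polynomial of degree $|J|-1$, as you claim.
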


\begin{IEEEproof}
    The $t$-th factorial moment is equal to
    \[
         \frac{1}{2^k\cdot 2^{k^2}} \sum_{u,L} \#\set{0\leq j_1,\dots, j_t\leq 2^k-2 \text{ distinct}: \alpha^{-j_1}+L(\alpha^{j_1})=\dots =\alpha^{-j_t}+L(\alpha^{j_t})=u}.
    \]
   Let us separate the counting into different cases.
   
   \textbf{Case 1}: $\set{\alpha^{j_1},\dots,\alpha^{j_t}}$ is linearly independent. There are $\prod_{i=0}^{t-1}(2^k-2^i)$ choices of $j_1,\dots,j_t$ that correspond to this case. For each such choice and for each value of $u\in\F_{2^k}$, the equations $L(\alpha^{j_i})=u+\alpha^{-j_i}$, $1\leq i\leq t$, can be extended in $2^{k(k-t)}$ ways to define a linear transformation $L\in \cL(\F_{2^k})$. Thus, the contribution of these elements to the sum is:
   \[
   \frac{1}{2^k\cdot 2^{k^2}}\cdot \prod_{i=0}^{t-1}(2^k-2^i)\cdot 2^k \cdot 2^{k(k-t)} =\frac{1}{2^{kt}} \prod_{i=0}^{t-1}(2^k-2^i) \xrightarrow[k\rightarrow\infty]{} 1,
   \]
   and the convergence holds for each fixed $t$.

   \textbf{Case 2}: $\set{\alpha^{j_1},\dots,\alpha^{j_t}}$ is linearly dependent. Let $d = \text{dim}(\langle\alpha^{j_1},\dots,\alpha^{j_t}\rangle) <t$ and let $s$ be the size of some linearly dependent subset $\set{\alpha^{j_{\sigma_1}},\dots,\alpha^{j_{\sigma_s}}}$ satisfying $\sum_{i=1}^s  \alpha^{j_{\sigma_i}}=0$. Then $L$ has to satisfy $\sum_{i=1}^s L(\alpha^{j_{\sigma_i}})=\sum_{i=1}^s \alpha^{-j_{\sigma_i}}+s\cdot u=0$. We separate into two sub-cases:

   \textbf{Case 2.a.}: $s$ is odd for some linearly dependent subset. Then, $u=\sum_{i=1}^s \alpha^{-j_{\sigma_i}}$ and there is at most one possible value of $u\in\F_{2^k}$ given a choice for $j_1,\dots,j_t$, and $L$ can be extended in $2^{k(k-d)}$ ways. By the dimension assumption, the number of choices for $j_1,\dots,j_t$ is at most $t!\prod_{i=0}^{d-1}(2^k-2^i)\prod_{i=0}^{t-d-1} (2^d-d-i)\leq t! 2^{kd} 2^{d(t-d)}$. Hence, this case contributes at most:
   \[
   t!\sum_{d=1}^{t-1}\frac{1}{2^k\cdot 2^{k^2}} 2^{kd+d(t-d)}2^{k(k-d)}=\frac{t!}{2^k}\sum_{d=1}^{t-1}2^{d(t-d)}\xrightarrow[k\rightarrow\infty]{}0,
   \]
   and again the convergence holds for each fixed $t$.

   \textbf{Case 2.b.}: For every possible choice of $j_1,\dots,j_s$ such that $\sum_{i=1}^s  \alpha^{j_{\sigma_i}}=0$, $s$ is even. In this case, $\sum_{i=1}^s \alpha^{-j_{\sigma_i}}=0$ and there is possibly no restriction on $u$. Choose $\set{j_{\sigma_1},\dots,j_{\sigma_s}}$ in particular as a minimal linearly dependent set. Note that $d\geq s-1$.

   Let $\alpha^{j_{\sigma_i}}=x_i$ for $1\leq i\leq s$. We deduce that $x_1^{-1}+\dots +x_{s-1}^{-1}+(x_1+\dots+x_{s-1})^{-1}=0$, which implies that
   \[
   \sum_{\substack{1\leq i,j \leq s-1\\i\neq j}}x_i^2\prod_{\substack{1\leq \ell \leq s-1\\\ell \neq i,j}}x_{\ell}=0. 
   \]
   The above polynomial has total degree $s-1$. By the Schwartz-Zippel lemma~\cite{Zip79,Sch80}, the number of solutions $(x_1,\dots,x_{s-1})\in \F_{2^k}^{s-1}$ is at most $(s-1)2^{k(s-2)}$.



   Once $\set{j_{\sigma_1},\dots,j_{\sigma_s}}$ is chosen, we need to choose $d-(s-1)=d-s+1$ additional linearly independent elements, of which there are at most $(2^k)^{d-s+1}$ options, and we need to choose the remaining $t-s-(d-s+1)=t-d-1$ elements, which have to belong to the $d$-dimensional subspace of the previously chosen elements, and thus there are at most $(2^d)^{t-d-1}$ options. As we mentioned before, $u$ could take $2^k$ values. 

   Now, there are $2^{k(k-d)}$ possible extensions for $L$. All in all, we get the following upper bound on the contribution for this case:
   \begin{align*}
&\frac{1}{2^k2^{k^2}}\sum_{\substack{(s,d)\text{ such that} \\s\leq t, t\geq d\geq s-1 \\s \text{ even}}}(s-1)2^{k(s-2)} 2^{k(d-s+1)}2^{d(t-d-1)}2^{k(k-d)}2^k\\
& \qquad \leq \frac{1}{2^{k^2}}\sum_{\substack{(s,d)\text{ such that} \\s\leq t, t\geq d\geq s-1 \\s \text{ even}}} \frac{s-1}{2^{2k}}2^{ks}2^{k(d-s+1)}2^{d(t-d-1)}2^{k(k-d)}\\
&\qquad \leq \frac{1}{2^{k}}\sum_{\substack{(s,d)\text{ such that} \\s\leq t, t\geq d\geq s-1 \\s \text{ even}}}(s-1) 2^{d(t-d-1)}
  \end{align*}
   which converges to $0$ for each fixed $t$.
\end{IEEEproof}

Notice that Theorem \ref{thm:poissonlinearrelaxation} does not shed light on our original problem. Out of the $2^{k^2}$ choices of linear transformation $L$, only $2^k$ of these actually correspond to a binary LFSR sequence. However, it does give information on the average  distribution of values of LFSR sequences over $\F_{2^k}$ with connection polynomial $M_{\alpha}M_{\alpha^{-1}}$. Since $M_{\alpha}M_{\alpha^{-1}}$ factorizes linearly as $(x+\alpha)(x+\alpha^2)\dots (x+\alpha^{2^{k-1}})(x+\alpha^{-1})\dots (x+\alpha^{-2^{k-1}})$ over $\F_{2^k}$, such a sequence is equal to
\[
(a_0\alpha^i+a_1(\alpha^i)^2+\dots +a_{k-1}(\alpha^i)^{2^{k-1}}+b_0\alpha^{-i}+b_1(\alpha^{-i})^2+\dots+ b_{k-1}(\alpha^{-i})^{2^{k-1}})_{0\leq i\leq 2^k-2}
\]
for some choice of $a_0,\dots,a_{k-1}, b_0, \dots, b_{k-1}$. If we let $L_1(x)= \sum_{i=0}^{k-1}a_i x^{2^i}$ and $L_2(x)= \sum_{i=0}^{k-1}b_i x^{2^i}$, $L_1$ and $L_2$ are linearized polynomials, and thus, linear transformations. The sequence is equal to $(L_1(\alpha^i)+L_2(\alpha^{-i}))_{0\leq i\leq 2^k-2}$, so in the case $L_2$ is invertible, the distribution of values of the sequence is the same as the distribution of values of the map $L_2^{-1}L_1(x)+x^{-1}$.

It is worth noting that if we instead wished to study the distribution of values of the map $f(x)$, where $f$ is a random polynomial over $\F_{2^k}$, there are similar results available in the literature. For example, the number of roots that lie in $\F_{2^k}$ of a polynomial of degree $<2^k$ over $\F_{2^k}$ chosen uniformly at random, converges in distribution to a $\cP(1)$ random variable \cite{Leo06}.

\section{Conclusion}

In order to shed a different light on the results of this work, we now give some concluding remarks addressing its contents in the opposite order to the one they were presented in.

In this work we introduced a new randomness postulate on families of binary sequences, motivated by the empirical observation that codewords in the dual Melas codes exhibit a very clear Poisson behavior of pattern frequencies. Proving that this is indeed the case seems to be an extremely hard problem, and we began by proving the convergence of the first three factorial moments, as well as a relaxed version of the conjecture.

One of the implications of the conjecture that all factorial moments converge to those of the $\cP(1)$ distribution, is that patterns of length $k$ cannot appear too many times in a codeword of the dual Melas code, $\Melas(k)^\perp$. More precisely, for every $t$, the maximum number of occurrences of a pattern $N_{\max}$ should satisfy that $(N_{\max})^t2^{-k}$ remains bounded. To put into perspective how far away previous results were from this, the best known bounds were of the order of $2^{k/2}$, which only satisfy this property for $t\leq 2$. 

By translating the original pattern counting problem in LFSRs into a question of polynomial orbits, we managed to prove that the $0^k$ pattern does indeed satisfy the above property for every $t$: it appears at most (approximately) $2^{k/\log_2 k}$ times in any codeword of $\Melas(k)^\perp$. This is strictly better than any previously available result. We employed a similar approach to study bounds on the number of full $k$-symbol weight codewords in $\Melas(k)^{\perp}$ and to give an improvement on known bounds for the burst-covering radius of Melas codes and the critical exponent of dual Melas codes.

We note that there is an alternative way of thinking about the counting problem in LFSRs. By Equation \eqref{eq:fourier}, the number of occurrences $N_w$ of patterns $w\in\F_2^k$ are the coefficients of the Walsh-Hadamard transform of the function $\varphi: \F_2^k\rightarrow \R$. Bounding the maximum number of occurrences means finding a bound for the infinity norm of the transform, while the Poisson moment estimates deal with the $L^1, L^2, L^3,\dots$ norms.

Many questions remain. We have presented multiple conjectures throughout this work, many of which seem unapproachable by current techniques, and which we hope will motivate fruitful future research.

\ \\
\textbf{AI use disclosure}: The idea of exploiting the connection between LFSR sequences and polynomial orbits in order to improve the known bounds on the $b$-symbol weight was the authors', as well as the idea and intuitions behind the definition and conjectures of the Poisson randomness postulate. Proofs in this paper were conceived with both human and AI input (ChatGPT 6.1 and Claude Opus 5). The paper was written by the authors in its entirety.

\bibliographystyle{IEEEtranS}
\bibliography{allbib}

\appendix

\begin{IEEEproof}[Proof of Lemma \ref{lemma:genfuncidentities}]
    a) $\cH(z)\cU(z)= \prod_{h\in H}(1+z^{\deg h}+z^{2\deg h}+\dots)\prod_{u\in U}(1+z^{\deg u}+z^{2\deg u}+\dots)$. The coefficient of $z^d$ in this product is equal to the number of choices of $((a_h)_{h\in H}, (n_u)_{u\in U})$ such that $\sum a_h \deg(h)+\sum n_u \deg(u)=d$, which is exactly $\sigma(d)$.

    b) Recall that $\set{1,x}\not\in U$. Then,
    \[
    \frac{1}{1-z}\cH(z)\cU^2(z)= \prod_{f \text{ irreducible}}\frac{1}{1-z^{\deg f}}= \sum_{d\geq 0}2^d z^d=\frac{1}{1-2z}.
    \]
    The next-to-last equality holds because by the uniqueness of polynomial factorization, the coefficient of $z^d$ is exactly the number of polynomials of degree $d$.
    
    
    c) Recall that $\tau(d)$ is the number of elements of $H$ of degree $d$. Then
    \[
    \tau(d) \text{ is }\begin{cases}
        =1 & \text{ for } d=1\\
        =0 & \text{ if } 2\nmid d, d>1 \\
        \leq 2^{d/2+1} & \text{ otherwise}
    \end{cases},
    \]
    where the last inequality is due to the elements of $H$ being self-reciprocal. Taking logarithm in the definition of $\cH(z)$,
    \[
    \log \cH(z) = \log\parenv*{\frac{1}{1-z}}+\sum_{d\geq 2}\tau(d) \log\parenv*{\frac{1}{1-z^d}}=\log\parenv*{\frac{1}{1-z}}+\sum_{d\geq 2}\sum_{n\geq 1} \tau(d) \frac{z^{dn}}{n}.
    \]
    Suppose $|z|\leq 1/\sqrt{\delta}$ with $\delta>2$. Then:
    \[
    \sum_{d\geq 2}\sum_{n\geq 1} \tau(d) \frac{|z|^{dn}}{n}\leq \sum_{2|d}\sum_{n\geq 1}\frac{2^{d/2+1}}{\delta^{dn/2}n}= 2\sum_{n\geq 1}\frac{1}{n}\sum_{2|d}\parenv*{\frac{2}{\delta^{n}}}^{d/2}= 2\sum_{n\geq 1}\frac{2}{n(\delta^n-2)}<\infty.
    \]
    and thus we have absolute convergence when $|z|<1/\sqrt{2}$. The convergence is uniform in each disc $|z|\leq 1/\sqrt{\delta}$, $\delta>2$. Furthermore, it can be verified that $z=1/\sqrt{2}$ is a singularity: by \cite{Car67}, for $d$ even we have
    \[\tau(d) =\frac{1}{d}\sum_{j|\frac{d}{2}, j\text{ odd}}\mu(j)2^{\frac{d}{2j}}\geq \frac{2^{d/2}}{d}-2^{d/6},\]
    so for $n=1$ the sum in $d$ is divergent when $z=1/\sqrt{2}$.
\end{IEEEproof}

\begin{IEEEproof}[Proof of Theorem \ref{thm:sigmaasymptotics}]
     First, we have that
    \[
    (1-2z)^{-1/2}= \sum_{n\geq 0}\binom{-1/2}{n}(-2z)^n = \sum_{n\geq 0} \frac{1}{2^n}\binom{2n}{n}z^n.
    \]
    Let $b_n=\frac{1}{2^n}\binom{2n}{n}$. By \cite{Hir15}, we have that
    \[
    b_n = \frac{2^n}{\sqrt{\pi n}}(1-\delta_n),
    \]
    with $0\leq \delta_n\leq \frac{1}{4n}$.
    
    By the proof of Lemma \ref{lemma:genfuncidentities}, Claim c), $\log \cH(z)$ is finite in $|z|<1/\sqrt{2}$, so $(1-z)\cH(z)$ has no zeros in this disc. Thus, $A(z)\eqdef e^{\frac{1}{2}(\log(1-z)+\log\cH(z))}$ is holomorphic in $|z|<1/\sqrt{2}$ and satisfies $A^2(z)=(1-z)\cH(z)$, $A(0)=1$. Since the formal power series $(1-z)\cH(z)$ has a unique formal square root which is $1$ at $z=0$, then the coefficients of the Taylor expansion at $0$ of $A(z)$ are the $a_n$ given by the binomial expansion. 
    
    By Lemma \ref{lemma:genfuncidentities}, Claim c), the smallest modulus of a singularity of $A(z)$ is $1/\sqrt{2}$. Fix some $1/2<r<1/\sqrt{2}$. By Cauchy's estimate, there exists a constant $C_r$ such that
    \[
    |a_n|\leq \frac{C_r}{r^n}.
    \]
    By Theorem \ref{thm:genfuncbigequality}, 
    \[
    \sigma(d)= \sum_{j=0}^d a_j b_{d-j}.
    \]
    We split the analysis of the sum into the two intervals $[0,d/2)$ and $[d/2,d]$.
    
    For $0\leq j< d/2$, we have
    \[
    \sum_{j<d/2}a_j b_{d-j} = \frac{2^d}{\sqrt{\pi d}}\sum_{j<d/2}\frac{a_j}{2^j}\sqrt{\frac{d}{d-j}} (1-\delta_{d-j}).
    \]
    Let $\sqrt{\frac{d}{d-j}}=1+\varepsilon_{j,d}$. Then,
    \begin{equation}\label{eq:jlessdovertwo}
    \begin{split}
        \sum_{j<d/2}a_j b_{d-j} &= \frac{2^d}{\sqrt{\pi d}}\sum_{j<d/2}\frac{a_j}{2^j}(1+\varepsilon_{j,d}) (1-\delta_{d-j}) \\
        &=\frac{2^d}{\sqrt{\pi d}}\parenv*{\sum_{j<d/2}\frac{a_j}{2^j}+\sum_{j<d/2}\frac{\varepsilon_{j,d}a_j}{2^j}-\sum_{j<d/2}\frac{a_j\delta_{d-j}}{2^j}-\sum_{j<d/2}\frac{a_j\varepsilon_{j,d}\delta_{d-j}}{2^j}}
        \end{split}
    \end{equation}
    The first sum in the parenthesis is 
    \[
    \sum_{j<d/2}\frac{a_j}{2^j}= \cA(1/2)-\sum_{j>d/2}\frac{a_j}{2^j},
    \]
    and
    \[
    \abs*{\sum_{j>d/2}\frac{a_j}{2^j}}\leq C_r\sum_{j>d/2}\frac{1}{(2r)^j}= C_r\frac{2r}{2r-1}\frac{1}{(2r)^{d/2}},
    \]
     which for $r>1/2$ is $\cO(1/d)$.  
    From $\sqrt{1+x}\leq 1+\frac{1}{2}x$, we deduce that $\varepsilon_{j,d}\leq \frac{j}{2(d-j)}\leq \frac{j}{d}$. Thus, the second term is upper bounded by
    \[
    \abs*{\sum_{j<d/2}\frac{\varepsilon_{j,d}a_j}{2^j}}\leq \frac{1}{d} \sum_{j=0}^\infty\frac{j|a_j|}{2^j}=\cO(1/d).
    \]
    The third term satisfies 
    \[
    \abs*{\sum_{j<d/2}\frac{a_j\delta_{d-j}}{2^j}}\leq \sum_{j<d/2}\frac{|a_j|}{2^j\cdot 4(d-j)}\leq \frac{1}{2d}\sum_{j=0}^\infty \frac{|a_j|}{2^j}=\cO(1/d).
    \]
Finally, the fourth term is bounded by
    \[
    \abs*{\sum_{j<d/2}\frac{a_j\varepsilon_{j,d}\delta_{d-j}}{2^j}}\leq \frac{1}{2d^2}\sum_{j=0}^\infty\frac{j|a_j|}{2^j}=\cO(1/d^2).
    \]
    
    It only remains to analyze the sum in the range $d/2\leq j\leq d$. Since $b_{d-j}\leq 2^{d-j}$,
    \[
    \abs*{\sum_{j=d/2}^d a_j b_{d-j}}\leq 2^dC_r\sum_{j=d/2}^d\frac{1}{(2r)^j}\leq 2^d C_r \frac{2r}{2r-1}\frac{1}{(2r)^{d/2}}=  \frac{2rC_r}{2r-1} \parenv*{\frac{2}{\sqrt{2r}}}^{d},
    \]
    which for $r>1/2$ is exponentially smaller than $2^d$.
\end{IEEEproof}

\begin{IEEEproof}[Proof of Theorem \ref{thm:sigmasumestimate}]
     By Theorem \ref{thm:sigmaasymptotics}, we have
    \begin{align*}
    \sum_{0\leq d<D}\sigma(d)&=1+\cA(1/2)\sum_{1\leq d< D}\frac{2^{D-d}}{\sqrt{\pi (D-d)}}(1+\cO(1/(D-d)))\\
    &= 1+ \cA(1/2)\frac{2^D}{\sqrt{\pi D}} \sum_{1\leq d<D} \frac{1}{2^d}\sqrt{\frac{D}{D-d}}(1+\cO(1/(D-d))).
    \end{align*}
    An analysis almost identical to that of \eqref{eq:jlessdovertwo} in the proof of Theorem \ref{thm:sigmaasymptotics} reveals that the last sum is $1+\cO(1/D)$ in the range $d<D/2$. For $d\geq D/2$, the sum is bounded by
    \[
    C\cdot\sqrt{D}\sum_{d=D/2}^{\infty}\frac{1}{2^{d}}, 
    \]
    where $C$ is a constant. This value is exponentially smaller than $1/D$.
\end{IEEEproof}

\begin{IEEEproof}[Proof of Lemma \ref{lemma:thirdmomentcounting}]
    1) Let us show there is a bijection 
    \begin{multline*}
     \cM=\set*{\begin{aligned}
         (f,g) \text{ with } f,g\neq 0 \text{ and } \deg (f),\deg (g)<k  \\
    \text{such that } g(\alpha)f(\alpha^{-1})=f(\alpha)g(\alpha^{-1})
     \end{aligned}}\longleftrightarrow \\
     \set*{\begin{aligned}
         (r,h,A,B) \text{ with } r\in \Z, 0\leq r\leq \floor*{\frac{k-1}{2}} \text{ and } h,A,B \text{ polynomials }\\
         \text{ such that }h\neq 0, \deg (h)\leq k-1-2r,  \gcd(A,B)=1, \max(\deg (A), \deg (B))=r
     \end{aligned}
 } =\cN
    \end{multline*}
    
Let $f,g\in \F_2[x]$ be both non-zero and $\deg(f),\deg(g)<k$. Let $h=\gcd(f,g)$, so that $f=hu$, $g=hv$ with $\gcd(u,v)=1$. By Lemma \ref{lemma:orbitequivdef}, Claim 3), the condition $g(\alpha)f(\alpha^{-1})=f(\alpha)g(\alpha^{-1})$ is equivalent to $g(x)f(x^{-1})=f(x)g(x^{-1})$, which in turn is equivalent to $u(x)v(x^{-1})=u(x^{-1})v(x)$. Let $d = \max(\deg (u), \deg (v))$. Multiplying by $x^d$, we obtain
    \[
    (x^d u(x^{-1}))\cdot v(x)= (x^d v(x^{-1}))\cdot u(x).
    \]
    By coprimality of $u$ and $v$, we deduce that there exists a polynomial $w$ such that 
    \[
    \begin{cases}
        x^d u(x^{-1}) & = u(x) w(x), \\
        x^d v(x^{-1}) & = v(x) w(x).
    \end{cases}
    \]
    Since $\gcd(u,v)=1$, at least one of $u,v$ evaluates to $1$ at $0$. Without loss of generality, suppose $u(0)=1$. Then $\deg (x^d u(x^{-1}))= d = \deg (u) +\deg (w)$, and $\deg (v) +\deg (w)=\deg (x^d v(x^{-1}))\leq d$. Since $d$ is equal to $\deg (u)$ or $\deg (v)$, we deduce in either case from the two previous equations that $\deg (w)=0$, so $w=1$. Hence,
    \[
    \begin{cases}
        x^d u(x^{-1}) & = u(x),  \\
        x^d v(x^{-1}) & = v(x),
    \end{cases}
    \]
    which means that $u_i=u_{d-i}$, $v_i=v_{d-i}$ for $0\leq i\leq d$. Notice that if $d$ were odd, then the number of non-zero coefficients in $u$ and $v$ would be even in both cases, so $u(1)=v(1)=0$, a contradiction. Thus, $d=2r$ for some $r\geq 0$. 

    Notice that $\set{x^r, x^r(x+x^{-1}), x^r (x+x^{-1})^2,\dots, x^r(x+x^{-1})^r}$ is a basis of the linear subspace $\set{z\in \F_2[x], \deg z\leq 2r: x^{2r}z(x^{-1})=z(x)}$. To see it, first note that the subspace has dimension $r+1$. All $r+1$ polynomials in the list are linearly independent, as they have degrees $r,r+1,\dots 2r$, and they all belong to the subspace. Thus, there exist $A,B\in\F_2[x]$ of degrees $\leq r$ such that  
    \[
    \begin{cases}
        u(x) & = x^r A(x+x^{-1}),  \\
        v(x) & = x^r B(x+x^{-1}),
    \end{cases}
    \]
    and at least one of $A,B$ has degree exactly $r$ (since at least one of $u,v$ has degree $2r$). As a side note, the operation $f\mapsto x^{\deg f}f(x+x^{-1})$ is known as the Q-transform \cite{Meyn90}.

    We claim that $\gcd(A,B)=1$. Indeed, suppose $\beta$ were a common root of $A$ and $B$ in some extension field. If $\gamma$ is a root of $x^2+\beta x+1=0$, then $\gamma$ would be a common root of $u$ and $v$, which is a contradiction. 

    Thus, we have shown that a valid pair $(f,g)$ defines a choice of $(h,r,A,B)$ with $\deg(h)\leq k-1-2r$, $0\leq r\leq \floor{\frac{k-1}{2}}$, $A,B$ coprime, $\max(\deg (A), \deg (B))=r$. The converse is also true: given a valid choice of $A$ and $B$ it is not hard to see that defining $u=x^r A(x+x^{-1})$ and $v=x^r B(x+x^{-1})$ yields two coprime polynomials. 

    Let $C_r=\#\set{(A,B): A,B\neq 0, \gcd(A,B)=1, \max(\deg (A),\deg (B))=r}$. Then, 
    \[
    \#\cN= \sum_{r=0}^{\floor{\frac{k-1}{2}}} C_r (2^{k-2r}-1),
    \]
    since once $r$, $A$ and $B$ are chosen, $h$ can be any polynomial of degree at most $k-2r-1$. 

    Notice that $C_0=1$. There are many results in the literature on computing the probability that two polynomials of a given degree chosen uniformly at random are coprime. In particular, \cite[Corollary 2]{BenBen07} implies that
    \[
    \sum_{i=0}^{r-1}C_i=2^{2r-1}-1.
    \]
    It is not hard to show by induction that this implies $C_r=3\cdot 2^{2r-1}$ for $r\geq 1$. Plugging this result into the formula for $\#\cN$ we obtain the claimed result.

    2) Let us see how many ways of choosing $f,g$ satisfying $f(\alpha)f(\alpha^{-1})=g(\alpha)g(\alpha^{-1})$ there are. Let $h=\gcd(f,g)$, and $f=hF$, $g=hG$ with $\gcd(F,G)=1$. Only one of $F, G$ can be divisible by $x$ (possibly neither).
    
    Suppose $x\nmid F,G$. Since $f(\alpha)f(\alpha^{-1})=g(\alpha)g(\alpha^{-1})$, we have that $F(\alpha)F(\alpha^{-1})=G(\alpha)G(\alpha^{-1})$, which by Lemma \ref{lemma:orbitequivdef}, Claim 1), is equivalent to $F(x)\overleftarrow{F}(x)=G(x)\overleftarrow{G}(x)$. Since $F$ and $G$ need to be coprime, we must have necessarily that $G=\overleftarrow{F}$ and that $\gcd(F,\overleftarrow{F})=1$. Let us denote by $r_t$ the number of polynomials $P$ of degree $t$ such that $P(0)=1$ and $\gcd(P,\overleftarrow{P})=1$. Then, for each $0\leq d\leq k-1$, the number of valid choices of $F,G$ such that $\deg (F)=d$ is $r_d$. For each choice, there are $2^{k-d}-1$ possible non-zero polynomials $h$ of degree at most $k-d-1$.

    Suppose now that $F=x^aP$, $P(0)=1$. Since $f(\alpha)f(\alpha^{-1})=g(\alpha)g(\alpha^{-1})$, we have that $P(\alpha)P(\alpha^{-1})=G(\alpha)G(\alpha^{-1})$. By the same argument as before, $G=\overleftarrow{P}$ and $\gcd(P,\overleftarrow{P})=1$. Now, the number of options for $F,G$ is given by $\sum_{i=0}^{d-1}r_i$, where the $i$-th term corresponds to choosing $a=d-i$. Again, once $F,G$ have been selected, the number of options for $h$ is $2^{k-d}-1$.
    
    The case when $x|G$ is completely analogous. In summary, the counting is equal to:
    \begin{equation}\label{eq:horriblelemma}
       \sum_{d=0}^{k-1}(2^{k-d}-1)\parenv*{2\sum_{a=0}^{d-1}r_a+r_{d}}, 
    \end{equation}

    Let $s_d=2\sum_{a=0}^{d-1}r_a+r_{d}$. Notice that $s_d-s_{d-1}=r_d+r_{d-1}$. Observe that any polynomial $A(x)$, with $A(0)=1$, is decomposed in a unique way as a product $H(x)B(x)$, where $H=\overleftarrow{H}$, $B(0)=1$ and $\gcd(B,\overleftarrow{B})=1$. Since the number of self-reciprocal polynomials of degree $t$ is $2^{\floor{\frac{t}{2}}}$, we deduce that
    \[
    \sum_{t=0}^d r_{d-t} 2^{\floor*{\frac{t}{2}}}=2^{d-1},
    \]
    for $d\geq 1$. Subtracting twice this equation instantiated in $d-2$, all terms cancel except for the first two, so $r_d+r_{d-1}=2^{d-2}$ for $d\geq 3$. Then, $s_d-s_{d-1}=r_d+r_{d-1}=2^{d-2}$, and using $r_0=1$, $r_1=r_2=0$ we can show by induction that $s_0=1$, $s_1=2$, $s_d=2^{d-1}$ for $d\geq 2$. Substituting this into \eqref{eq:horriblelemma}, we obtain the desired result.
\end{IEEEproof}

\end{document}